\newtheorem{theorem}{Theorem}
\newtheorem{lemma}[theorem]{Lemma}
\theoremstyle{definition}
\newtheorem{define}[theorem]{Definition}
\newtheorem{remark}[theorem]{Remark}
\begin{document}

\title{Timely Information Updating for Mobile Devices Without and With ML Advice}

\author{\IEEEauthorblockN{Yu-Pin Hsu and Yi-Hsuan Tseng}
	
\thanks{A preliminary version of this work appeared in the Proc. of IEEE ISIT, 2019~\cite{tseng2019online}. Yu-Pin Hsu is with the Department of Communication Engineering, National Taipei University, New Taipei City 237303, Taiwan (e-mail: yupinhsu@mail.ntpu.edu.tw). Yi-Hsuan Tseng was with the Department of Communication Engineering, National Taipei University, New Taipei City 237303, Taiwan  (email: alicetseng1006@gmail.com).} 
	
}

\maketitle

\begin{abstract}
This paper investigates an information update system in which a mobile device monitors a physical process and sends status updates to an access point (AP). A fundamental trade-off arises between the timeliness of the information maintained at the AP and the update cost incurred at the device. To address this trade-off, we propose an \emph{online algorithm} that determines when to transmit updates using only available observations. The proposed algorithm asymptotically achieves the \emph{optimal competitive ratio} against an adversary that can simultaneously manipulate multiple sources of uncertainty, including the operation duration,  information staleness,  update cost, and update opportunities. Furthermore, by incorporating machine learning (ML) advice of unknown reliability into the design, we develop an ML-augmented algorithm that asymptotically attains the \emph{optimal consistency-robustness trade-off}, even when the adversary can additionally corrupt the ML advice. The optimal competitive ratio scales linearly with the range of  update costs, but is unaffected by other sources of uncertainty. Moreover,  an optimal   competitive online algorithm exhibits a threshold-like response to the ML advice: it either fully trusts  or completely ignores the ML advice, as partially trusting the advice cannot improve the consistency without severely degrading the robustness. Extensive simulations in stochastic settings further validate the theoretical findings in the adversarial environment. 
\end{abstract}

\begin{IEEEkeywords}
Age of information, scheduling, competitive online algorithms, ML advice. 
\end{IEEEkeywords}

\section{Introduction}\label{section:intro}

In recent years, the demand for \emph{timely} information has surged across diverse  systems. 
In  Internet-of-Things (IoT) networks (e.g., unmanned aerial vehicles deployed for disaster response~\cite{gupta2015survey}), 
each IoT device is equipped with sensors (e.g., GPS, radar, and temperature sensors) that continuously monitor its surroundings. 
These sensors generate status updates about physical processes and transmit them to a central controller. 
By aggregating such updates, the controller has a real-time view of the environment, thereby enabling intelligent decision-making. 
Similarly, in location-based smartphone applications (e.g., navigation and gaming~\cite{karki2020characterizing}), 
users frequently report their locations to a central server so that the service can respond in real time. 
In both cases, a central entity relies on timely status updates from mobile devices to  perform time-sensitive inference tasks.

To quantify the timeliness of information maintained at a central entity, 
Kaul, Yates, and Gruteser introduced the \emph{age of information} metric in~\cite{kaul2012real}, defined as the time elapsed since the most recently received update was generated. Under this definition, the information at the central entity linearly ages with time until it is updated. In addition to the linear aging function, more general nonlinear aging functions \cite{kosta2017age} have also been analyzed.  These functions further characterize the quality of an update, e.g.,  capturing how quickly the information held by the central entity deviates from the true status, or  representing the penalty associated with using outdated information in decision-making. In this paper, we consider   general aging functions. 

While frequent updates reduce the age of information at a central entity, 
they also incur substantial update costs (e.g., energy consumption and bandwidth utilization) at  local devices. 
Such  costs are particularly significant for resource-limited mobile devices (e.g., battery-powered and bandwidth-constrained IoT devices or smartphones).  We therefore investigate the fundamental trade-off between  the age of information at the central entity and the update cost at the mobile device. Specifically, this paper considers an information update system in which a mobile device monitors a physical process and reports its latest status to a nearby access point (AP). 
To balance information timeliness at the AP with resource consumption at the device, 
a scheduling algorithm that determines when to transmit updates is crucial. 
Our goal is to design such an algorithm to minimize the total cost over the operation duration, where the total cost jointly accounts for an \emph{age cost} (representing the AP’s information staleness) 
and an \emph{update cost} (representing the device’s resource expenditure).

The scheduling problem is complicated by several forms of uncertainty in mobile networks:  
1) The \emph{operation duration} is uncertain, e.g., the runtime of a location-based application depends on how long a smartphone user keeps the application active.  
2) The \emph{age increment} may vary over time, e.g., a location update
becomes stale more quickly when the device moves at a higher speed.
3) The \emph{update cost} is also time-varying, e.g., user mobility  causes fluctuations in energy consumption.
4) The device’s \emph{update opportunities} may also be intermittent. Such cases include sporadic update arrivals (e.g., due to misalignment between the update generation periods and transmission slots) and transmission constraints that prevent the device from sending in certain slots (e.g., due to a power-saving policy~\cite{lin2022survey} or uplink scheduling decisions imposed by the AP~\cite{takeda2020understanding}).

Most prior works modeled uncertainties using stationary stochastic processes, 
e.g., employing an M/M/1 queueing model in~\cite{kaul2012real} to represent the update arrival and service processes. 
However, such assumptions are often unrealistic, e.g., when a device moves arbitrarily so that the service time no longer follows an exponential distribution. Even if such models fit reality reasonably well, the operation duration may be too short for the process to converge to stationarity. Moreover, practical stochastic models for operation duration or the information aging are often unclear. Such non-stationary uncertainties pose the central challenge for our scheduling design. Under non-stationarity, a scheduling algorithm cannot rely on future knowledge and must instead operate solely based on past and present observations,  as an \textit{online}  algorithm. 

Our first contribution is the design and analysis of online scheduling algorithms that operate under observable information. 
Specifically, the proposed algorithm requires knowledge only of the current age increment of the status held by the AP and also  whether an update opportunity is currently available,  without relying on any prior knowledge of the operation duration or the entire sequences of status aging,  update costs, and update opportunities. Let $R$ denote the ratio between the maximum and minimum update cost. Our main result establishes that, asymptotically in the large update cost regime, the proposed algorithm achieves a total cost at most $\frac{e^{1/R}}{e^{1/R}-1}=\mathcal{O}(R)$ (known as the \textit{competitive ratio}) times the minimum total cost attained by an optimal offline algorithm with  complete knowledge of all uncertainties. This competitive ratio turns out to be optimal. Thus, we can observe that the optimal competitive ratio scales linearly with the range of update costs, while remaining independent of all other sources of uncertainty.

The above guarantee holds uniformly over all uncertainty instances, including the worst-case scenario.  Such worst-case analysis can be overly pessimistic, since in practice future events often follow patterns that would be predicted using machine learning (ML). Motivated by this, Lykouris and Vassilvitskii \cite{lykouris2018competitive} proposed incorporating (potentially imperfect) ML advice into online algorithms, as an approach that goes beyond the worst-case analysis.  A central challenge in this setting is that the reliability of the ML advice is generally unknown. Thus, the design goal is twofold: 1) when the ML advice is accurate, the algorithm should perform well; 2) when the advice is unreliable, the algorithm should still provide performance guarantees. However, it is impossible to achieve both properties simultaneously. For example, an algorithm that blindly trusts the ML advice  performs excellently under accurate ML advice but can suffer arbitrarily poor performance when the ML advice is wrong. Hence, the objective is to optimally balance these two properties.

Our second contribution is to integrate untrusted ML advice (specifically,  advice on the next update time) into our online scheduling framework. We introduce a hyperparameter $\lambda \in (0,1]$ to control the level of trust in the ML advice, where a smaller $\lambda$ places greater reliance on the ML advice. Our main result is  that the proposed  algorithm achieves the following asymptotic trade-off: 1) it  achieves a total cost at most $\frac{\lambda e^{\lambda/R}}{e^{\lambda/R}-1}=\mathcal{O}(R)$ (known as the \textit{consistency}) times the cost of blindly following the ML advice; 2) it also achieves a total cost at most  $\frac{e^{\lambda/R}}{e^{\lambda/R}-1}=\mathcal{O}(\frac{R}{\lambda})$ (known as the \textit{robustness}) times the minimum cost achieved by an optimal offline algorithm. Again, this balance depends only on the ratio $R$ and turns out to be optimal. We can observe that partially trusting the ML advice with $\lambda \in (0,1)$  cannot leverage the ML advice, as it yields (asymptotically) no improvement in the consistency. Thus, an optimal online algorithm  displays an essentially threshold-type behavior with respect to ML advice, either fully adopting it or ignoring it altogether.

\section{Related work}

Extensive research has been conducted on analyzing and minimizing the age of information in diverse system settings. For example, Costa et al. \cite{costa2016age} derived closed-form expressions for the average age in single-source systems; then,  Yates and Kaul \cite{yates2018age} extended the analysis to multi-source scenarios. Building on the foundational work, numerous system design strategies have been proposed to minimize the age, including scheduling algorithms \cite{kadota2018scheduling}, resource allocation schemes \cite{park2020centralized}, and sampling strategies \cite{ornee2019sampling}. Beyond solely minimizing the age, several trade-off problems have also been investigated, such as age–throughput trade-off \cite{mankar2021throughput,wang2025understanding} and age–energy trade-off \cite{nath2018optimum,gu2019timely}. A comprehensive survey of these efforts is provided in \cite{yates2021age}.

Most prior age-related works assume stationary stochastic processes to model uncertainties (e.g., \cite{yates2018age,kadota2018scheduling,park2020centralized,ornee2019sampling, mankar2021throughput,wang2025understanding, nath2018optimum,gu2019timely}). Since such assumptions can be overly optimistic, several works have examined how non-stationary (adversarial) environments affect information timeliness from different perspectives. Examples include adversarial ON/OFF channels \cite{tseng2019online,sinha2022optimizing}, adversarial update arrivals \cite{saurav2023online}, and adversarial aging functions \cite{lin2025optimal,tripathi2021online}. Recent work \cite{liu2025learning} further incorporated ML advice into online algorithms for adversarial ON/OFF channels.

To the best of our knowledge, there is no unified design and analysis
framework capable of handling an adversary that simultaneously controls multiple sources of uncertainty as in our model, where the adversary can jointly manipulate the operation duration, information aging, update cost, and update opportunities. This gap is critical, since mobile networks inherently involve several forms of non-stationary uncertainty, and it is also technically challenging because the adversary is so
powerful. Particularly, the impact of adversarially varying update costs (beyond simple ON/OFF channel models) on age-driven design has not been explored in the existing literature, and our results reveal that it is the most critical source of uncertainty affecting performance.

We address these challenges gradually. 
Sections~\ref{section:system}–\ref{section:onine} focus on the first three types of uncertainty introduced in Section~\ref{section:intro}, 
and Section~\ref{section:non-saturated} further generalizes the results to incorporate the fourth type of uncertainty.



\section{System overview} \label{section:system}

\begin{figure}
	\centering
	\includegraphics[width=.45\textwidth]{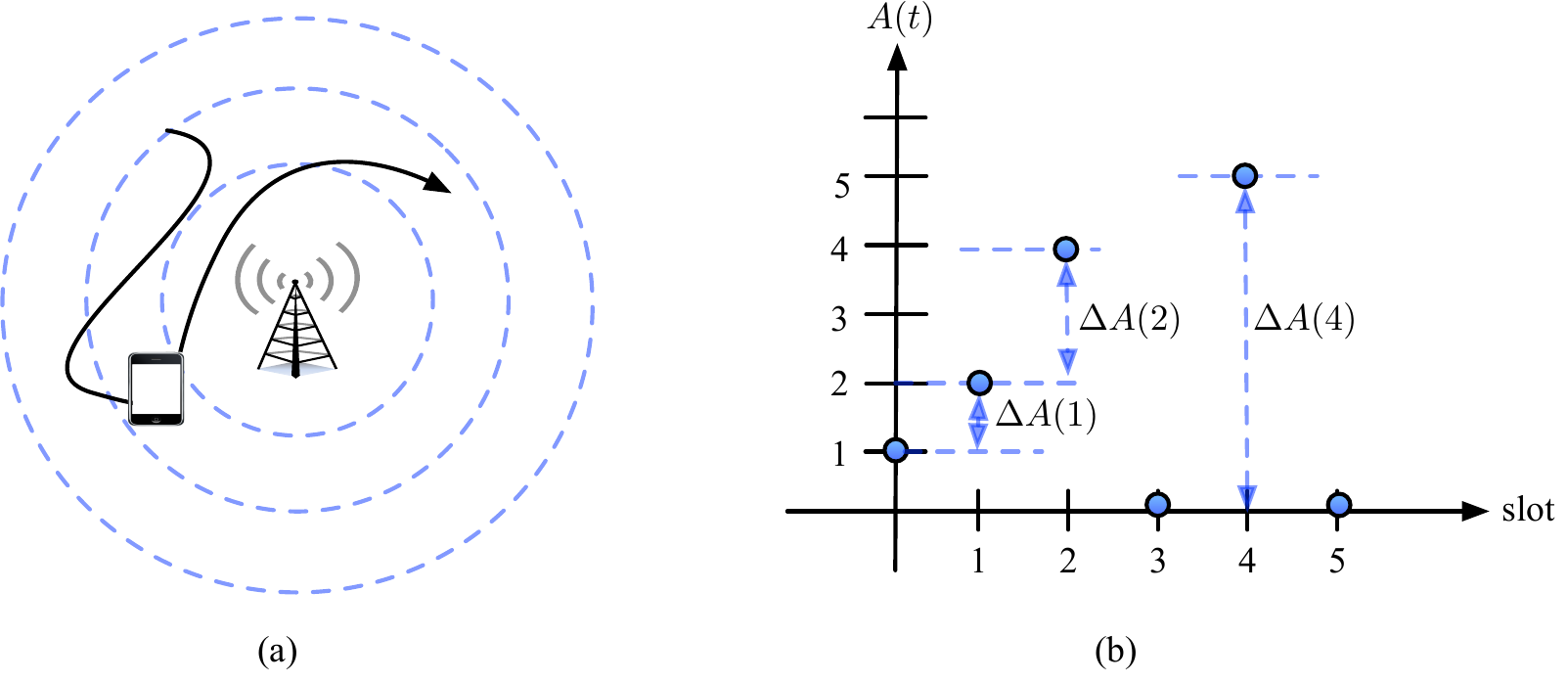}
	\caption{An example network model: (a) a mobile device updating an AP; (b)	the age of information at the AP when the device sends updates in	slots~3 and~5. }
	\label{fig:model}
\end{figure}

As illustrated in Fig.~\ref{fig:model}(a), we consider an information update system in which a mobile device monitors a physical process and reports its latest status to a nearby access point (AP). The system operates in discrete time slots indexed by \mbox{$t=1,2,\cdots,T$}, where $T$ represents the total operation duration.

We begin with a scenario in which the device always has an update packet at the beginning of every slot and is also permitted to transmit it in every slot. Then, for each slot $t$, the device decides whether to transmit the update. Let $d(t) \in \{0,1\}$ denote the  device’s transmission decision, where $d(t) = 1$ if the device transmits in slot~$t$, and $d(t) = 0$ otherwise. If the device transmits at the beginning of a slot, the update is delivered by the end of that slot.
In Section~\ref{section:non-saturated}, we extend the model to more general scenarios in which the device cannot transmit updates in certain slots, i.e., under intermittent update opportunities.


\subsection{Age of information}

If the device decides to transmit an update at the beginning of slot~$t$, then the age of information at the AP is reset to zero at the end of slot~$t$, indicating that the AP has received the latest update. Otherwise, the age increases by an amount $\Delta A(t)$ to reflect the continued staleness of the information at the AP. The value of $\Delta A(t)$ can vary across slots~$t$. Let $A(t)$ denote the age of information maintained by the AP at the end of slot~$t$.  As illustrated in Fig.~\ref{fig:model}(b), the evolution of $A(t)$ across time slots is given by
\begin{align}
	A(t) = 
	\begin{cases}
		A(t-1) + \Delta A(t), & \text{if } d(t) = 0, \\
		0,                 & \text{if } d(t) = 1,
	\end{cases}
	\label{eq:age-dynamic}
\end{align}
with initial age $A(0) = A_0$, where $A_0$ is specified by the AP during the initial connection. We define the \emph{age increment sequence} as $\boldsymbol{\Delta A} = (\Delta A(1), \Delta A(2), \cdots, \Delta A(T))$.


\subsection{Problem formulation} \label{subsection:problem}
While transmitting updates in every slot minimizes the age of information at the AP, it also incurs substantial resource consumption (e.g., energy and bandwidth) at the device. To capture this trade-off, we introduce two cost metrics: the \emph{age cost} and the \emph{update cost}. Specifically, we assume that each unit of age in a slot incurs a cost of one unit; thus, the age cost in slot~$t$ is given by $A(t)$. In addition, if the device  transmits an update in slot~$t$, it incurs an update cost denoted by $C(t)$. For instance, $C(t)$ can be modeled as the product of a unit  cost $C_u(t)$ and the transmission energy $\mathcal{E}(t)$  by $C(t) =C_u(t)  \mathcal{E}(t)$. Here, $\mathcal{E}(t)$ depends on the instantaneous channel condition between the device and the AP, which may fluctuate due to user mobility. Meanwhile, $C_u(t)$ may also vary over time, e.g., depending on the device’s remaining energy or, when multiple packets are present, different unit costs can be assigned to prioritize certain packets over others. The form $C_u(t)\mathcal{E}(t)$ can also be interpreted more generally as a unit cost multiplied by  resource expenditure (e.g., energy or bandwidth). However, for clarity, in the remainder of this paper we focus on the energy example. We define the \emph{update cost sequence} as $\mathbf{C} = (C(1), \cdots, C(T))$.

To balance the age cost and the update cost, the device needs a scheduling algorithm defined as $\boldsymbol{\pi} = (d(1), \cdots, d(T))$. The total cost incurred by a scheduling algorithm $\boldsymbol{\pi}$ depends on the sources of uncertainty, including the operation duration~$T$, the age increment sequence $\boldsymbol{\Delta A}$, and the update cost sequence $\mathbf{C}$. We represent this uncertainty instance as $\mathcal{I} = \{T, \boldsymbol{\Delta A}, \mathbf{C}\}$.
Given an instance $\mathcal{I}$ and a scheduling algorithm $\boldsymbol{\pi}$, the total cost is defined as the sum of age and update costs:
\begin{align}
	J(\mathcal{I}, \boldsymbol{\pi}) = \sum_{t=1}^T \Big( C(t)  d(t) + A(t) \Big).
	\label{eq:cost}
\end{align}
Our objective is to design a scheduling algorithm $\boldsymbol{\pi}$ that minimizes the total cost $J(\mathcal{I}, \boldsymbol{\pi})$.

\subsection{Scheduling  algorithm classification}

A scheduling algorithm is referred to as an \emph{offline scheduling algorithm} if it has prior knowledge of the entire instance~$\mathcal{I}$. Such algorithms are generally impractical in real-world systems due to their reliance on future information. Thus, this paper focuses on a more realistic design, where the device has access only to the historical or current information but lacks knowledge of future values. 

Because of the potential unavailability of real-time channel information, the current update cost $C(t)$ may not be  known at the beginning of slot~$t$. Therefore, we design scheduling algorithms that do not rely on instantaneous channel knowledge. Instead, the algorithms use only the maximum and minimum possible update costs, denoted by $C_M$ and $C_m$. 
The values of $C_M$ and $C_m$ can be estimated from the historically observed worst and best channel conditions (e.g., those observed by the AP within its service region and reported by the AP). 
If these values are unavailable, see Remark~\ref{remark:cost} for a slight modification of the proposed algorithms that preserves some performance guarantee.

A scheduling algorithm is called a \emph{(cost-agnostic) online scheduling algorithm} if it requires only the constants $C_M$ and $C_m$ and the realized age increments up to the current slot. For simplicity, we will omit the term \emph{cost-agnostic}, with the understanding that all references to online scheduling algorithms refer to the cost-agnostic setting. Without access to the complete uncertainty instance, an online algorithm is generally unable to achieve the minimum total cost attainable by an optimal offline scheduling algorithm. Given an instance~$\mathcal{I}$, let $\mathrm{OPT}(\mathcal{I})$ denote the minimum total cost achieved by  an optimal offline algorithm. We evaluate the performance of online algorithms in terms of their \emph{competitiveness}~\cite{buchbinder2009design} relative to the offline optimum, defined as follows.

\begin{define}
A scheduling algorithm~$\boldsymbol{\pi}$ is said to be \mbox{\textbf{$\gamma$-competitive}} if $\frac{J(\mathcal{I}, \boldsymbol{\pi})}{\mathrm{OPT}(\mathcal{I})} \leq \gamma$, for all possible instances~$\mathcal{I}$.
\end{define}

That is, a $\gamma$-competitive online scheduling algorithm guarantees that the resulting total cost is at most $\gamma$ times the offline minimum cost, \emph{regardless of the instance~$\mathcal{I}$}. Our goal is to design an online  scheduling algorithm that achieves the smallest possible competitive ratio~$\gamma$.

Note that in addition to the competitive ratio, another common performance metric is \emph{regret} \cite{lattimore2020bandit}. Regret measures the additive performance gap between an \emph{online learning approach} and the best offline algorithm restricted to  fixed decision rules. In contrast, the competitive ratio compares the performance of an online algorithm against the best offline algorithm without such restrictions. Moreover, online learning approaches typically provide regret guarantees only as the decision horizon grows to infinity. Such guarantees are less suitable for our setting, since a  device may monitor and transmit updates only for a short and unpredictable duration.

Moreover, with the advancement of machine learning (ML) techniques, it is increasingly feasible to leverage ML  to provide scheduling advice. A scheduling algorithm is referred to as an \emph{online scheduling algorithm with ML} if it additionally has access to ML advice. Let $\mathcal{M}(t) \in \{0, 1\}$ denote the decision advised by ML at slot~$t$. We define the sequence $\boldsymbol{\mathcal{M}} = (\mathcal{M}(1), \cdots, \mathcal{M}(T))$ as the \emph{ML advice}. Because such a scheduling algorithm can adapt its actions based on the ML advice, its decision sequence $\boldsymbol{\pi}$ is allowed to be a function of~$\boldsymbol{\mathcal{M}}$.

This paper considers the setting where the ML advice may be  untrusted. While following perfect ML advice (which can minimize the total cost) yields the minimum total cost, blindly trusting  imperfect ML advice (which cannot minimize the total cost) may lead to poor performance. Moreover, we assume that the reliability of the ML advice is unknown a priori. In this context,  we characterize online scheduling algorithms with ML in terms of two metrics introduced in~\cite{lykouris2018competitive}: \emph{consistency} and \emph{robustness}. Consistency quantifies performance relative to the ML advice, while robustness guarantees performance in the worst case. These notions are formally defined as follows.

\begin{define}
An online scheduling algorithm~$\boldsymbol{\pi}$ with ML advice is said to be \textbf{$\alpha$-consistent} if $\frac{J(\mathcal{I}, \boldsymbol{\pi})}{J(\mathcal{I}, \boldsymbol{\mathcal{M}})} \leq \alpha$, for all possible instances~$\mathcal{I}$ and ML advice~$\boldsymbol{\mathcal{M}}$; it is said to be \textbf{$\beta$-robust} if $\frac{J(\mathcal{I}, \boldsymbol{\pi})}{\mathrm{OPT}(\mathcal{I})} \leq \beta$, 
for all possible instances~$\mathcal{I}$ and advice~$\boldsymbol{\mathcal{M}}$.
\end{define}

In other words, an $\alpha$-consistent and $\beta$-robust online scheduling algorithm ensures that 1) when the ML advice is perfect, the resulting total cost is at most $\alpha$ times the offline minimum cost, and  2) when the advice is arbitrary or even adversarial, the cost remains within a factor of $\beta$ of the offline minimum cost. An algorithm that fully trusts the ML advice may achieve near-optimal consistency, but this often comes at the expense of robustness. Therefore, there exists an inherent trade-off between consistency~$\alpha$ and robustness~$\beta$. The goal of this paper is to design an online scheduling algorithm with ML that achieves the optimal consistency-robustness trade-off, namely, to minimize the consistency~$\alpha$ for any fixed robustness~$\beta$.

\section{Linear program formulation  for offline scheduling} \label{section:offline}
The main challenge in designing our scheduling algorithm arises from several forms of uncertainty that are impractical to model using stationary stochastic processes and also limited current observations. To address these challenges, we leverage online algorithm design techniques based on \textit{offline} optimal linear programming~\cite{buchbinder2009design,bamas2020primal}. 

However, casting our offline scheduling problem as an  linear program (LP) is non-trivial due to the non-linear nature of the age cost. For example, consider a scenario where the device  transmits an update in slot~$1$ and schedules the next update in slot~$x$. If the age increases linearly by one unit per slot until the next update, the cumulative age from slot~$1$ to slot~$x$ is given by $\sum_{t=1}^{x} t = x(x+1)/2$, which grows quadratically with the decision variable~$x$. See~\cite{arafa2017age} for a concrete example illustrating this behavior. This quadratic growth implies that the total age cost $\sum_{t=1}^{T} A(t)$ in Eq.~\eqref{eq:cost} includes \emph{non-linear terms}, thereby complicating direct LP formulation.

To overcome this issue, we introduce a transformation of the age evolution into an equivalent \emph{virtual queueing system}, described in Section~\ref{subsection:virtual-queue}. This transformation facilitates an LP formulation for the offline scheduling problem, as presented in Section~\ref{subsection:primal-dual}. The resulting LP formulation serves as the foundation for the design and analysis of our online  scheduling algorithms: Section~\ref{section:onine} develops an online scheduling algorithm without ML, Section~\ref{section:ml} incorporates ML advice into the scheduling process, and Section~\ref{section:non-saturated} extends the model to intermittent update opportunities.

\subsection{Virtual queueing system} \label{subsection:virtual-queue}

\begin{figure}[t]
	\centering
	\includegraphics[width=.4\textwidth]{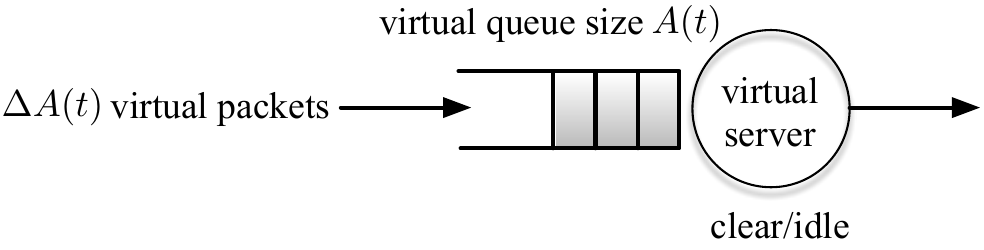}
	\caption{Virtual queueing system.}
	\label{fig:virtual}
\end{figure}

Without loss of generality, we assume that the age increment $\Delta A(t)$ is an integer for all $t$. If this is not the case, we can multiply both $C(t)$ and $\Delta A(t)$ by a common constant so that every $\Delta A(t)$ becomes integer-valued. Such scaling does not alter the optimal solution to the objective in Eq.~(\ref{eq:cost}). Based on this assumption, we  introduce a virtual queue that mirrors the evolution of the integer-valued age.

We construct a virtual queueing system (shown in Fig.~\ref{fig:virtual}) consisting of a virtual server, a virtual queue, and virtual packet arrivals. The virtual system operates in the same discrete time slots as the real mobile network. Initially, the virtual queue contains $A_0$ virtual packets. At the beginning of each slot~$t$, $\Delta A(t)$ virtual packets arrive at the virtual queue. If the device decides to transmit an update  in slot~$t$ (in the actual network), then the virtual server \emph{clears} the virtual queue at the end of the slot.  Otherwise, the virtual server remains \emph{idle} and the virtual packets accumulate. As a result, the virtual queue size evolves as follows: it resets to zero if $d(t) = 1$ (i.e., the virtual server clears the virtual queue, corresponding to an update in the actual network), or increases by $\Delta A(t)$ if $d(t) = 0$ (i.e., the virtual server idles, corresponding to no update in the actual network).  This evolution exactly mirrors the age dynamics in Eq.~\eqref{eq:age-dynamic}. Therefore, we use the same notation $A(t)$ to denote the virtual queue size at the end of slot~$t$.

We index the virtual packets by $1, 2, \cdots$ according to their arrival times, and let $T_i$ denote the slot in which virtual packet~$i$ arrives. For each virtual packet~$i$, we use a binary variable $z_i(t) \in \{0, 1\}$ to indicate whether it remains in the virtual queue at the end of slot~$t$, where $z_i(t) = 1$ if it is still present, and $z_i(t) = 0$ otherwise. Using this notation, the virtual queue size  at the end of slot~$t$ can be expressed as $A(t) = \sum_{i: T_i \leq t} z_i(t)$,
which  counts the number of virtual packets that have arrived by slot~$t$ and remain in the virtual queue. This representation allows us to express the age $A(t)$ as a linear function of the binary variables $z_i(t)$. Substituting this expression into Eq.~\eqref{eq:cost}, we can rewrite the total cost $J(\mathcal{I}, \pi)$ as the following linear function:
\begin{align}
	J(\mathcal{I}, \pi) = \sum_{t=1}^T \left( C(t) d(t) + \sum_{i: T_i \leq t} z_i(t) \right).
	\label{eq:cost-2}
\end{align}
 This linear expression facilitates the formulation of an LP in the next section. Moreover, by Eq.~(\ref{eq:cost-2}), the update cost $C(t)$ can also be interpreted as a \emph{clearing cost} incurred when the virtual queue is cleared in slot~$t$, while holding a virtual packet for one slot incurs a unit \emph{holding cost}.

\subsection{LP formulation} \label{subsection:primal-dual}
We note that the clear/idle behavior in the virtual queueing system directly corresponds to sending/withholding an acknowledgment (ACK) to clear all received packets in the Transmission Control Protocol (TCP). From this perspective, we can leverage prior studies for the classic online TCP ACK problem \cite[Chapter~12]{buchbinder2009design} to formulate our \emph{offline} optimal scheduling problem as an integer program with a linear objective and constraint functions:
\begin{subequations}	\label{integer-program}
	\begin{align}
	\min_{x(t),z_i(t)} & \hspace{.2cm} \sum_{t=1}^T \left(C(t)  x(t) + \sum_{i: T_i \leq t} z_{i}(t)\right) \label{integer-program:objective}\\
	\text{s.t.} & \hspace{.2cm} z_i(t) + \sum_{\tau = T_i}^{t} x(\tau) \geq 1, 	\nonumber\\
	& \hspace{1cm}\text{for all $i$ such that $T_i \leq t$ and for all $t$};
	\label{integer-program:const1}\\
	& \hspace{.2cm} x(t),\, z_i(t) \in \{0,1\}, \quad \text{for all $i$ and $t$.} \label{integer-program:const2} 
	\end{align}
\end{subequations}
In this integer program, we introduce a variable $x(t)$ to denote whether the device  transmits an update in slot~$t$. That is, $x(t)$ in the integer program is exactly the decision variable $d(t)$ introduced earlier. The reason for redefining this variable is  that we will relax $x(t)$ to take a real value between 0 and~1, which prevents immediate interpretation as a transmission decision. Later in Section~\ref{section:onine}, we show how to convert a fractional solution for $x(t)$ into a randomized scheduling decision for $d(t)$.  Moreover, the constraint in Eq.~\eqref{integer-program:const1} ensures that each virtual packet~$i$ arriving by slot~$t$ either remains in the virtual queue at the end of slot~$t$ (i.e., $z_i(t) = 1$ in the first term of Eq.~\eqref{integer-program:const1}) or has been cleared  by slot~$t$ (i.e., there exists a slot $\tau \in \{T_i, \cdots, t\}$ such that $x(\tau)=1$ in the second term of Eq.~\eqref{integer-program:const1}).

By relaxing the integrality constraint~\eqref{integer-program:const2} to allow continuous variables, we obtain the following LP:
\begin{subequations}	\label{lp}
	\begin{align}
	\min_{x(t), z_i(t)} & \hspace{.2cm} \sum_{t=1}^T \left(C(t)x(t) + \sum_{i: T_i \leq t} z_{i}(t)\right) \label{lp:objective}\\
	\text{s.t.} & \hspace{.2cm} z_i(t) + \sum_{\tau = T_i}^{t} x(\tau) \geq 1, \nonumber\\
	& \text{\hspace{1cm}\text{for all $i$ such that $T_i \leq t$ and for all $t$};} 	\label{lp:const1}\\
	&\hspace{.2cm} x(t),\, z_i(t) \geq 0, \quad \text{for all $i$ and $t$.} \label{lp:const2} 
	\end{align}
\end{subequations}

Next, Section~\ref{section:onine}  proposes an online algorithm to compute a feasible solution to  LP~\eqref{lp} without relying on ML, while Section~\ref{section:ml} extends this approach by incorporating ML advice. 

\begin{remark}\label{remark:tcp1}
Before solving our online problem, we remark that, via the virtual queue transformation, our formulation generalizes the classical online TCP ACK problem~\mbox{\cite[Chapter~12]{buchbinder2009design}} and its learning-augmented variant~\cite{bamas2020primal}.
In the classical TCP ACK setting, each ACK incurs a constant cost. In contrast, our objective in~Eq.~\eqref{integer-program:objective} allows the ACK (i.e., clearing) cost to vary across slots. Later, in Section~\ref{section:non-saturated}, we further generalize the problem to scenarios where the ACK channel alternates between ON and OFF states, and transmitting an ACK during an ON slot incurs an adversarially chosen cost. This generalization is practically relevant in noisy wireless environments; moreover, it poses theoretical challenges, since the adversary simultaneously controls multiple sources of uncertainty and an online algorithm is restricted to  transmitting an ACK only in ON slots. As we will show, our generalized setting also yields a fundamentally different optimal competitive ratio that depends solely on the cost range (i.e., time-varying costs are the dominant factor). When augmented with ML advice, our generalized setting  exhibits behavior that differs qualitatively from the classic learning-augmented TCP model; in particular, it yields a threshold-like optimal trust rule when the cost range is large.
\end{remark}

\section{Online scheduling algorithm design without ML} \label{section:onine}
This section develops an online scheduling algorithm \emph{without} ML by leveraging  LP~\eqref{lp}. Section~\ref{subsection:lp-alg} introduces an online algorithm that can compute a feasible solution to the proposed LP in an online fashion.  Based on this fractional solution, Section~\ref{subsection:online-alg} proposes a randomized online scheduling algorithm without ML.



\subsection{Online LP algorithm} \label{subsection:lp-alg}
We propose Alg.~\ref{lp-alg}, referred to as the \emph{online LP algorithm}, which computes a feasible solution to LP~\eqref{lp}. All variables are initialized to zero in Line~\ref{lp-alg:initial}. At the beginning of each slot~$t$, the algorithm iteratively adjusts the variables for all virtual packets that have arrived by slot $t$, as specified in Line~\ref{lp-alg:for}.

The underlying idea is that in each slot~$t$, our scheduling algorithm that will be proposed in Section~\ref{subsection:online-alg} makes a probabilistic decision: to set $d(t) = 1$ with some probability or $d(t) = 0$ otherwise. The probability is governed by the current value of $x(t)$, which is determined in Line~\ref{lp-alg:x} of Alg.~\ref{lp-alg}.
Accordingly, $x(t)$ can be interpreted as the probability of clearing the virtual queue in slot~$t$. In this context, the cumulative sum $\sum_{\tau = T_i}^t x(\tau)$ represents the cumulative clearing probability (up to slot~$t$) for virtual packet~$i$.

With this interpretation, the condition in Line~\ref{lp-alg:condition} checks whether virtual packet~$i$ has already been cleared by slot~$t$. If $\sum_{\tau = T_i}^t x(\tau) \geq 1$, virtual packet~$i$ is considered cleared, and no further processing is required. Otherwise, the virtual packet may still remain in the virtual queue and its associated variables should be adjusted. As shown in Line~\ref{lp-alg:for}, for each such packet, Line~\ref{lp-alg:x} increases the value of $x(t)$. That is, the more virtual packets remain in the virtual queue, the higher the resulting clearing probability.

Moreover, the idea behind Line~\ref{lp-alg:x} is that it  adjusts the cumulative clearing probability $\sum_{\tau = T_i}^t x(\tau)$ as follows:
\begin{align}
\sum_{\tau = T_i}^t x(\tau) 
&\leftarrow \sum_{\tau = T_i}^t x(\tau) + \text{increment of $x(t)$ in Line~\ref{lp-alg:x} }\nonumber\\
&\quad =\left(1 + \frac{1}{C_M}\right) \sum_{\tau = T_i}^t x(\tau) + \frac{1}{\theta C_M}, \label{eq:cum-x-update}
\end{align}
which increases the cumulative clearing probability by a multiplicative factor of $1 + (1/C_M)$ and an additive factor of $1/(\theta C_M)$. The constant $\theta$ is chosen as in Line~\ref{lp-alg:constant} so that the algorithm asymptotically achieves the minimum achievable competitive ratio (as stated  in Lemma~\ref{lemma:converse}).
 The appearance of $C_M$ in the denominators reflects that a larger update cost reduces the rate at which the clearing probability  increases. In addition, Line~\ref{lp-alg:z} sets $z_i(t) = 1 - \sum_{\tau = T_i}^t x(\tau)$ to ensure that the constraint in Eq.~\eqref{lp:const1} is satisfied, so that Alg.~\ref{lp-alg} produces a feasible solution to  LP~\eqref{lp}.

Note that Alg.~\ref{lp-alg} operates in an online manner, as it requires only the constants $C_M$ and $C_m$, and the knowledge of  virtual arrivals up to the current slot (which corresponds to the age increment sequence up to the current slot), without relying on any future information.

\begin{algorithm}[t]
	\SetAlgoLined 
	\SetKwFunction{Union}{Union}\SetKwFunction{FindCompress}{FindCompress} \SetKwInOut{Input}{input}\SetKwInOut{Output}{output}
%

	\tcc{Initialize all variables as follows:}
	$x(t)$, $z_i(t)$ $\leftarrow 0$ for all  $i$ and  $t$\;  \label{lp-alg:initial}
 
 						$\theta \leftarrow (1+\frac{1}{C_M})^{C_m}-1$\; 	\label{lp-alg:constant}

	\tcc{At the beginning of   slot $t=1, \cdots, T$, adjust all variables as follows:}

		\ForEach{virtual packet such that $T_i \leq t$\label{lp-alg:for}}{	
				\If{$\sum_{\tau=T_i}^{t}x(\tau)<1$  \label{lp-alg:condition}}{			
				$z_{i}(t) \leftarrow 1- \sum_{\tau=T_i}^{t}x(\tau)$\;  \label{lp-alg:z}

				$x(t)\leftarrow x(t)+ \frac{1}{C_M}\sum_{\tau=T_i}^{t} x(\tau)+\frac{1}{\theta C_M}$\; 	 \label{lp-alg:x}	
				
				}
			}\label{lp-alg:forend}

	\caption{Online LP algorithm without ML}
	\label{lp-alg}
\end{algorithm}

\subsection{Analysis of Alg.~\ref{lp-alg}}\label{subsection:analysis}

%
%
In this section, we analyze the objective value in Eq.~\eqref{lp:objective} computed by Alg.~\ref{lp-alg}. Unlike prior studies  \cite{buchbinder2009design,bamas2020primal} that analyze online algorithms for LPs using primal–dual techniques, our analysis exploits  structural properties of Alg.~\ref{lp-alg} and its relation to an optimal offline scheduling algorithm. An advantage of our approach is that it provides a unified analysis framework for all proposed LP algorithms (including Algs.~\ref{lp-alg}, \ref{lp-alg-ml}, and \ref{lp-alg-non}) without the need to construct separate dual solutions for different scenarios.

Let $P(t) = \{i : T_i \leq t\}$ denote the set of virtual packets that have arrived by slot~$t$. The following two lemmas characterize properties of this set. Here, when a virtual packet satisfies the condition in Line~\ref{lp-alg:condition} and thus triggers the operation in Line~\ref{lp-alg:x}, we say that it \emph{activates}. For clarity and continuity, we move most detailed proofs of this paper to the appendices in the supplemental material.

\begin{lemma}\label{lemma:sum-x-bound}
For a fixed slot~$t$, after the virtual packets in $P(t)$ have activated $n$ times since slot~$t$, 
the value computed by Alg.~\ref{lp-alg} satisfies
\begin{align*}
	\sum_{\tau = T_i}^{\infty} x(\tau) \geq \frac{\left(1 + \frac{1}{C_M}\right)^n - 1}{\theta},
\end{align*}
for all $i \in P(t)$.
\end{lemma}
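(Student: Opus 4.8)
The plan is to reduce the per-packet bound to a single scalar recurrence and then solve that recurrence. First I would observe that the claimed inequality is actually uniform over $i \in P(t)$, because every $i \in P(t)$ has $T_i \le t$ while all the relevant activations occur in slots $s \ge t$. Consequently, for each such $i$,
\[
\sum_{\tau = T_i}^{\infty} x(\tau) \;\ge\; \sum_{\tau = t}^{\infty} x(\tau) \;=:\; W,
\]
since the sum starting from $T_i$ merely adds extra non-negative terms $x(T_i), \dots, x(t-1)$. Hence it suffices to lower bound the single quantity $W$, the cumulative $x$-mass placed on slots $t, t+1, \dots$, which is independent of $i$; proving the bound for $W$ establishes it simultaneously for every $i \in P(t)$.

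Next I would track how $W$ grows across activations. I index the activations of packets in $P(t)$ occurring in slots $\ge t$ chronologically by $k = 1, \dots, n$, let $s_k \ge t$ denote the slot of the $k$-th activation, and let $W_k$ be the value of $W$ immediately after that activation, with $W_0 = 0$. The initialization $W_0 = 0$ holds because $x(s)$ is only ever incremented during slot $s$ itself (Line~\ref{lp-alg:x}), so no $x(\tau)$ with $\tau \ge t$ has been set before slot $t$. Since Line~\ref{lp-alg:x} is the only place $W$ changes, $W$ is constant between activations, and $W_{k-1}$ equals the partial sum $\sum_{\tau=t}^{s_k} x(\tau)$ evaluated just before the $k$-th increment (all earlier increments landed on slots in $[t, s_k]$, and slots beyond $s_k$ are still zero). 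If the $k$-th activation is triggered by some packet $j \in P(t)$, its increment to $x(s_k)$ equals $\frac{1}{C_M}\sum_{\tau=T_j}^{s_k} x(\tau) + \frac{1}{\theta C_M}$; using $T_j \le t$ gives $\sum_{\tau=T_j}^{s_k} x(\tau) \ge \sum_{\tau=t}^{s_k} x(\tau) = W_{k-1}$, so the increase in $W$ is at least $\frac{1}{C_M} W_{k-1} + \frac{1}{\theta C_M}$. This yields the linear recurrence
\[
W_k \;\ge\; \Bigl(1 + \tfrac{1}{C_M}\Bigr) W_{k-1} + \tfrac{1}{\theta C_M}, \qquad W_0 = 0.
\]

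Finally I would solve the recurrence. Writing $a = 1 + \tfrac{1}{C_M}$ and unrolling with $W_0 = 0$ produces the geometric sum $W_n \ge \frac{1}{\theta C_M}\sum_{\ell=0}^{n-1} a^{\ell} = \frac{1}{\theta C_M}\cdot\frac{a^n - 1}{a-1}$. Since $a - 1 = \tfrac{1}{C_M}$, the factor $\tfrac{1}{\theta C_M}$ cancels against $1/(a-1) = C_M$, leaving $W_n \ge \frac{(1+1/C_M)^n - 1}{\theta}$. Combining this with $\sum_{\tau=T_i}^{\infty} x(\tau) \ge W = W_n$ completes the argument for all $i \in P(t)$.

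I expect the main obstacle to be the bookkeeping in the middle step rather than any genuine inequality: I must argue carefully that $W_{k-1}$ coincides with the partial cumulative sum the activating packet actually sees, in particular handling the case where several packets in $P(t)$ activate within the same slot $s_k$ (there each successive increment only enlarges the relevant cumulative sum, so the recurrence remains valid), and that no contributions to slots $\ge t$ exist before slot $t$ so that $W_0 = 0$. Once this accounting is pinned down, the growth step and the recurrence solution are routine.
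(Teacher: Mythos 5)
Your proof is correct and takes essentially the same route as the paper's: both reduce the lemma to the per-activation recurrence $S \geq \left(1+\frac{1}{C_M}\right)S + \frac{1}{\theta C_M}$ derived from Eq.~\eqref{eq:cum-x-update} and solve it to obtain the geometric-sum bound $\frac{(1+1/C_M)^n-1}{\theta}$ (the paper by induction on $n$, you by unrolling). If anything, your reduction to the common quantity $W=\sum_{\tau=t}^{\infty}x(\tau)$ is a more careful piece of bookkeeping than the paper's, since it rigorously handles the case where the activating packet $j$ differs from the packet $i$ being bounded (and interleaved activations by packets outside $P(t)$ only increase $W$), whereas the paper applies the update rule to $\sum_{\tau=T_i}^{\infty}x(\tau)$ for every $i\in P(t)$ as if it were an exact equality.
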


\begin{proof}
(Sketch) We prove by induction on $n$. See Appendix~\ref{appendix:lemma:sum-x-bound} for details. 
\end{proof}

Lemma~\ref{lemma:sum-x-bound} immediately implies the following result.

\begin{lemma}\label{lemma:max-iteration}
For a fixed slot~$t$, the virtual packets in $P(t)$ can activate at most $\lceil C_m \rceil$ times since slot~$t$.
\end{lemma}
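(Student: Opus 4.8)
The plan is to derive Lemma~\ref{lemma:max-iteration} directly from the quantitative bound in Lemma~\ref{lemma:sum-x-bound} by observing that the cumulative clearing probability $\sum_{\tau=T_i}^{t} x(\tau)$ cannot exceed $1$ for an \emph{active} packet. The key structural fact is the guard in Line~\ref{lp-alg:condition}: a virtual packet $i$ activates in a given slot only if $\sum_{\tau=T_i}^{t} x(\tau) < 1$ at that moment, i.e. only while its cumulative clearing probability is still strictly below~$1$. Hence if $i$ has just activated for the $n$-th time, the lower bound of Lemma~\ref{lemma:sum-x-bound} must have been compatible with the activation guard being satisfied \emph{prior} to that activation.

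First I would fix the slot~$t$ and the packet~$i \in P(t)$, and suppose for contradiction that the packets in $P(t)$ activate more than $\lceil C_m \rceil$ times since slot~$t$. By Lemma~\ref{lemma:sum-x-bound}, after $n$ activations we have
\begin{align*}
	\sum_{\tau = T_i}^{\infty} x(\tau) \geq \frac{\left(1 + \frac{1}{C_M}\right)^{n} - 1}{\theta}.
\end{align*}
Substituting the definition $\theta = \left(1 + \frac{1}{C_M}\right)^{C_m} - 1$ from Line~\ref{lp-alg:constant}, I would set $n = \lceil C_m \rceil \geq C_m$ and note that $\left(1 + \frac{1}{C_M}\right)^{n} \geq \left(1 + \frac{1}{C_M}\right)^{C_m}$ since the base exceeds~$1$. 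This forces the right-hand side to be at least $\frac{\theta}{\theta} = 1$, so the cumulative sum reaches~$1$ after $\lceil C_m \rceil$ activations. Because each increment in Line~\ref{lp-alg:x} is strictly positive, any activation beyond the $\lceil C_m \rceil$-th would require the guard $\sum_{\tau=T_i}^{t} x(\tau) < 1$ to hold at a point where the accumulated sum has already reached or surpassed~$1$, which is impossible. This contradiction yields the claimed bound of $\lceil C_m \rceil$ activations.

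The one subtlety I would be careful about is the precise bookkeeping between the ``$\geq 1$'' lower bound and the strict inequality in the guard, together with whether the partial sums in the statement of Lemma~\ref{lemma:sum-x-bound} (written up to $\infty$) coincide with the finite partial sum tested in Line~\ref{lp-alg:condition} at the activation instant. Since activations of a packet $i$ only occur in slots $\tau \geq T_i$ and each adds a nonnegative contribution, the finite partial sum up to the current slot is monotone nondecreasing in the number of activations and is bounded above by the infinite sum; matching these two quantities at the activation epoch is the main (minor) obstacle, and it is resolved by noting that the guard is evaluated \emph{before} the increment in Line~\ref{lp-alg:x} while Lemma~\ref{lemma:sum-x-bound} is stated \emph{after} the $n$-th activation. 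Consequently the argument is essentially an algebraic substitution of $\theta$ into the geometric bound, and no further estimation is required.
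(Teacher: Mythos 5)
Your proof is correct and follows essentially the same route as the paper: invoke Lemma~\ref{lemma:sum-x-bound} with $n=\lceil C_m\rceil$, substitute $\theta=(1+\frac{1}{C_M})^{C_m}-1$, and use monotonicity of $(1+\frac{1}{C_M})^n$ to conclude the cumulative sum reaches~$1$, so the guard in Line~\ref{lp-alg:condition} can never hold again. The contradiction framing and your (correctly resolved) remark about the finite versus infinite partial sums are only cosmetic differences from the paper's direct argument.
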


\begin{proof}
Fix a slot~$t$. By Lemma~\ref{lemma:sum-x-bound} and the choice of 
$\theta=(1+(1/C_M))^{C_m}-1$ defined in Line~\ref{lp-alg:constant}, 
once the packets in $P(t)$ have activated $\lceil C_m \rceil$ times, we obtain
\begin{align*}
\sum_{\tau=T_i}^{\infty} x(\tau) 
&\;\geq\; 
\frac{\left(1 + \frac{1}{C_M}\right)^{\lceil C_m \rceil} - 1}
     {\left(1 + \frac{1}{C_M}\right)^{C_m}-1} \;\geq\; 1,
\end{align*}
for all $i \in P(t)$, which implies that the condition in Line~\ref{lp-alg:condition} no longer holds. 
Hence, the virtual packets in $P(t)$ can activate at most $\lceil C_m \rceil$ times.
\end{proof}

Leveraging Lemma~\ref{lemma:max-iteration}, we are now ready to analyze
the objective value in Eq.~\eqref{lp:objective} achieved by
Alg.~\ref{lp-alg} in the following theorem. The theorem also
characterizes the asymptotic behavior when the update cost scales
linearly with the energy consumption, i.e., $C(t) = C_u \mathcal{E}(t)$
for a constant unit cost $C_u$. The asymptotic regime $C_u \to \infty$
models scenarios with severely constrained  resources. In this regime,
the competitive ratio depends only on the ratio between the maximum and
minimum update cost, denoted by $R = C_M/C_m$. Let $\mathcal{E}_M$ and
$\mathcal{E}_m$ denote the maximum and minimum per-update energy
consumption, respectively. Then, when $C(t) = C_u \mathcal{E}(t)$, the
same ratio can also be written as $R = \mathcal{E}_M / \mathcal{E}_m$.

\begin{theorem} \label{theorem:competitive-ratio}
The objective value in Eq.~\eqref{lp:objective} computed by Alg.~\ref{lp-alg} at the end of slot~$T$ is bounded above by
\begin{align*}
	\left(1 + \frac{1}{C_m}\right)\left(1 + \frac{1}{(1 + \frac{1}{C_M})^{C_m} - 1}\right)  \mathrm{OPT}(\mathcal{I}),
\end{align*}
for all possible instances~$\mathcal{I}$. Moreover, as the unit  cost $C_u$ scales to infinity, the ratio  with respect to the optimum approaches $\frac{e^{1/R}}{e^{1/R} - 1}$.
\end{theorem}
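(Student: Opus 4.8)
The plan is to bound the objective \eqref{lp:objective} by first rewriting it as a sum over \emph{activations} and then charging that sum against $\mathrm{OPT}(\mathcal{I})$ interval by interval. I would begin by splitting the objective into the \emph{clearing} (update) cost $K=\sum_{t}C(t)x(t)$ and the \emph{holding} (age) cost $H=\sum_{t}\sum_{i:T_i\le t}z_i(t)$. By the structure of Alg.~\ref{lp-alg}, every increment of $x(t)$ in Line~\ref{lp-alg:x} and every nonzero $z_i(t)$ in Line~\ref{lp-alg:z} is produced exactly when some virtual packet activates; writing $S_i(t)=\sum_{\tau=T_i}^{t}x(\tau)$ for the cumulative clearing probability at the instant packet~$i$ is processed, the activation of packet~$i$ in slot~$t$ contributes $1-S_i(t)$ to $H$ and $C(t)\big(\tfrac{1}{C_M}S_i(t)+\tfrac{1}{\theta C_M}\big)$ to $K$. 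Thus $H+K$ equals a single sum over all activations, which is the object I would control.

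Next I would compare this sum to $\mathrm{OPT}(\mathcal{I})$. The idea is to partition the horizon using the clearing slots of an optimal offline solution into idle intervals, attribute to each interval the virtual packets that arrive in it, and charge the entire lifetime cost that Alg.~\ref{lp-alg} spends on those packets against what $\mathrm{OPT}$ pays there. Two structural facts drive the bound. First, Lemma~\ref{lemma:max-iteration} caps the number of activation slots experienced by the relevant packets at $\lceil C_m\rceil$, while Lemma~\ref{lemma:sum-x-bound} (equivalently the recursion \eqref{eq:cum-x-update}) pins down the geometric growth of $S_i$, which lets me telescope the clearing contribution. Second, $\mathrm{OPT}$ must pay at least $C_m$ for the clear that ends each idle interval, in addition to its own holding cost. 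Using $C(t)\le C_M$ in the per-activation clearing term and the geometric growth to sum the contributions, I expect each interval to incur an Alg.~\ref{lp-alg} cost at most $\tfrac{\lceil C_m\rceil}{C_m}\big(1+\tfrac{1}{\theta}\big)$ times the corresponding $\mathrm{OPT}$ cost. The two factors then have transparent origins: $\big(1+\tfrac{1}{C_m}\big)$ arises from $\lceil C_m\rceil\le C_m\big(1+\tfrac1{C_m}\big)$ measured against the $\ge C_m$ clearing cost of $\mathrm{OPT}$, while $\big(1+\tfrac1\theta\big)=1+\tfrac{1}{(1+1/C_M)^{C_m}-1}$ is the multiplicative-plus-additive overhead of the update rule. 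Summing over all intervals yields the stated bound.

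The step I expect to be the main obstacle is precisely this charging, because the fractional clearing of Alg.~\ref{lp-alg} does not reset at the clearing slots of $\mathrm{OPT}$: a packet may remain active across several $\mathrm{OPT}$ intervals, so the holding and clearing that Alg.~\ref{lp-alg} spends on it must be attributed carefully to avoid double counting and to extract the exact constant $\big(1+\tfrac1{C_m}\big)\big(1+\tfrac1\theta\big)$ rather than a looser one; this is where both lemmas are used in tandem. Once the nonasymptotic bound is in hand, the asymptotic claim is a direct limit computation. Substituting $C_M=C_u\mathcal{E}_M$ and $C_m=C_u\mathcal{E}_m$ with $R=\mathcal{E}_M/\mathcal{E}_m$ fixed and letting $C_u\to\infty$, I have $1+\tfrac{1}{C_m}\to 1$, while $\big(1+\tfrac{1}{C_M}\big)^{C_m}=\exp\!\big(C_m\ln(1+\tfrac1{C_M})\big)\to e^{1/R}$ because $C_m\ln(1+\tfrac1{C_M})=\tfrac{C_m}{C_M}+O\!\big(\tfrac{C_m}{C_M^2}\big)\to \tfrac1R$. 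Hence $1+\tfrac{1}{(1+1/C_M)^{C_m}-1}\to \tfrac{e^{1/R}}{e^{1/R}-1}$, and the product of the two factors converges to $\tfrac{e^{1/R}}{e^{1/R}-1}$, as claimed.
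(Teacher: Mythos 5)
Your proposal follows essentially the same route as the paper's proof: the same per-activation accounting (each activation contributing at most $1+1/\theta$ after the $C(t)\le C_M$ cancellation), the same partition of the horizon by the offline optimum's clearing slots, the same two-part activation count (within-period activations charged to the optimum's holding cost $H^*(k)$, and the at most $\lceil C_m\rceil$ subsequent activations from Lemma~\ref{lemma:max-iteration} charged to the clearing cost $C(t_k)\ge C_m$), and the identical limit computation. The only cosmetic difference is that you invoke the geometric growth of Lemma~\ref{lemma:sum-x-bound} to ``telescope'' the clearing contribution, whereas the paper needs that lemma only inside the proof of Lemma~\ref{lemma:max-iteration}; the per-activation bound $1+1/\theta$ follows from the direct cancellation alone.
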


\begin{proof}
(Sketch)
Fix an instance~$\mathcal{I}$. 
Suppose that an optimal offline scheduling algorithm clears the virtual queue in slots 
$t_1, \cdots, t_n$, performing a total of $n$ clearing operations. 
Let $t_0=0$ and $t_{n+1}=T$. 
We divide the timeline into $n+1$ periods, where period~$k$ consists of slots 
$t_{k-1}+1$ through $t_k$. 

Let $J^*(k)$ denote the cost incurred by an optimal offline scheduling algorithm in period~$k$.  Let $H^*(k)$ denote the holding cost incurred by the optimal offline scheduling
algorithm for all virtual packets arriving in period~$k$.
Consider a fixed $k\in \{1, \cdots, n\}$. Including the additional clearing cost in slot~$t_k$, we have $J^*(k)=H^*(k)+C(t_k)$.

Similarly, let $J(k)$ denote the increment of the objective value in Eq.~(\ref{lp:objective}) by Alg.~\ref{lp-alg},
according to the activations of all virtual packets that arrive in period~$k$. Note that one activation increases the objective value by 
\begin{align}
   & C(t)\left( \frac{1}{C_M} \sum_{\tau=T_i}^t x(\tau) + \frac{1}{\theta C_M} \right) 
     + \left( 1 - \sum_{\tau=T_i}^t x(\tau) \right) \nonumber \\
   \leq \;& 1 + \frac{1}{\theta} \quad \text{(since $C(t) \leq C_M$).}    \label{proof:eq:increment}
\end{align}

Next, we count the number of activations made by the virtual packets arriving in period~$k$. First,  $H^*(k)$ exactly counts the number of iterations of Line~\ref{lp-alg:for} 
from slot $t_{k-1}+1$ through slot $t_k-1$ in period~$k$ for the virtual packets  arriving  during this period.
Second, by Lemma~\ref{lemma:max-iteration}, the virtual packets arriving in period~$k$ 
can activate at most $\lceil C_m \rceil$ additional times from slot~$t_k$ onward. Hence, they can activate at most $H^*(k)+\lceil C_m \rceil$ times in total. 

Thus, we obtain
\begin{align}
    J(k)
      &\le \left(1+\frac{1}{\theta}\right) \big(H^*(k)+\lceil C_m\rceil\big) \label{eq:theorem:competitive-ratio-1}\\
      &\le \left(1+\frac{1}{\theta}\right)\frac{\lceil C_m\rceil}{C_m}\,\big(H^*(k)+C(t_k)\big) \nonumber\\
      &\le \left(1+\frac{1}{C_m}\right) \left(1+\frac{1}{\theta}\right) J^*(k). \nonumber
\end{align}
The inequality also holds for $k=n+1$. 
Thus, the  objective value computed by Alg.~\ref{lp-alg} satisfies
\[
  \sum_{k=1}^{n+1} J(k)
  \le \left(1+\frac{1}{C_m}\right) \left(1+\frac{1}{\theta}\right)
  \sum_{k=1}^{n+1} J^*(k).
\]
Substituting $\mathrm{OPT}(\mathcal{I}) = \sum_{k=1}^{n+1} J^*(k)$ and 
$\theta = (1+(1/C_M))^{C_m} - 1$ completes the proof. 
See Appendix~\ref{appendix:theorem:competitive-ratio} for details. 
\end{proof}

Next, we also use Lemma~\ref{lemma:max-iteration}  to analyze  the computational complexity of Alg.~\ref{lp-alg}, as stated in the following lemma. Here, we use $\Delta A_M$ to denote the maximum value of $\Delta A(t)$ for all possible $t$.

\begin{lemma}\label{lemma:max-num-packet}
At the end of any slot, at most $2\left\lceil \sqrt{\Delta A_M  C_m} \right\rceil$ virtual packets satisfy the condition in Line~\ref{lp-alg:condition} of Alg.~\ref{lp-alg}. 
\end{lemma}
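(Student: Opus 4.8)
The plan is to reduce the claim to a counting argument over arrival slots and then exploit a trade-off between how many packets arrive in a slot and how long those packets can survive. First I would observe that all virtual packets arriving in a common slot~$s$ share the same arrival time $T_i=s$, and hence the same cumulative value $\sum_{\tau=s}^{t}x(\tau)$; consequently, at the end of any slot~$t$ they are either all active (satisfy the condition in Line~\ref{lp-alg:condition}) or all cleared. Moreover, $\sum_{\tau=T_i}^{t}x(\tau)$ is nondecreasing as $T_i$ decreases, because an earlier packet accumulates strictly more (nonnegative) clearing probability; therefore the active arrival slots form a suffix. Thus there exist slots $s_1<\cdots<s_k\le t$ whose packets are still active, with $a_j:=\Delta A(s_j)\in[1,\Delta A_M]$ arrivals each, and the quantity to bound is $N=\sum_{j=1}^{k}a_j$.

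Next I would lower-bound the growth of the clearing probability driven by these active packets. Since the oldest active group (from slot~$s_1$) is still active, $\sum_{\tau=s_1}^{t}x(\tau)<1$. For each slot $\tau\in\{s_1,\dots,t\}$, every packet from an active group $s_j\le\tau$ is necessarily active already at slot~$\tau$, because its cumulative value at~$\tau$ is at most its value at~$t$, which is below~$1$; hence Line~\ref{lp-alg:x} fires for it and contributes at least the additive term $1/(\theta C_M)$ to $x(\tau)$. Summing these per-packet contributions gives $x(\tau)\ge \frac{1}{\theta C_M}\sum_{j:\,s_j\le\tau}a_j$, so that
\[
1>\sum_{\tau=s_1}^{t}x(\tau)\ge \frac{1}{\theta C_M}\sum_{j=1}^{k}a_j\,(t-s_j+1).
\]
Because the $s_j$ are distinct integers, the weights $w_j:=t-s_j+1$ are distinct positive integers (each at most $\lceil C_m\rceil$ by Lemma~\ref{lemma:max-iteration}), which yields the key inequality $\sum_{j=1}^{k}a_j w_j<\theta C_M$.

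Finally I would convert this weighted constraint into a bound on~$N$. The main work is the balancing step: minimizing $\sum_j a_j w_j$ subject to $\sum_j a_j=N$, $0\le a_j\le \Delta A_M$, and distinct positive integer weights forces the mass onto the smallest positions $1,2,\dots$, each capped at~$\Delta A_M$, so that the $u$-th unit of mass sits at position at least $u/\Delta A_M$ and hence $\sum_j a_j w_j\ge N^2/(2\Delta A_M)$. Combining this with $\sum_j a_j w_j<\theta C_M$ and the elementary estimate $\theta C_M=C_M\big((1+1/C_M)^{C_m}-1\big)\le C_M\big(e^{C_m/C_M}-1\big)\le 2C_m$ (the last inequality uses $e^{y}\le 1+2y$ for $y=C_m/C_M\in[0,1]$, which is valid since $C_m\le C_M$) gives $N^2<4\Delta A_M C_m$, i.e.\ $N<2\sqrt{\Delta A_M C_m}\le 2\lceil\sqrt{\Delta A_M C_m}\rceil$; integrality of~$N$ then yields the stated bound. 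I expect the balancing inequality $\sum_j a_j w_j\ge N^2/(2\Delta A_M)$ — making the quadratic growth in the number of active slots precise while respecting the per-slot cap~$\Delta A_M$ — to be the main obstacle, with the monotonicity/suffix structure of the active packets as the key enabling observation.
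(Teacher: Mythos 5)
Your proof is correct, but it takes a genuinely different route from the paper's. The paper argues by contradiction with a windowing/pigeonhole argument: it looks at the last $2\lceil\sqrt{\Delta A_M C_m}\rceil$ arrivals, notes that the older half of them must have arrived at least $\lceil\sqrt{C_m/\Delta A_M}\rceil$ slots before $t$ (since at most $\Delta A_M$ packets arrive per slot), and observes that if the oldest of these were still active at the end of slot~$t$, then that group of $\lceil\sqrt{\Delta A_M C_m}\rceil$ packets would have activated at least $\lceil C_m\rceil$ times in the intervening window, contradicting Lemma~\ref{lemma:max-iteration}. So the paper's proof leans entirely on the multiplicative-growth machinery of Lemmas~\ref{lemma:sum-x-bound} and~\ref{lemma:max-iteration}. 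You instead ignore the multiplicative term altogether and use only the additive increment $1/(\theta C_M)$: each active packet contributes that much to $x(\tau)$ in every slot it survives, which converts the activeness condition $\sum_{\tau=s_1}^{t}x(\tau)<1$ into the weighted constraint $\sum_j a_j w_j < \theta C_M$ with distinct integer weights; the balancing inequality $\sum_j a_j w_j \ge N(N+1)/(2\Delta A_M)$ and the elementary estimate $\theta C_M = C_M\bigl((1+1/C_M)^{C_m}-1\bigr) \le 2C_m$ (valid since $C_m \le C_M$) then close the argument. Each approach buys something: the paper's is shorter given machinery it needs anyway for Theorem~\ref{theorem:competitive-ratio}, while yours is self-contained (Lemmas~\ref{lemma:sum-x-bound} and~\ref{lemma:max-iteration} are never needed — your parenthetical appeal to Lemma~\ref{lemma:max-iteration} for $w_j \le \lceil C_m\rceil$ is superfluous), and your intermediate bound $N(N+1)\le 2\Delta A_M\,\theta C_M$ is in fact slightly sharper than the stated constant, since $\theta C_M = C_m R\bigl((1+1/C_M)^{C_M}\bigr)^{1/R}-C_M \le C_m R(e^{1/R}-1) < 2C_m$ for all $R \ge 1$. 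One shared blemish, not specific to you: both proofs implicitly treat the initial $A_0$ packets as subject to the per-slot cap $\Delta A_M$, which is harmless but unstated.
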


\begin{proof}
See Appendix~\ref{appendix:lemma:max-num-packet} for details. 
\end{proof}

According to Lemma~\ref{lemma:max-num-packet}, at the beginning of slot~$t$, at most $2\left\lceil \sqrt{\Delta A_M  C_m} \right\rceil + \Delta A_M$ virtual packets may satisfy the condition in Line~\ref{lp-alg:condition}. Therefore,  Line~\ref{lp-alg:for} needs at most  $2\left\lceil \sqrt{\Delta A_M  C_m} \right\rceil + \Delta A_M$ iterations. The computational complexity of the online algorithm scales quadratically with the minimum update cost.


\subsection{Randomized online scheduling algorithm} \label{subsection:online-alg}

\begin{algorithm}[t]
	\SetAlgoLined 
	\SetKwFunction{Union}{Union}\SetKwFunction{FindCompress}{FindCompress} \SetKwInOut{Input}{input}\SetKwInOut{Output}{output}
	%
		\tcc{Initialize all variables  as follows:}
	$x_{\text{pre-sum}}, x_{\text{sum}},  x(t)\leftarrow 0$  for all  $t$\;  \label{online-alg:initial}

		$\theta \leftarrow (1+\frac{1}{C_M})^{C_m}-1$\; 
		
	Choose a random number $u \in [0,1)$ with the continuous uniform distribution\;  \label{online-alg:random}
	\tcc{At the beginning of slot~$t=1,  \cdots, T$, do as follows:}

		\ForEach{virtual packet  such that $T_i \leq t$
 \label{online-alg:for}}{	
          \If{$\sum_{\tau=T_i}^t x(\tau) < 1$}{
					
				$x(t)\leftarrow x(t)+ \frac{1}{C_M}\sum_{\tau=T_i}^{t} x(\tau)+\frac{1}{\theta C_M}$\;  	 \label{online-alg:x}			}
			
		}
		$x_{\text{pre-sum}} \leftarrow x_{\text{sum}}$\;   \label{online-alg:pre-sum}
		$x_{\text{sum}} \leftarrow x_{\text{sum}}+\min\{x(t),1\}$\;  \label{online-alg:sum}

	\uIf{$x_{\text{pre-sum}} \leq u < x_{\text{sum}}$\label{online-alg:condition}}{
	$d(t) \leftarrow 1$\;  \label{online-alg:tx}
	$u \leftarrow u+1$\;   \label{online-alg:u+1}
	} 
	\Else{
		$d(t) \leftarrow 0$\;  \label{online-alg:idle}
	}
    \label{online-alg:end}

	\caption{Randomized online scheduling  algorithm without ML.}
	\label{online-alg}
\end{algorithm}

Leveraging the fractional-to-probabilistic conversion technique proposed in~\cite[Chapter 12]{buchbinder2009design}, this section presents a randomized online scheduling algorithm in Alg.~\ref{online-alg}, which converts the fractional solution $x(t)$ generated by Alg.~\ref{lp-alg} into a probabilistic transmission decision.

Alg.~\ref{online-alg} adjusts the variable $x(t)$ in Line~\ref{online-alg:x} using the same rule as in Alg.~\ref{lp-alg}. In addition, we introduce two auxiliary variables: $x_{\text{pre-sum}}$ and $x_{\text{sum}}$. The variable $x_{\text{pre-sum}}$ records the cumulative sum of $\min\{x(t), 1\}$ up to slot~$t - 1$ (Line~\ref{online-alg:pre-sum}), while $x_{\text{sum}}$ records the cumulative sum up to slot~$t$ (Line~\ref{online-alg:sum}). In Line~\ref{online-alg:random}, Alg.~\ref{online-alg} selects a uniform random number $u \in [0, 1)$. Then, according to Lines~\ref{online-alg:condition} and~\ref{online-alg:u+1}, if there exists $k \in \mathbb{N}$ such that $u + k \in [x_{\text{pre-sum}}, x_{\text{sum}})$, then the device decides to transmit an update (Line~\ref{online-alg:tx}); otherwise, the device idles (Line~\ref{online-alg:idle}). The idea behind Alg.~\ref{online-alg} mirrors the classical technique for sampling from a distribution using its cumulative distribution function. In particular, by the uniform randomness of~$u$, the probability of transmitting an update (equivalently, clearing the virtual queue) in slot~$t$ is exactly $\min\{x(t), 1\}$, and the cumulative transmission probability by slot~$t$ is $\min\left\{\sum_{\tau = T_i}^{t} x(\tau), 1\right\}$.


Because of the randomness of Alg.~\ref{online-alg}, we evaluate its performance in terms of the \emph{expected} competitive ratio.

\begin{theorem} \label{theorem:online-alg}
The expected competitive ratio of Alg.~\ref{online-alg} is
\[
\left(1 + \frac{1}{C_m}\right) \left(1 + \frac{1}{(1 + \frac{1}{C_M})^{C_m} - 1}\right).
\]
Moreover, as the unit  cost $C_u$ scales to infinity, the ratio approaches $\frac{e^{1/R}}{e^{1/R} - 1}$.
\end{theorem}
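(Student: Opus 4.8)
The plan is to show that the expected total cost of Alg.~\ref{online-alg} is bounded above by the deterministic objective value in Eq.~\eqref{lp:objective} produced by Alg.~\ref{lp-alg}, and then to invoke Theorem~\ref{theorem:competitive-ratio}. The key enabling observation is that Alg.~\ref{online-alg} updates $x(t)$ in Line~\ref{online-alg:x} using exactly the rule of Line~\ref{lp-alg:x} of Alg.~\ref{lp-alg} under the same activation condition, so the two algorithms generate the identical fractional sequence $x(t)$ (the extra $z_i(t)$ assignments in Alg.~\ref{lp-alg} do not feed back into $x$). It therefore suffices to relate the expectation of the randomized decisions $d(t)$ to the fractional quantities $x(t)$ and $z_i(t)$ already analyzed.

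First I would decompose the expected cost via linearity of expectation into an update part and an age part:
\[
\mathbb{E}[J(\mathcal{I}, \boldsymbol{\pi})] = \sum_{t=1}^T C(t)\,\Pr[d(t)=1] + \sum_{t=1}^T \mathbb{E}[A(t)].
\]
For the update part I would use the per-slot clearing probability $\Pr[d(t)=1] = \min\{x(t),1\} \le x(t)$, so the expected update cost is at most $\sum_t C(t)x(t)$, matching the first term of Eq.~\eqref{lp:objective}. For the age part I would write $A(t) = \sum_{i: T_i \le t} \mathbf{1}\{\text{packet } i \text{ uncleared at the end of slot } t\}$ and show the probability that virtual packet~$i$ remains equals $1 - \min\{\sum_{\tau=T_i}^t x(\tau),\,1\} = \max\{0,\,1-\sum_{\tau=T_i}^t x(\tau)\}$, which is precisely the value $z_i(t)$ assigned by Alg.~\ref{lp-alg} (namely $1-\sum_{\tau=T_i}^t x(\tau)$ when the condition in Line~\ref{lp-alg:condition} holds, and $0$ otherwise). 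Hence $\mathbb{E}[A(t)] = \sum_{i: T_i \le t} z_i(t)$ exactly, matching the second term. Combining the two parts yields $\mathbb{E}[J(\mathcal{I}, \boldsymbol{\pi})] \le \sum_{t=1}^T (C(t)x(t) + \sum_{i: T_i \le t} z_i(t))$, i.e., the LP objective computed by Alg.~\ref{lp-alg}.

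The hard part will be establishing the two probabilistic identities above: that the marginal clearing probability in slot~$t$ is $\min\{x(t),1\}$ and that the cumulative clearing probability over $[T_i,t]$ is $\min\{\sum_{\tau=T_i}^t x(\tau),\,1\}$. These are the guarantees of the fractional-to-randomized conversion of~\cite[Chapter~12]{buchbinder2009design}, and I would verify them by analyzing the moving-threshold mechanism: the single seed $u\in[0,1)$ together with the update $u\leftarrow u+1$ in Line~\ref{online-alg:u+1} makes the algorithm fire exactly once whenever the running sum $x_{\text{sum}}=\sum_{\tau\le t}\min\{x(\tau),1\}$ crosses a level of the lattice $\{u+k:k\in\mathbb{N}\}$. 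Since each increment $\min\{x(t),1\}$ is at most one, at most one crossing occurs per slot, and the half-open test $x_{\text{pre-sum}}\le u< x_{\text{sum}}$ in Line~\ref{online-alg:condition} makes the firing event in slot~$t$ have probability exactly $\min\{x(t),1\}$ over the uniform randomness of~$u$; an analogous level-crossing argument on the interval $[T_i,t]$ gives the cumulative identity, where the activation condition $\sum_{\tau=T_i}^t x(\tau)<1$ maintained in the inner loop keeps the caps at~$1$ consistent between the two expressions. Once the expected cost is bounded by the LP objective of Alg.~\ref{lp-alg}, Theorem~\ref{theorem:competitive-ratio} immediately delivers the bound $(1+\frac{1}{C_m})(1+\frac{1}{(1+\frac{1}{C_M})^{C_m}-1})$ on the expected competitive ratio, and its asymptotic analysis as $C_u\to\infty$ carries over verbatim to give the limit $\frac{e^{1/R}}{e^{1/R}-1}$, completing the proof.
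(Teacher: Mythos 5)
Your proposal is correct and takes essentially the same approach as the paper: both bound the expected total cost of Alg.~\ref{online-alg} by the objective value in Eq.~\eqref{lp:objective} computed by Alg.~\ref{lp-alg} (the paper defers the per-slot probability $\min\{x(t),1\}$ and the cumulative-probability fact to~\cite{tseng2019online} and~\cite{buchbinder2009design}, which you instead verify via the level-crossing argument) and then invoke Theorem~\ref{theorem:competitive-ratio}. One minor imprecision: because Alg.~\ref{lp-alg} assigns $z_i(t)$ in Line~\ref{lp-alg:z} \emph{before} the subsequent increments of $x(t)$ within the same slot, the survival probability of packet~$i$ is only upper bounded by (not exactly equal to) $z_i(t)$, but this is precisely the inequality direction your argument requires, so the proof stands.
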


\begin{proof}
 Fix an instance~$\mathcal{I}$. Following  \cite{tseng2019online}, 
 we can show that the expected clearing cost in each slot~$t$ under Alg.~\ref{online-alg} is upper bounded by the value of $C(t)x(t)$ as computed by Alg.~\ref{lp-alg}. Similarly, the expected number of virtual packets present in slot~$t$ under Alg.~\ref{online-alg} is upper bounded by $z_i(t)$ as computed by Alg.~\ref{lp-alg}. Therefore, the expected total cost in Eq.~\eqref{eq:cost-2} incurred by Alg.~\ref{online-alg}  is bounded above by the  objective value in Eq.~(\ref{lp:objective}) computed by Alg.~\ref{lp-alg}. The result then follows directly from Theorem~\ref{theorem:competitive-ratio}.   
\end{proof}

Next, we show that the competitive ratio of Alg.~\ref{online-alg}  is optimal by establishing a matching lower bound as follows.

\begin{lemma}\label{lemma:converse}
No online algorithm can achieve a competitive ratio  smaller than $\frac{e^{1/R}}{e^{1/R} - 1}$.
\end{lemma}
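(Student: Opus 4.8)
The plan is to prove the matching lower bound via Yao's minimax principle. Since the scheduler of interest (Alg.~\ref{online-alg}) is randomized and the adversary is oblivious, it suffices to exhibit a single probability distribution $\mathcal{D}$ over instances and show that every \emph{deterministic} cost-agnostic online algorithm $\boldsymbol{\pi}$ satisfies $\mathbb{E}_{\mathcal{I}\sim\mathcal{D}}[J(\mathcal{I},\boldsymbol{\pi})]\ge \rho\,\mathbb{E}_{\mathcal{I}\sim\mathcal{D}}[\mathrm{OPT}(\mathcal{I})]$ with $\rho\to\frac{e^{1/R}}{e^{1/R}-1}$ in the asymptotic regime $C_u\to\infty$. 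A distribution over instances is genuinely needed here: as in ski rental, the purely deterministic competitive ratio is strictly larger, and no single fixed instance can force the $e$-factor against a randomized algorithm.

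I would make each instance in $\mathcal{D}$ a ski-rental-type instance. A single virtual packet arrives in slot~$1$ and none thereafter, so the age (holding) cost accrues at unit rate while the queue is uncleared and the only decision is \emph{when} to clear; the randomness is carried by the horizon $\Theta$, drawn from a law supported on roughly $C_m$ slots. To make the cost range bite, I make the single cheap clearing opportunity (cost $C_m$) available effectively to the offline optimum only: slot~$1$ carries cost $C_m$ precisely on the long-horizon instances (those for which clearing is worthwhile, $\Theta\ge C_m$) and cost $C_M$ on the short-horizon instances, while every later slot always costs $C_M$. Because the algorithm is cost-agnostic, at slot~$1$ it cannot distinguish long from short instances and thus cannot safely seize the cheap slot, whereas the offline optimum attains $\mathrm{OPT}(\mathcal{I})=\min(\Theta,C_m)$ by clearing in slot~$1$ for $C_m$ on long instances and never clearing on short ones. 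When $R=1$ this collapses to classical randomized ski rental, whose optimal ratio is $\frac{e}{e-1}=\frac{e^{1/R}}{e^{1/R}-1}$, a useful consistency check on both the construction and the target constant.

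For the analysis, note that a deterministic cost-agnostic algorithm on this family observes only the elapsed time (all age increments are fixed), so it is fully described by a single clearing threshold together with the extreme options of clearing in slot~$1$ or never clearing. I would take the law of $\Theta$ to be the geometric analogue of the exponential ski-rental distribution, with per-slot ratio $1+\frac{1}{C_M}$ over the $\sim C_m$-slot window; this is exactly the discrete weighting that renders the online cost \emph{indifferent} to the clearing threshold, so every threshold yields the same value, equal to $\rho$ times $\mathbb{E}[\mathrm{OPT}]$. Evaluating the two boundary strategies pins that common value, and the limit $C_u\to\infty$ turns $(1+\frac{1}{C_M})^{C_m}$ into $e^{1/R}$ and collapses the $(1+\frac{1}{C_m})$ factors, yielding $\rho\to\frac{e^{1/R}}{e^{1/R}-1}$ and matching Theorem~\ref{theorem:online-alg}.

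The main obstacle is the joint design that simultaneously (i) hides the cheap slot from the cost-agnostic algorithm while guaranteeing the offline optimum reliably exploits it, and (ii) tunes the horizon law so the combined penalty is \emph{exactly} $\frac{e^{1/R}}{e^{1/R}-1}$, rather than the pure cost ratio $R$ that a careless construction gives or the naive product $\frac{e}{e-1}R$ one gets by stacking the two penalties multiplicatively. Getting the constant right hinges on correlating the slot-$1$ cost with the horizon precisely at the offline break-even point $\Theta=C_m$ and on verifying the indifference identity for the geometric weights. The remaining care is to make the continuous ski-rental intuition rigorous in the discrete slotted model and to confirm that clearing in later (always-$C_M$) slots and fractional clearings are dominated, so that the resulting bound over deterministic algorithms is genuinely a bound over all online strategies.
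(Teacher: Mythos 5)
Your high-level route—Yao's principle applied to a ski-rental-type family—is a legitimate alternative to the paper, which never passes to a distribution over instances: the paper fixes the deterministic family $A_0=1$, $\boldsymbol{\Delta A}=(0,\dots,0)$, $\mathbf{C}=(C_m, C_mR,\dots,C_mR)$ with only $T$ unknown, rescales by $C_m$ into a continuous-time single-clearing problem, and works on the primal side of the minimax: it writes the competitive-ratio LP over the \emph{algorithm's} randomized clearing time and exhibits the equalizing density $p(t)=e^{t/R}/\bigl(R(e^{1/R}-1)\bigr)$. Your execution, however, contains a concrete error that breaks the bound: the law you propose for $\Theta$—geometric with per-slot ratio $1+\tfrac{1}{C_M}$, i.e.\ \emph{increasing}, supported on roughly $C_m$ slots—is exactly the paper's equalizing density for the algorithm's clearing time (the primal object), not the adversary's equalizing law (the dual object), and it does not make deterministic thresholds indifferent. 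Indeed, on every instance in your family with $\Theta\le C_m$ one has $\mathrm{OPT}(\mathcal{I})=\Theta$ achieved by never clearing, and the deterministic algorithm that also never clears pays exactly $\Theta$ on each such instance; hence against any law supported on horizons up to (roughly) $C_m$, $\mathbb{E}[J]/\mathbb{E}[\mathrm{OPT}]=1+o(1)$ for that algorithm, and your claimed indifference identity—and with it the whole bound—collapses.

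Repairing the argument requires three ingredients your distribution lacks: (i) the geometrically \emph{decreasing} law (continuum limit $\propto e^{-\theta/R}$ after rescaling time by $C_m$), which is what actually solves the equalization ODE $\Pr[\Theta\ge s]=R\,q(s)$ for the interior thresholds; (ii) an atom on horizons far beyond $C_m$ (or $\Theta=\infty$), without which never-clearing is unpunished as above; and (iii) an atom of constant mass on near-zero horizons, without which clearing in slot~$1$ beats the equalized value. Point (iii) is where your correlated-cost trick misleads you: you argue that cost-agnosticism ``hides'' the cheap slot, but hiding is not enough in the ratio-of-expectations form of Yao. Under your construction, clear-at-slot-$1$ pays $C_M$ on short instances and $C_m$ on long ones, and a direct computation against the corrected decreasing law with no short-horizon atom gives expected cost $C_m\bigl[R-(R-1)e^{-1/R}\bigr]$, strictly below the equalized interior value $C_mR$ whenever $R>1$, so the resulting Yao bound is strictly below $\tfrac{e^{1/R}}{e^{1/R}-1}$ (in the paper's uncorrelated family the same hedge costs only $C_m$ and the shortfall is worse). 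The short-horizon atom—mass on the order of $1-\tfrac1R$ at $\Theta=o(C_m)$, contributing essentially nothing to $\mathbb{E}[\mathrm{OPT}]$ but inflating the hedge's ratio—is what restores indifference among all three classes of strategies (immediate clear, interior thresholds, never clear). As written, your proposal would certify only a constant strictly smaller than the target for every $R>1$, so the proof does not go through; note also that the paper itself avoids this entire issue by bounding randomized algorithms directly through the primal LP rather than through an averaging argument.
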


\begin{proof}
(Sketch) Consider the initial age $A_0=1$, a fixed age increment sequence $\boldsymbol{\Delta A} = (0,  \cdots, 0)$, and a fixed update cost sequence $\mathbf{C} = (C_m, C_m R, C_m R, \cdots, C_m R)$. Only the operation duration $T$ is unknown to the device. The age increment sequence captures a versioned monitoring scenario, where the age remains unchanged until a new version is generated, as in the \emph{age of version} metric \cite{abolhassani2021fresh}. Despite this single source of unknown uncertainty, we show in Appendix~\ref{appendix:lemma:converse} that as $C_m \to \infty$, no online scheduling algorithm can achieve a competitive ratio smaller than $(e^{1/R})/(e^{1/R} - 1)$.
\end{proof}

Through the lower bound on the competitive ratio in Lemma~\ref{lemma:converse} and the matching achievability scheme proposed in Alg.~\ref{online-alg}, we establish that the optimal competitive ratio against an adversary that can jointly manipulate the operation duration, the age increment, and the update cost is $(e^{1/R})/(e^{1/R}-1)$. This  matches the result in the classic online TCP \cite{buchbinder2009design} for $R=1$ (without cost variation). Moreover, this ratio  is $\mathcal{O}(R)$ for large $R$, scaling linearly with the update cost range $R$, while is unaffected by all other sources of uncertainty.  Thus, when the cost fluctuates, it becomes fundamentally harder for any online algorithm to balance timeliness and update cost. Moreover, see Fig.~\ref{fig:cr} for the competitive ratio at  finite values of  $R$, which also appears approximately linear in $R$.

\begin{figure}[!t]
\centering
\includegraphics[width=.3\textwidth]{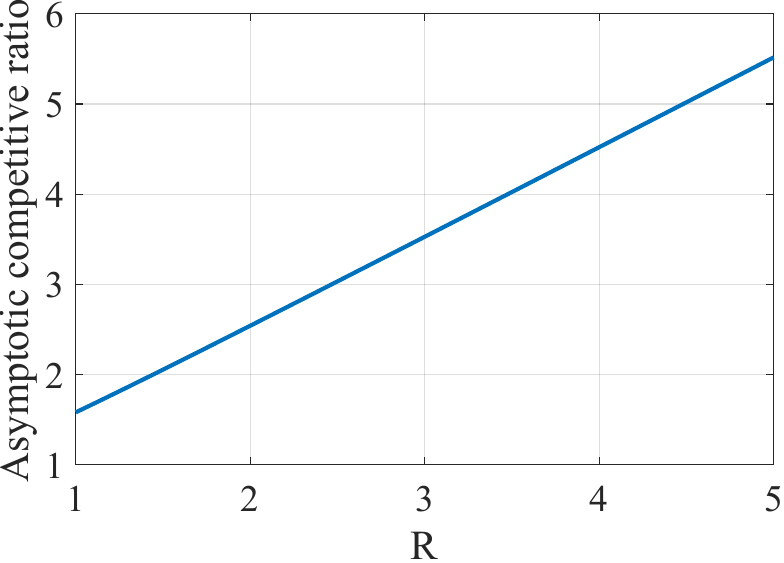}
\caption{Asymptotic competitive ratio $(e^{1/R})/(e^{1/R}-1)$ for finite values of $R$.}
\label{fig:cr}
\end{figure}

\begin{remark}\label{remark:cost}
This remark discusses how Alg.~\ref{lp-alg} can be adapted to scenarios in which
the bounds $C_M$ and $C_m$ on the update cost are not known in advance. In this
case, we propose to periodically estimate the update cost using channel estimation
techniques (e.g., see \cite{liu2025learning}). Let $T_{\text{est}}$ denote the estimation period and assume that the update cost $C(t)$ is measured in slots $t = 1,\, T_{\text{est}}+1,\, 2T_{\text{est}}+1,\, \cdots$. For each slot~$t$, by $C_M(t)=\max_{0\le n\le (t-1)/T_{\text{est}}} C(nT_{\text{est}}+1)$ and $C_m(t)=\min_{0\le n\le (t-1)/T_{\text{est}}} C(nT_{\text{est}}+1)$ we define the maximum and minimum observed costs up to slot~$t$, respectively. In Alg.~\ref{online-alg} (and the corresponding Alg.~\ref{lp-alg}), we replace
$C_M$ and $C_m$ by $C_M(t)$ and $C_m(t)$, respectively. This remains an online algorithm, as no future information is used. Let $\theta(t)= (1 + (1/(C_M(t)))^{C_m(t)} - 1$. Then, the increment of the objective value in Eq.~\eqref{proof:eq:increment} becomes
\begin{align*}
&C(t)\!\left( \frac{1}{C_M(t)} \sum_{\tau=T_i}^t x(\tau)
+ \frac{1}{\theta(t)\, C_M(t)} \right)
+ \left(1 - \sum_{\tau=T_i}^t x(\tau)\right) \\
&\qquad\le
\frac{\max\{ C(t), C_M(t)\}}{C_M(t)}
\left(1 + \frac{1}{\theta(t)}\right).
\end{align*}
Let $\Delta C_M = \max_t (C(t) - C(t-1))$ denote the maximum per-slot variation of the update cost. Since $C_M(t)$ is the maximum observed cost up to  slot~$\lfloor (t-1)/T_{\text{est}}\rfloor T_{\text{est}}+1$, and there are at most $T_{\text{est}}-1$ additional slots between slot~$\lfloor (t-1)/T_{\text{est}}\rfloor T_{\text{est}}+1$ and slot~$t$, we have
\[
\max\{C(t),C_M(t)\} \le C_M(t) + (T_{\text{est}}-1)\Delta C_M.
\]
Moreover, $C_M(t)\le C_M$ and $C_m(t)\ge C_m$, so $\theta(t)\ge\theta$, where
$\theta$ denotes the value used in the original algorithms. Thus, the
increment above is bounded by
\[
\frac{C_M(t) + (T_{\text{est}}-1)\Delta C_M}{C_M(t)}
\left(1 + \frac{1}{\theta}\right).
\]
If the device can estimate $C(t)$ at the beginning of each slot (i.e., $T_{\text{est}}=1$), then the factor above equals $1 + (1/\theta)$, achieving the same competitive ratio as in
Theorems~\ref{theorem:competitive-ratio} and~\ref{theorem:online-alg}. Otherwise, suppose that the
update cost also takes the form $C(t)=C_u \mathcal{E}(t)$. Let
$\Delta \mathcal{E}_M = \max_t (\mathcal{E}(t)-\mathcal{E}(t-1))$. Then, we can express the bound by
\[
\left(1+(T_{\text{est}}-1)\frac{\Delta \mathcal{E}_M}{\mathcal{E}_m}\right)
\left(1 + \frac{1}{\theta}\right).
\]
As $C_u\to\infty$, the modified online algorithm achieves the  asymptotic competitive ratio of $
\mathcal{O}(e^{1/R}/(e^{1/R}-1))$,
which matches the order of the original competitive ratio   (but with an inflated multiplicative pre-constant   $1+(T_{\text{est}}-1)(\Delta \mathcal{E}_M/ \mathcal{E}_m$). The same fix can apply to all remaining algorithms proposed later, yielding the same pre-constant.
\end{remark}

\section{Online scheduling algorithm design with ML}\label{section:ml}
This section extends the proposed online scheduling algorithm by incorporating ML that can suggest the next transmission time (i.e., to clear the virtual queue).  We focus on the online LP algorithm design as in Section~\ref{subsection:lp-alg}, since the resulting fractional solution can also be converted into a randomized scheduling algorithm as in Section~\ref{subsection:online-alg}.

\subsection{Online LP algorithm with ML}
This section extends the online LP algorithm in Alg.~\ref{lp-alg} by incorporating  ML advice~$\boldsymbol{\mathcal{M}}$ with unknown reliability, as described in Alg.~\ref{lp-alg-ml}. 
The key idea underlying Alg.~\ref{lp-alg-ml} is as follows. A new variable $T_{ML}$ is introduced in Line~\ref{lp-alg-ml:initial-ML} and is adjusted in Line~\ref{lp-alg-ml:initial-ML-update} whenever the ML advice suggests  clearing the virtual queue. Hence, the value of $T_{ML}$ represents the most recent slot when the ML advice recommended a clearing. Since the ML advice may be imperfect, the device does not blindly follow it. Instead, Alg.~\ref{lp-alg} modulates its response based on whether a virtual packet~$i$ has been suggested for clearing by the ML advice at the beginning of slot~$t$:
\begin{itemize}
 
      \item If the ML advice \emph{has already} recommended clearing the virtual packet~$i$ (checked via Line~\ref{lp-alg-ml:constant-cond2}), Alg.~\ref{lp-alg-ml} raises the clearing probability more aggressively by setting a smaller value for~$\theta$ in Line~\ref{lp-alg-ml:constant2}. We refer to this as a \emph{fast step}. 
      \item   Conversely, if the ML advice has \emph{not yet} recommended clearing the virtual packet~$i$ by slot~$t$ (checked via Line~\ref{lp-alg-ml:constant-cond}), Alg.~\ref{lp-alg-ml} raises the clearing probability more conservatively by setting a larger value for the constant $\theta$ in Line~\ref{lp-alg-ml:constant1}. We refer to this as a \emph{slow step}.
\end{itemize}
The adjustment of $\theta$ is governed by a hyperparameter \mbox{$\lambda \in (0,1]$}, which reflects the  device's level of trust in the ML advice. A smaller value of $\lambda$ corresponds to greater confidence in the ML advice and leads to closer alignment with it, whereas a larger value represents caution and yields more robust behavior. This trade-off between the consistency and the robustness with respect to $\lambda$ will be further discussed in Section~\ref{subsection:robust-consistency}.


\begin{algorithm}[t]
	\SetAlgoLined 
	\SetKwFunction{Union}{Union}\SetKwFunction{FindCompress}{FindCompress} \SetKwInOut{Input}{input}\SetKwInOut{Output}{output}
%

	\tcc{Initialize all variables  as follows:}
	$x(t)$, $z_i(t)$  $\leftarrow 0$ for all  $i$ and  $t$\;  	\label{lp-alg-ml:initial}
	$T_{ML} \leftarrow 0$\;  	\label{lp-alg-ml:initial-ML}

	\tcc{At the beginning of slot~$t=1,  \cdots, T$, adjust all variables as follows:}

		\If{$\mathcal{M}(t)=1$}{
		$T_{ML} \leftarrow t$\;  \label{lp-alg-ml:initial-ML-update}

		}

		\ForEach{virtual packet $i$ such that $T_i \leq t$\label{lp-alg-ml:for1}}{	
				\If{$\sum_{\tau=T_i}^{t}x(\tau)<1$  \label{lp-alg-ml:condition}}{
				
						$z_{i}(t) \leftarrow 1- \sum_{\tau=T_i}^{t}x(\tau)$  \label{lp-alg-ml:z1}

				\uIf{$T_i < T_{ML}$ \label{lp-alg-ml:constant-cond2}}{
				 $\theta \leftarrow (1+\frac{1}{C_M})^{C_m \lambda}-1$\; 	\label{lp-alg-ml:constant2}

				}
				\Else{\label{lp-alg-ml:constant-cond}
				$\theta \leftarrow (1+\frac{1}{C_M})^{C_m / \lambda}-1$\; 	\label{lp-alg-ml:constant1}	
							
				}

				$x(t)\leftarrow x(t)+ \frac{1}{C_M}\sum_{\tau=T_i}^{t} x(\tau)+\frac{1}{\theta C_M}$\;  	 \label{lp-alg-ml:e1}	
				}
			}\label{lp-alg-ml:for1end}

	\caption{Online LP algorithm with ML}
	\label{lp-alg-ml}
\end{algorithm}

\subsection{Analysis of Alg.~\ref{lp-alg-ml}} \label{subsection:robust-consistency}
In this section, we establish the robustness and consistency of the proposed Alg.~\ref{lp-alg-ml}. To this end, we begin by analyzing the set $P(t)$ under Alg.~\ref{lp-alg-ml} in the following two lemmas, 
analogous to Lemmas~\ref{lemma:sum-x-bound} and~\ref{lemma:max-iteration}, respectively. Here, if a virtual packet activates and performs a slow or fast step, 
we say that it \emph{activates a slow or fast step}, respectively.   Moreover, we denote $\theta_s = (1 + (1/C_M))^{C_m / \lambda} - 1$ and $\theta_f = (1 +(1/C_M))^{C_m \lambda} - 1$.

\begin{lemma} \label{lemma:sum-x-bound-ml}
For a fixed slot~$t$, after the virtual packets in $P(t)$ have activated $N_s$ slow steps and $N_f$ fast steps since slot~$t$, 
the value computed by Alg.~\ref{lp-alg-ml} satisfies
\begin{align*}
\sum_{\tau=T_i}^{\infty} x(\tau) \geq & \,\frac{(1 + \frac{1}{C_M})^{N_s} - 1}{\theta_s}  \left(1 + \frac{1}{C_M} \right)^{N_f} \\
& + \frac{(1 + \frac{1}{C_M})^{N_f} - 1}{\theta_f},
\end{align*}
for all $i$ such that $T_i \leq t$.
\end{lemma}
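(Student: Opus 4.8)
The plan is to reproduce the inductive argument behind Lemma~\ref{lemma:sum-x-bound}, but now tracking two kinds of increments. Write $a = 1 + \frac{1}{C_M}$ and let $S_i$ abbreviate the running cumulative sum $\sum_{\tau=T_i}^{t'} x(\tau)$ at the current slot $t'$. Recall from Eq.~\eqref{eq:cum-x-update} (now with the slot-dependent constant $\theta$) that whenever a packet $j \in P(t)$ activates, the increase of $x$ at slot $t'$ raises $S_i$ for \emph{every} $i \in P(t)$ by the common amount $\frac{1}{C_M} S_j + \frac{1}{\theta C_M}$, where $\theta = \theta_f$ on a fast step (Line~\ref{lp-alg-ml:constant2}) and $\theta = \theta_s$ on a slow step (Line~\ref{lp-alg-ml:constant1}). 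Because $\lambda \in (0,1]$ forces $C_m\lambda \le C_m/\lambda$ and hence $\theta_f \le \theta_s$, the additive term of a fast step is never smaller than that of a slow step; this single inequality drives everything.

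Denote the claimed right-hand side by $L(N_s,N_f)$. I would first identify it with the slow-steps-first schedule: starting from $0$, apply $L \mapsto aL + \frac{1}{\theta_s C_M}$ exactly $N_s$ times and then $L \mapsto aL + \frac{1}{\theta_f C_M}$ exactly $N_f$ times. A geometric-sum computation using $a-1 = \frac{1}{C_M}$ (identical in form to the one in Lemma~\ref{lemma:sum-x-bound}) confirms that this schedule produces precisely $L(N_s,N_f)$.

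I would then prove, by induction on $N_s + N_f$, the invariant: after the packets of $P(t)$ have performed $N_s$ slow and $N_f$ fast steps in \emph{any} interleaving since slot $t$, every $i \in P(t)$ has $S_i \ge L(N_s,N_f)$. The base case $L(0,0)=0$ is immediate. For the step I condition on the type of the last activation; just before it the induction hypothesis bounds both the activating packet's $S_j$ and every other $S_i$ from below by the previous value $L_{\mathrm{prev}}$, so applying the common increment gives $S_i \ge L_{\mathrm{prev}} + \frac{1}{C_M}L_{\mathrm{prev}} + \frac{1}{\theta C_M} = aL_{\mathrm{prev}} + \frac{1}{\theta C_M}$ for all $i$. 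If the last step is fast, this equals $aL(N_s,N_f-1) + \frac{1}{\theta_f C_M} = L(N_s,N_f)$ exactly; if it is slow, it equals $aL(N_s-1,N_f) + \frac{1}{\theta_s C_M}$, and a direct computation yields
\begin{align*}
aL(N_s-1,N_f) + \frac{1}{\theta_s C_M} - L(N_s,N_f) = \frac{a^{N_f}-1}{C_M}\left(\frac{1}{\theta_f} - \frac{1}{\theta_s}\right) \ge 0 .
\end{align*}
Since $x(\tau) \ge 0$, the running bound also bounds $\sum_{\tau=T_i}^{\infty} x(\tau)$, which is the stated claim.

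The step I expect to be the main obstacle is exactly this slow-last case, i.e., the non-commutativity of the two recursions. A slow step and a fast step carry different additive constants, so the accumulated bound depends on their order, and the formula in the lemma corresponds to the \emph{minimizing} (slow-first) order. Consequently the slow-last inductive case cannot give equality; it must give a value no smaller than $L(N_s,N_f)$, and the displayed nonnegative remainder shows this holds precisely because $\theta_f \le \theta_s$ (equivalently $\lambda \le 1$). It is worth recording that the same increment is applied to all packets of $P(t)$ simultaneously, so the bound is uniform in $i$, and that $T_{ML}$ is nondecreasing, so each packet switches from slow to fast at most once; the induction, however, needs no assumption on the interleaving.
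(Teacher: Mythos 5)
Your proof is correct, but it takes a genuinely different route from the paper's. The paper first proves an exchange (rearrangement) lemma: starting from a common value, a slow step followed by a fast step yields a value no larger than the reverse order (using $\theta_s \ge \theta_f$), so any interleaving can be assumed WLOG to be all $N_s$ slow steps first, then all $N_f$ fast steps; it then inducts on $N_f$ alone, with the base case $N_f=0$ inherited from Lemma~\ref{lemma:sum-x-bound} applied with $\theta=\theta_s$. You instead run a single induction on $N_s+N_f$ over \emph{arbitrary} interleavings, absorbing the non-commutativity into the explicit remainder $\frac{a^{N_f}-1}{C_M}\bigl(\frac{1}{\theta_f}-\frac{1}{\theta_s}\bigr)\ge 0$ in the slow-last case (I verified this identity; it is exactly right, and the fast-last case closes with equality). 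What each buys: the paper's swap argument is conceptually clean and reuses Lemma~\ref{lemma:sum-x-bound}, but it leaves implicit that repeated swaps chain correctly (i.e., monotonicity of the recursions in the starting value); your version needs no reduction to a canonical order and makes that step unnecessary. Your write-up is also more careful than the paper on one point both proofs depend on: the increment triggered by an activating packet $j$ is $\frac{1}{C_M}S_j+\frac{1}{\theta C_M}$ and is added to $S_i$ for \emph{every} $i\in P(t)$, so the induction must bound $S_j$ and $S_i$ simultaneously by the previous lower bound --- you state this explicitly, whereas the paper writes the update as if each packet's own sum were scaled multiplicatively.
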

\begin{proof}
(Sketch) We prove by induction on $N_f$. See Appendix~\ref{appendix:lemma:sum-x-bound-ml} for details. 
 	\end{proof}

\begin{lemma}\label{lemma:max-iteration-ml}
For a fixed slot~$t$, the virtual packets in $P(t)$ can activate $N_s$ slow steps and $N_f$ fast steps since slot~$t$, subject to the condition $N_s \lambda + N_f \leq   C_m+1$.
\end{lemma}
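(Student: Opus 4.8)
The plan is to mirror the proof of Lemma~\ref{lemma:max-iteration}, replacing the single geometric growth there by the two-phase growth captured in Lemma~\ref{lemma:sum-x-bound-ml}. The key structural fact I would exploit is that, for a fixed packet~$i$, the flag $T_i < T_{ML}$ tested in Line~\ref{lp-alg-ml:constant-cond2} switches from false to true at most once (since $T_{ML}$ is nondecreasing), so the activations of~$i$ form a block of slow steps followed by a block of fast steps. Consequently the right-hand side of Lemma~\ref{lemma:sum-x-bound-ml} is exactly the cumulative clearing probability $\sum_{\tau=T_i}^{t} x(\tau)$ attained after $N_s$ slow and $N_f$ fast activations, and a packet stops activating as soon as this quantity reaches~$1$ (the test in Line~\ref{lp-alg-ml:condition} fails). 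Writing $b = 1 + \tfrac{1}{C_M}$ and introducing the weighted count $\nu = N_s\lambda + N_f$, it therefore suffices to control $\nu$ at the moment activation stops.

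I would split the argument into two claims. The \emph{incremental claim} is immediate: each activation increases $\nu$ by at most $1$, since a slow step adds $\lambda \le 1$ and a fast step adds exactly $1$. The \emph{frontier claim} is the substantive one: whenever $\nu \ge C_m$, the bound of Lemma~\ref{lemma:sum-x-bound-ml} is already $\ge 1$. Granting the frontier claim, the lemma follows quickly: immediately before a packet's final activation its cumulative sum is still below~$1$, so by the contrapositive of the frontier claim its weighted count is strictly below $C_m$; the final activation then raises $\nu$ by at most~$1$, leaving $\nu \le C_m + 1$, and no further activation is possible. Since this reasoning holds for every admissible ML timing, every reachable pair $(N_s,N_f)$ obeys $N_s\lambda + N_f \le C_m + 1$.

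To prove the frontier claim I would first use that the Lemma~\ref{lemma:sum-x-bound-ml} bound is increasing in each of $N_s$ and $N_f$ to reduce the region $\nu \ge C_m$ to the binding frontier $\nu = C_m$. Substituting $N_f = C_m - N_s\lambda$ then gives the single-variable function
\begin{align*}
g(N_s) = \frac{b^{N_s}-1}{\theta_s}\, b^{\,C_m - N_s\lambda} + \frac{b^{\,C_m - N_s\lambda}-1}{\theta_f}
\end{align*}
on $N_s \in [0, C_m/\lambda]$. After rearrangement $g$ is a nonnegative combination of the convex exponentials $b^{N_s(1-\lambda)}$ and $b^{-N_s\lambda}$ (the coefficients are nonnegative because $\lambda \le 1$ forces $\theta_f \le \theta_s$) minus a constant, hence convex. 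At the endpoints one checks $g(C_m/\lambda) = 1$ and $g(0) = (b^{C_m}-1)/\theta_f \ge 1$, the latter because $C_m \ge C_m\lambda$. The hard part is that convexity together with both endpoints being $\ge 1$ does \emph{not} by itself force $g \ge 1$ in between, so I must bound the interior minimum: I would locate the unique stationary point from $g'(N_s)=0$, which yields a closed form for $b^{N_s}$, substitute it back, and verify that the resulting minimum value remains $\ge 1$.

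This last verification is the main obstacle. Numerically $g$ can dip to values extremely close to~$1$, so the inequality is essentially tight and leaves no room for crude bounding; it depends delicately on the precise choices $\theta_s = (1+\tfrac1{C_M})^{C_m/\lambda}-1$ and $\theta_f = (1+\tfrac1{C_M})^{C_m\lambda}-1$, and I expect its proof to require a careful algebraic manipulation of the mixed-exponential terms (or an auxiliary convexity/AM--GM-type estimate) rather than the one-line computation that sufficed in Lemma~\ref{lemma:max-iteration}. The extra $+1$ in the stated bound is precisely the slack absorbed by the incremental claim, accounting for the final step together with integrality effects.
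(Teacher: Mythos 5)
Your reduction is sound and matches the paper's: both restrict attention to slow steps followed by fast steps (you via monotonicity of $T_{ML}$, the paper via the exchange argument in the proof of Lemma~\ref{lemma:sum-x-bound-ml}), both reduce the lemma to the ``frontier claim'' that the bound of Lemma~\ref{lemma:sum-x-bound-ml} is at least $1$ whenever $N_s\lambda+N_f\ge C_m$, and both absorb the final activation into the $+1$ slack. The genuine gap is that you never prove the frontier claim, which is the entire substance of the lemma. You correctly note that convexity of $g$ together with $g(0)\ge 1$ and $g(C_m/\lambda)=1$ does not exclude an interior dip below $1$, and you then defer the decisive verification (``locate the stationary point, substitute back, check the minimum'') as ``the main obstacle,'' hoping for an unspecified algebraic or AM--GM estimate. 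That step is exactly where the paper does its real work: parametrizing the frontier by $N_f$ with $N_s=(C_m-N_f)/\lambda$, it shows the expression is \emph{monotone nondecreasing in $N_f$} by differentiating, lower-bounding the bracketed factor by its value at $N_f=0$, and reducing the result to the inequality
\begin{align*}
-\frac{\frac{1-\lambda}{\lambda}e^{x'/\lambda}+1}{e^{x'/\lambda}-1}+\frac{1}{e^{x'\lambda}-1}\;\ge\;0,
\qquad x'\in(0,1),
\end{align*}
which it cites as known from \cite[Page~27]{bamas2020primal}. Since the frontier value at $N_f=0$ is exactly $1$, monotonicity finishes the claim. Your fear that the inequality is ``essentially tight'' in the interior is thus misplaced: the minimum over the frontier sits at the endpoint $N_f=0$ (your $N_s=C_m/\lambda$), not inside, but establishing that requires precisely the inequality above, which your proposal neither proves nor cites.

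Incidentally, your own convexity observation closes the gap with less work than the stationary-point analysis you sketch: convexity makes $g'$ nondecreasing, so $g'(C_m/\lambda)\le 0$ alone forces $g'\le 0$ on all of $[0,C_m/\lambda]$, hence $g\ge g(C_m/\lambda)=1$ everywhere. Writing out $g'(C_m/\lambda)\le 0$ yields exactly the displayed inequality, i.e., the same nontrivial fact the paper imports from \cite{bamas2020primal}; it must be proved or cited, not postponed. One smaller inaccuracy: the right-hand side of Lemma~\ref{lemma:sum-x-bound-ml} is a \emph{lower bound} on $\sum_{\tau=T_i}^{\infty}x(\tau)$, not its exact value (the sum may already be positive at slot~$t$); this does not harm your logic, since a lower bound reaching $1$ is all that is needed.
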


\begin{proof}
(Sketch) Using Lemma~\ref{lemma:sum-x-bound-ml},  we show that when $N_s \lambda + N_f \geq C_m$, then $\sum_{\tau=T_i}^{\infty} x(\tau) \geq 1$ for all $i \in P(t)$. See Appendix~\ref{appendix:lemma:max-iteration-ml} for details. 
\end{proof}

Leveraging Lemma~\ref{lemma:max-iteration-ml}, we are ready to  analyze the  objective value in Eq.~\eqref{lp:objective} computed by Alg.~\ref{lp-alg-ml}. The next two theorems analyze its robustness and consistency, respectively. 

\begin{theorem} \label{theorem:robustness}
The  objective value in Eq.~\eqref{lp:objective} computed by Alg.~\ref{lp-alg-ml} at the end of slot~$T$ is bounded above by
\[
\left(1 + \frac{1}{C_m} \right) \left( 1 + \frac{1}{(1 + \frac{1}{C_M})^{C_m \lambda} - 1} \right)  \mathrm{OPT}(\mathcal{I}),
\]
for all possible  instances~$\mathcal{I}$ and ML advice~$\boldsymbol{\mathcal{M}}$. Moreover, as the unit  cost $C_u$ scales to infinity,  the ratio  with respect to the optimum approaches $\frac{e^{\lambda / R}}{e^{\lambda / R} - 1}$.
\end{theorem}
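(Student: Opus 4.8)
The plan is to reuse the period-decomposition argument of Theorem~\ref{theorem:competitive-ratio} essentially verbatim, replacing Lemma~\ref{lemma:max-iteration} by its ML counterpart Lemma~\ref{lemma:max-iteration-ml}. Fix an instance $\mathcal{I}$ and ML advice $\boldsymbol{\mathcal{M}}$, let an optimal offline algorithm clear the virtual queue at slots $t_1 < \cdots < t_n$, set $t_0=0$ and $t_{n+1}=T$, and split the timeline into periods as before, with $J^*(k)=H^*(k)+C(t_k)$ and $J(k)$ the increment of the objective in Eq.~\eqref{lp:objective} charged to the packets arriving in period~$k$. First I would observe that, since $\lambda\le 1$ forces $\theta_f\le\theta_s$, every activation (slow or fast) raises the objective by at most $1+\tfrac{1}{\theta_f}$: the per-activation increment computed in Eq.~\eqref{proof:eq:increment} is bounded by $1+\tfrac{1}{\theta}$ for whichever $\theta\in\{\theta_s,\theta_f\}$ is in force, and $1+\tfrac{1}{\theta}\le 1+\tfrac{1}{\theta_f}$.

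Next I would count the activations inside period~$k$. As in Theorem~\ref{theorem:competitive-ratio}, the activations from slot $t_{k-1}+1$ through $t_k-1$ number at most $H^*(k)$, contributing at most $H^*(k)\bigl(1+\tfrac{1}{\theta_f}\bigr)$. For the activations from slot $t_k$ onward, Lemma~\ref{lemma:max-iteration-ml} bounds the number $N_s$ of slow and $N_f$ of fast steps by $N_s\lambda+N_f\le C_m+1$, and these contribute at most $N_s\bigl(1+\tfrac{1}{\theta_s}\bigr)+N_f\bigl(1+\tfrac{1}{\theta_f}\bigr)$.

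The crux is to show that this weighted count is maximized by spending the whole budget on fast steps, i.e. $N_s\bigl(1+\tfrac{1}{\theta_s}\bigr)+N_f\bigl(1+\tfrac{1}{\theta_f}\bigr)\le (C_m+1)\bigl(1+\tfrac{1}{\theta_f}\bigr)$. I would establish this through the exchange inequality $1+\tfrac{1}{\theta_s}\le\lambda\bigl(1+\tfrac{1}{\theta_f}\bigr)$, which states that a slow step never contributes more than $\lambda$ times a fast step; combined with the budget constraint it yields $N_s\bigl(1+\tfrac{1}{\theta_s}\bigr)+N_f\bigl(1+\tfrac{1}{\theta_f}\bigr)\le (N_s\lambda+N_f)\bigl(1+\tfrac{1}{\theta_f}\bigr)\le (C_m+1)\bigl(1+\tfrac{1}{\theta_f}\bigr)$. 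Writing $q=(1+\tfrac{1}{C_M})^{C_m}$ so that $\theta_s=q^{1/\lambda}-1$ and $\theta_f=q^{\lambda}-1$, the exchange inequality reduces to $\tfrac{q^{1/\lambda}}{q^{1/\lambda}-1}\le\lambda\tfrac{q^{\lambda}}{q^{\lambda}-1}$, and this is the step I expect to be the main obstacle. The fact that makes it true is $\ln q=C_m\ln(1+\tfrac{1}{C_M})\le C_m/C_M=1/R\le 1$ (using $\ln(1+x)\le x$ and $R\ge 1$), so $q\le e$; the inequality holds with equality at $\lambda=1$ and, as $\lambda\downarrow 0$, degenerates precisely into the requirement $q\le e$ (indeed it fails once $q>e$, which is why $R\ge1$ is essential). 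I would verify it by clearing denominators and checking the resulting single-variable inequality on $q\in[1,e]$.

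Assembling the two contributions gives $J(k)\le\bigl(1+\tfrac{1}{\theta_f}\bigr)\bigl(H^*(k)+C_m+1\bigr)$, and since $C(t_k)\ge C_m$ we get $H^*(k)+C_m+1\le\bigl(1+\tfrac{1}{C_m}\bigr)\bigl(H^*(k)+C(t_k)\bigr)=\bigl(1+\tfrac{1}{C_m}\bigr)J^*(k)$, exactly as in Theorem~\ref{theorem:competitive-ratio}; the bound for $k=n+1$ follows identically. Summing over periods gives $\sum_k J(k)\le\bigl(1+\tfrac{1}{C_m}\bigr)\bigl(1+\tfrac{1}{\theta_f}\bigr)\sum_k J^*(k)=\bigl(1+\tfrac{1}{C_m}\bigr)\bigl(1+\tfrac{1}{\theta_f}\bigr)\mathrm{OPT}(\mathcal{I})$, which is the stated robustness bound after substituting $\theta_f=(1+\tfrac{1}{C_M})^{C_m\lambda}-1$. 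For the asymptotics, as $C_u\to\infty$ both $C_M,C_m\to\infty$ with $C_M/C_m=R$ fixed, so $1+\tfrac{1}{C_m}\to 1$ and $(1+\tfrac{1}{C_M})^{C_m\lambda}\to e^{\lambda/R}$, giving $\theta_f\to e^{\lambda/R}-1$ and the limiting ratio $\tfrac{e^{\lambda/R}}{e^{\lambda/R}-1}$.
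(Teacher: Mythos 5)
Your proposal is correct and follows essentially the same route as the paper's proof: the same period decomposition, the same use of Lemma~\ref{lemma:max-iteration-ml} to budget post-$t_k$ activations, the same final assembly via $C(t_k)\ge C_m$, and your ``exchange inequality'' $1+\tfrac{1}{\theta_s}\le\lambda\bigl(1+\tfrac{1}{\theta_f}\bigr)$ is exactly the nonnegativity of the derivative the paper computes (the bound is linear in $N_f$, so the calculus step and your coefficient comparison are the same statement), which the paper disposes of by citing \cite[Page~27]{bamas2020primal}. The one soft spot is your verification plan for that inequality: checking $\lambda=1$ and the limit $\lambda\downarrow 0$ and then ``clearing denominators'' over $q\in[1,e]$ is not yet a proof, since after clearing denominators the inequality still involves both $q$ and $\lambda$; the claim is nevertheless true precisely in the regime $q\le e^{1/R}\le e$ that you correctly identify, so this amounts to re-deriving the cited fact rather than a flaw in the argument.
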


\begin{proof}
(Sketch)  We follow the proof of Theorem~\ref{theorem:competitive-ratio} and show that $J(k) \leq (1 + (1/C_m))(1 + (1/\theta_f)) J^*(k)$ for all $k$, under Alg.~\ref{lp-alg-ml}.
Then, applying the same reasoning as in the proof of Theorem~\ref{theorem:competitive-ratio} 
and substituting the definition of $\theta_f$ completes the proof.
 See  Appendix~\ref{appendix:theorem:robustness} for details. 
\end{proof}

\begin{theorem}\label{theorem:consistency}
The  objective value in Eq.~\eqref{lp:objective} computed by Alg.~\ref{lp-alg-ml} at the end of slot~$T$ is bounded above by
\begin{align*}
&\max\left\{
1 + \frac{1}{\left(1 + \frac{1}{C_M} \right)^{C_m/ \lambda} - 1},  \right.\\
&\quad \quad\left.\frac{\lceil C_m \lambda \rceil}{C_m} \cdot \left(1 + \frac{1}{\left(1 + \frac{1}{C_M} \right)^{C_m \lambda} - 1} \right)
\right\}
\cdot J(\mathcal{I}, \boldsymbol{\mathcal{M}})
\end{align*}
for all possible instances~$\mathcal{I}$ and ML advice~$\boldsymbol{\mathcal{M}}$. Moreover, as the unit update cost $C_u \to \infty$, the  ratio with respect to the optimum converges to $\frac{\lambda e^{ \lambda/R}}{e^{ \lambda/R} - 1}$.
\end{theorem}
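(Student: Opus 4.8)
The plan is to mirror the argument used for Theorem~\ref{theorem:competitive-ratio}, but to benchmark against the ML advice $\boldsymbol{\mathcal{M}}$ instead of the offline optimum and to track the two step types separately. Fix an instance $\mathcal{I}$ and advice $\boldsymbol{\mathcal{M}}$, and let $s_1<\cdots<s_m$ be the slots with $\mathcal{M}(t)=1$ (the slots in which the advice clears the virtual queue), with $s_0=0$ and $s_{m+1}=T$. Partition the timeline into periods, where period~$k$ consists of slots $s_{k-1}+1,\dots,s_k$, so that $J(\mathcal{I},\boldsymbol{\mathcal{M}})=\sum_{k=1}^{m+1}J^{\mathrm{ML}}(k)$ with $J^{\mathrm{ML}}(k)=H^{\mathrm{ML}}(k)+C(s_k)$, where $H^{\mathrm{ML}}(k)$ is the holding cost the advice pays for the virtual packets arriving in period~$k$. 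The key structural observation is that a virtual packet arriving in period~$k$ has $T_i>s_{k-1}=T_{ML}$ at every slot strictly before $s_k$, so it triggers only \emph{slow steps} while it is held within its own period; once the advice clears at $s_k$ we have $T_{ML}=s_k>T_i$, so it triggers only \emph{fast steps} afterward. This lets me reuse the per-activation increment bound~\eqref{proof:eq:increment} with $\theta=\theta_s$ for within-period activations and $\theta=\theta_f$ for after-period activations.

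Next I would bound $J(k)$, the increment of the objective in Eq.~\eqref{lp:objective} charged to packets arriving in period~$k$. Exactly as in Theorem~\ref{theorem:competitive-ratio}, the number of within-period (slow) activations of these packets equals the number of iterations of Line~\ref{lp-alg-ml:for1} over slots $s_{k-1}+1,\dots,s_k-1$, which is precisely $H^{\mathrm{ML}}(k)$; each contributes at most $1+1/\theta_s$. For the after-period (fast) activations I apply Lemma~\ref{lemma:sum-x-bound-ml} with the fixed slot taken to be $s_k$ and $N_s=0$: the fast-step term alone gives $\sum_{\tau=T_i}^{\infty}x(\tau)\ge\big((1+1/C_M)^{N_f}-1\big)/\theta_f$, which reaches $1$ once $N_f\ge C_m\lambda$ by the definition of $\theta_f$. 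Hence a packet accrues at most $\lceil C_m\lambda\rceil$ fast activations after $s_k$ (the essential decoupling: the fast-step count does not depend on how many slow steps preceded it), each contributing at most $1+1/\theta_f$. Combining,
\[
J(k)\le\Big(1+\tfrac{1}{\theta_s}\Big)H^{\mathrm{ML}}(k)+\Big(1+\tfrac{1}{\theta_f}\Big)\lceil C_m\lambda\rceil.
\]
Writing $M$ for the claimed maximum, the first summand is at most $M\,H^{\mathrm{ML}}(k)$, while the second equals $\tfrac{\lceil C_m\lambda\rceil}{C_m}\big(1+1/\theta_f\big)C_m\le M\,C(s_k)$ since $C(s_k)\ge C_m$; therefore $J(k)\le M\,J^{\mathrm{ML}}(k)$. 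The same inequality holds for the final period $k=m+1$, where only slow steps occur and no clearing cost is charged. Summing over $k$ and substituting $\theta_s=(1+1/C_M)^{C_m/\lambda}-1$ and $\theta_f=(1+1/C_M)^{C_m\lambda}-1$ yields the stated bound.

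For the asymptotics I would let $C_u\to\infty$ with $C_m=C_u\mathcal{E}_m$, $C_M=C_u\mathcal{E}_M$ and $R=C_M/C_m\ge 1$ fixed. Using $(1+1/C_M)^{C_m/\lambda}\to e^{1/(R\lambda)}$, $(1+1/C_M)^{C_m\lambda}\to e^{\lambda/R}$, and $\lceil C_m\lambda\rceil/C_m\to\lambda$, the first term of $M$ tends to $\tfrac{e^{1/(R\lambda)}}{e^{1/(R\lambda)}-1}$ and the second to $\tfrac{\lambda e^{\lambda/R}}{e^{\lambda/R}-1}$. Since $x\mapsto e^{x}/(e^{x}-1)$ is decreasing and $1/(R\lambda)\ge\lambda/R$ for $\lambda\in(0,1]$, the first term is at most $\tfrac{e^{\lambda/R}}{e^{\lambda/R}-1}$; a short comparison (equivalently $1-e^{-\lambda/R}\le\lambda(1-e^{-1/(R\lambda)})$, which holds for $R\ge 1$, with equality at $\lambda=1$) shows the first term never exceeds the second. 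Hence $M\to\tfrac{\lambda e^{\lambda/R}}{e^{\lambda/R}-1}$, the advertised consistency.

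The main obstacle, and the step I would develop most carefully, is the fast-step count. Everything else is a transcription of the Theorem~\ref{theorem:competitive-ratio} bookkeeping, but the clean factor $\lceil C_m\lambda\rceil$—rather than a quantity coupled to the number of preceding slow steps, as the joint budget $N_s\lambda+N_f\le C_m+1$ of Lemma~\ref{lemma:max-iteration-ml} might suggest—hinges on isolating the second, purely fast term in Lemma~\ref{lemma:sum-x-bound-ml} and observing that it single-handedly drives the cumulative clearing probability to $1$. Verifying that this term alone suffices, and that applying Lemma~\ref{lemma:sum-x-bound-ml} from the fixed slot $s_k$ legitimately discards the nonnegative slow-step contributions accumulated earlier, is the delicate part; the consistency–robustness asymmetry, and ultimately the threshold behavior in $\lambda$, traces back precisely to this fast/slow split.
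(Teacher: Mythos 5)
Your proof is correct and follows essentially the same route as the paper's: the same partition of time at the ML clearing slots, the same slow/fast dichotomy (period-$k$ packets take only slow steps before $s_k$, at most $H^{\mathrm{ML}}(k)$ of them, and only fast steps afterward, at most $\lceil C_m\lambda\rceil$ of them --- which the paper obtains by rerunning the proof of Lemma~\ref{lemma:max-iteration} with $\theta_f$ and you obtain equivalently from Lemma~\ref{lemma:sum-x-bound-ml} with $N_s=0$), and the same $\max$-splitting against $H^{\mathrm{ML}}(k)+C(s_k)$ using $C(s_k)\ge C_m$. Even your asymptotic step coincides: the inequality $1-e^{-\lambda/R}\le\lambda\bigl(1-e^{-1/(R\lambda)}\bigr)$ that you verify is precisely the one the paper invokes (via its robustness appendix and \cite{bamas2020primal}) to conclude that the fast-step term dominates the maximum.
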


\begin{proof}
(Sketch) Fix an instance $\mathcal{I}$ and  ML advice~$\boldsymbol{\mathcal{M}}$. 
We follow the proof of Theorem~\ref{theorem:competitive-ratio}, with minor modifications. 
Redefine $t_k$ for $k \in \{1, \cdots, n\}$ as the slot when $\boldsymbol{\mathcal{M}}$ clears the virtual queue 
for the $k$-th time. These redefined time points determine a new set of periods, replacing those used in the proof of Theorem~\ref{theorem:competitive-ratio}.

Let $J_{\boldsymbol{\mathcal{M}}}(k)$ denote the cost incurred by $\boldsymbol{\mathcal{M}}$ in period~$k$. 
Let $J(k)$ be the increment of the objective value in Eq.~(\ref{lp:objective}) by Alg.~\ref{lp-alg-ml}, according to the slow and fast step 
activations of all virtual packets arriving in period~$k$. We show that 
\begin{align*}
J(k) \leq  \max \left\{1+\frac{1}{\theta_s}, \frac{\lceil C_m \lambda \rceil}{C_m}\left(1+\frac{1}{\theta_f}\right)\right\} J_{\boldsymbol{\mathcal{M}}}(k),
\end{align*}
for all $k$. Thus, the total objective value computed by Alg.~\ref{lp-alg-ml} satisfies
\[
\sum_{k=1}^{n+1} J(k)\;\leq\;\max \left\{1+\frac{1}{\theta_s}, \frac{\lceil C_m \lambda \rceil}{C_m}\left(1+\frac{1}{\theta_f}\right)\right\} 
\sum_{k=1}^{n+1} J_{\boldsymbol{\mathcal{M}}}(k).
\]
Substituting $J(\mathcal{I}, \boldsymbol{\mathcal{M}})=\sum_{k=1}^{n+1} J_{\boldsymbol{\mathcal{M}}}(k)$ 
and the definitions of $\theta_s$ and $\theta_f$ proves the theorem. 
See Appendix~\ref{appendix:theorem:consistency} for details.
\end{proof}

 Next, we show that the results in the above two theorems characterize the optimal consistency-robustness trade-off. 
 
\begin{lemma}\label{lemma:converse-ml}
A $\frac{\lambda e^{ \lambda/R}}{e^{ \lambda/R}-1}$-consistency scheduling algorithm has a robustness  of at least $\frac{e^{ \lambda/R}}{e^{ \lambda/R}-1}$.
\end{lemma}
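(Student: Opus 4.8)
The plan is to prove the lower bound through a single adversarial family built on top of the instance of Lemma~\ref{lemma:converse}, now equipped with ML advice. Fix $A_0=1$, the age-increment sequence $\boldsymbol{\Delta A}=(0,\dots,0)$, and the cost sequence $\mathbf{C}=(C_m,C_mR,C_mR,\dots)$, and let the advice recommend clearing only in the first (cheap) slot, i.e. $\mathcal{M}(1)=1$ and $\mathcal{M}(t)=0$ for $t\ge 2$. Following this advice clears the queue in slot~$1$, so $J(\mathcal{I},\boldsymbol{\mathcal{M}})=C_m$ regardless of the unknown horizon~$T$, while $\mathrm{OPT}(\mathcal{I})=\min\{T,C_m\}$. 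Since the advice and the instance prefix are fixed and only $T$ varies, any algorithm is described by its first-clearing distribution; I let $F(\mathcal{T})$ denote the expected cumulative probability of having cleared by the rescaled time $\mathcal{T}=(t-1)/C_m$ and pass to the regime $C_m\to\infty$ with $R$ fixed (the asymptotic regime of all theorems). Dividing costs by $C_m$, the first-clear cost on horizon $\mathcal{T}$ equals $b(\mathcal{T}):=\int_0^{\mathcal{T}}(1-F)\,d\tau+R\,F(\mathcal{T})$, which lower-bounds the true cost; the hypotheses then impose $b(\mathcal{T})\le\alpha$ for all $\mathcal{T}$ and $b(\mathcal{T})\le\beta\mathcal{T}$ for $\mathcal{T}\in(0,1]$, where $\alpha=\tfrac{\lambda e^{\lambda/R}}{e^{\lambda/R}-1}$ is the assumed consistency and $\beta$ is the algorithm's robustness.

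I would first record two reductions. Evaluating robustness at $T=1$ gives $p_1 C_m\le\beta$, so the mass on the cheap slot vanishes as $C_m\to\infty$; this is what lets me write $b$ with the single clearing rate $R$ and with $F(0)=0$. Also, multiple clears and randomization only raise the true cost, so both upper-bound constraints legitimately apply to the first-clear quantity $b$. Next, $b(\mathcal{T})\le\alpha$ for all $\mathcal{T}$ forces $F(\infty)=1$ (else the holding term grows linearly and is unbounded) and, as $\mathcal{T}\to\infty$, yields the holding budget $\int_0^\infty(1-F)\,d\tau\le\alpha-R$.

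The heart of the argument converts the robustness inequality into a holding-cost lower bound. I would compare $F$ with the extremal profile $\phi(\mathcal{T})=(\beta-1)(e^{\mathcal{T}/R}-1)$, the unique profile for which $b$ equals $\beta\mathcal{T}$ identically. Writing $U(\mathcal{T})=\int_0^{\mathcal{T}}(F-\phi)\,d\tau$, the identity $b(\mathcal{T})-\beta\mathcal{T}=R(F-\phi)-U$ turns $b(\mathcal{T})\le\beta\mathcal{T}$ into $R\,U'(\mathcal{T})\le U(\mathcal{T})$ with $U(0)=0$; an integrating-factor (Gr\"onwall) step gives $U\le0$, i.e. $\int_0^{\mathcal{T}}(1-F)\ge\int_0^{\mathcal{T}}(1-\phi)$ on $(0,1]$. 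I evaluate this at $\tau^\ast=R\ln\frac{\beta}{\beta-1}$, where $\phi$ reaches $1$; since any robust algorithm is in particular $\tfrac{e^{1/R}}{e^{1/R}-1}$-competitive (Lemma~\ref{lemma:converse} under adversarial advice), we have $\beta\ge\tfrac{e^{1/R}}{e^{1/R}-1}$ and hence $\tau^\ast\le1$, so the comparison is valid on $[0,\tau^\ast]$ with $\phi\le1$ there. A direct integration gives $\int_0^{\tau^\ast}(1-\phi)=\beta\tau^\ast-R$, so the total holding satisfies $\int_0^\infty(1-F)\ge\beta R\ln\frac{\beta}{\beta-1}-R$.

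Combining with the budget $\int_0^\infty(1-F)\le\alpha-R$ yields the scalar inequality $\alpha\ge\beta R\ln\frac{\beta}{\beta-1}$. The map $h(\beta)=\beta R\ln\frac{\beta}{\beta-1}$ is strictly decreasing, and substituting $\beta^\ast=\tfrac{e^{\lambda/R}}{e^{\lambda/R}-1}$ (so that $\tfrac{\beta^\ast}{\beta^\ast-1}=e^{\lambda/R}$) gives $h(\beta^\ast)=\beta^\ast R\cdot\tfrac{\lambda}{R}=\lambda\beta^\ast=\alpha$; hence $h(\beta)\le h(\beta^\ast)$ forces $\beta\ge\beta^\ast$, the claimed bound. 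The step I expect to be most delicate is the continuous limit coupled with the comparison: I must justify rigorously that working with the first-clear profile $b$ is legitimate, that the cheap-slot mass is genuinely negligible, and that the measure-theoretic Gr\"onwall step holds for an arbitrary nondecreasing $F$ with jumps. Confirming $\tau^\ast\le1$ (so the constraint $F\le1$ does not interfere with the comparison) is the other point requiring care.
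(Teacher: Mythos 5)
Your proof is correct (up to the discrete-to-continuous limit, which the paper itself treats at the same informal level), but it follows a genuinely different route from the paper's. The paper uses the same adversarial family, restricts attention to a clearing density $p$ supported on $[0,1]$, and casts the minimum-robustness problem as an infinite linear program: robustness constraints for every $T\in(0,1]$, a single consistency constraint imposed at $T=1$, and the normalization $\int_0^1 p(t)\,dt=1$. It then exhibits an explicit feasible dual solution, $\eta(T)=Ke^{-T/R}\mathbf{1}_{\{T\le\lambda\}}$, $\nu=R^2K$, $\mu=RKe^{-\lambda/R}$, and invokes weak duality to force $c\ge e^{\lambda/R}/(e^{\lambda/R}-1)$. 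You argue primally instead: you identify the extremal profile $\phi(\mathcal{T})=(\beta-1)(e^{\mathcal{T}/R}-1)$ that saturates the robustness constraint, run a Gr\"onwall comparison to show any $\beta$-robust profile carries at least the holding cost $\beta R\ln\frac{\beta}{\beta-1}-R$ accumulated by $\phi$ up to $\tau^*$, extract the holding budget $\int_0^\infty(1-F)\,d\tau\le\alpha-R$ from consistency at long horizons, and invert the strictly decreasing map $h(\beta)=\beta R\ln\frac{\beta}{\beta-1}$, using $h(\beta^*)=\lambda\beta^*=\alpha$. The two proofs even read the consistency hypothesis differently: the paper imposes it only at $T=1$ (after restricting support to $[0,1]$), while you impose it as $\mathcal{T}\to\infty$; both are legitimate, since every horizon is an admissible instance and $J(\mathcal{I},\boldsymbol{\mathcal{M}})=C_m$ throughout. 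Your route buys two things: it yields the full Pareto frontier $\alpha\ge\beta R\ln\frac{\beta}{\beta-1}$ in closed form rather than a dual certificate tailored to one $\lambda$, and it avoids having to guess dual variables. Its costs are the extra bookkeeping you already flagged (first-clear reduction, vanishing cheap-slot mass, a.e.\ differentiation of the Lipschitz function $U$) plus the auxiliary appeal to Lemma~\ref{lemma:converse} to secure $\tau^*\le 1$; that appeal is sound because robustness quantifies over all advice sequences, so the algorithm with your fixed advice hardwired is an ordinary online algorithm to which Lemma~\ref{lemma:converse} applies.
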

\begin{proof}
(Sketch)
Using the same  instance as in the proof of Lemma~\ref{lemma:converse}, we can establish the result. See Appendix~\ref{appendix:lemma:converse-ml} for details. 
\end{proof}

Combining the lower bound in Lemma~\ref{lemma:converse-ml} with the matching achievability scheme in Alg.~\ref{lp-alg-ml}, we establish that the optimal consistency–robustness trade-off is characterized by the pair of $(\lambda e^{\lambda/R})/(e^{\lambda/R}-1)$ and $(e^{\lambda/R})/(e^{\lambda/R}-1)$. This  matches the result for the classic online TCP  with ML \cite{bamas2020primal} when $R=1$ (without cost variation). 

Note that in many prior studies on ML-augmented online algorithms
(e.g., \cite{bamas2020primal, liu2025learning}), the consistency approaches~1 as the trust parameter $\lambda \to 0$, i.e., setting $\lambda \to 0$ forces the algorithm to rely entirely on the ML advice. In contrast, in our setting the consistency approaches $R$ as $\lambda \to 0$, which exceeds~1 whenever $R > 1$. This difference is explained as follows. The  robustness becomes unbounded as $\lambda \to 0$. Because this trade-off is optimal, the robustness is infinite whenever the consistency falls below $R$. This implies that robustness guarantees collapse even when an online algorithm follows  ML advice only partially (so that
its consistency remains below $R$). Thus, in our setting, taking $\lambda \to 0$ does not force the
algorithm to rely fully on the ML advice. Instead, it identifies the limit in which the algorithm becomes as consistent with the ML advice as possible while still maintaining a finite robustness guarantee.

Thus far, we have characterized the optimal consistency–robustness trade-off as the trust level in the ML advice varies. Tuning the trust level leads to different competitive ratios. We next discuss how to determine an optimal trust level that minimizes the competitive ratio when the value of $R$ is large.  
For large $R$, the consistency \mbox{$(\lambda e^{\lambda/R})/(e^{\lambda/R}-1)$} varies from \mbox{$(e^{1/R})/(e^{1/R}-1) \approx R$} at $\lambda=1$ to $R$ as \mbox{$\lambda \rightarrow 0$}. Hence, the consistency is nearly identical for all $\lambda \in (0,1]$. In contrast, for large $R$, the robustness $(e^{\lambda/R})/(e^{\lambda/R}-1)=\mathcal{O}(R/\lambda)$ degrades as $\lambda$ decreases.  Then, considering all possible $\lambda \in (0,1]$ (representing partial or no trust in  the ML advice), an optimal online scheduling algorithm $\pi^*$ that minimizes the competitive ratio satisfies the following performance guarantees:
\begin{align*}
J(\mathcal{I}, \pi^*) \leq \min\big\{
R \cdot J(\mathcal{I}, \boldsymbol{\mathcal{M}}),\, \frac{R}{\lambda}\cdot \mathrm{OPT}(\mathcal{I}) \big\},
\end{align*}
for all possible $\mathcal{I}$, $\boldsymbol{\mathcal{M}}$, and all  $\lambda \in (0, 1]$. The minimum is attained at $\lambda = 1$, leading  $J(\mathcal{I}, \pi^*) \leq   R\cdot \mathrm{OPT}(\mathcal{I})$.
In addition, considering full trust in the ML advice, we also have
$J(\mathcal{I}, \pi^*) \leq J(\mathcal{I}, \boldsymbol{\mathcal{M}})$. 
Suppose that the ML advice $\boldsymbol{\mathcal{M}}$ satisfies the following reliability guarantee:
\mbox{$\frac{J(\mathcal{I}, \boldsymbol{\mathcal{M}})}{OPT(\mathcal{I})}
\leq \zeta(\boldsymbol{\mathcal{M}})$} for all possible~$\mathcal{I}$. Then, we have
\begin{align*}
J(\mathcal{I}, \pi^*)
\leq
\min \big\{ R,\, \zeta(\boldsymbol{\mathcal{M}}) \big\}
\cdot OPT(\mathcal{I}),		
\end{align*}
for all possible $\mathcal{I}$ and $\boldsymbol{\mathcal{M}}$.
That is, for large $R$, regardless of the ML reliability
$\zeta(\boldsymbol{\mathcal{M}})$, the optimal response to ML advice (for minimizing the competitive ratio) is threshold-like: the algorithm should either fully trust the ML advice if \mbox{$\zeta(\boldsymbol{\mathcal{M}}) \leq R$} or completely ignore it otherwise. 

Moreover, see Fig.~\ref{fig:c_r_tradeoff} for the trade-off at finite values of $R$. Here, we also observe a dramatic degradation in robustness resulting from even a small
change in consistency. This indicates that the threshold structure nearly holds as well.

\begin{figure}[!t]
    \centering
    \subfloat[$R=1$]{
        \includegraphics[width=.16\textwidth]{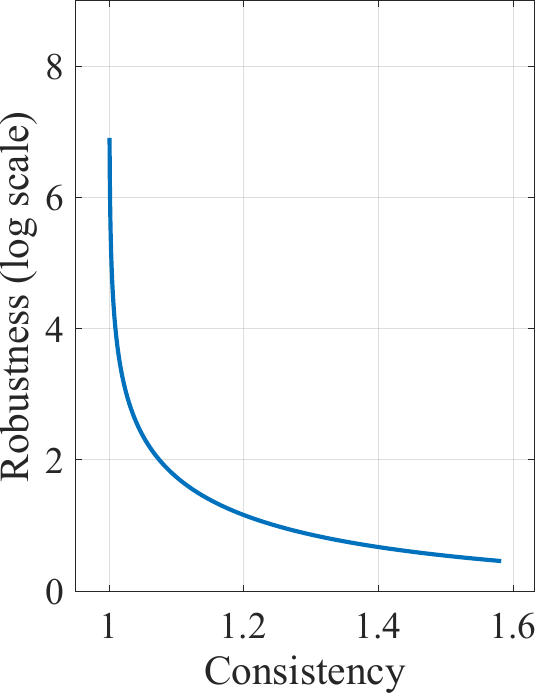}} 
            \subfloat[$R=2$]{
        \includegraphics[width=.16\textwidth]{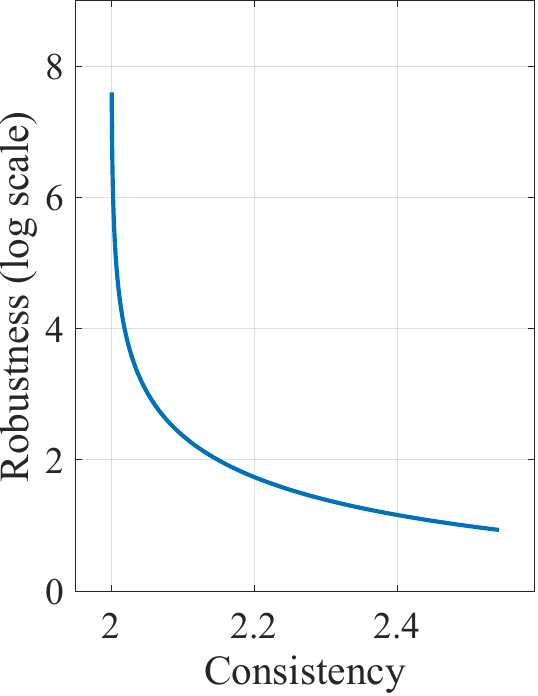}} 
            \subfloat[$R=3$]{
        \includegraphics[width=.16\textwidth]{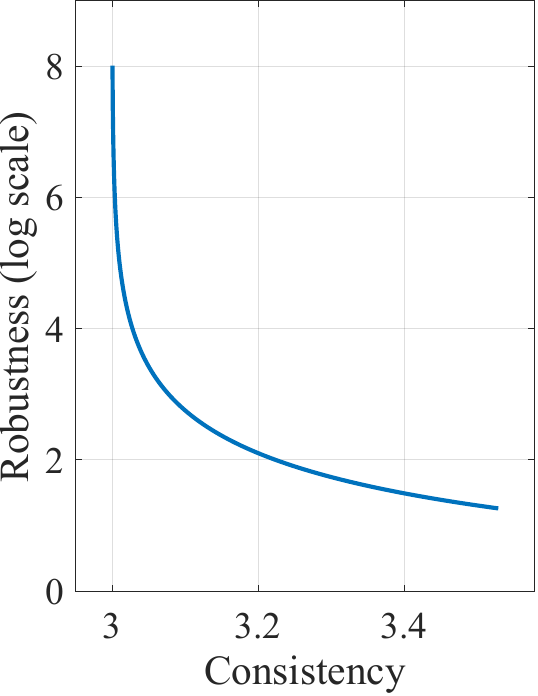}}
\caption{Robustness $(e^{ \lambda/R})/(e^{ \lambda/R} - 1)$ (in log scale) versus consistency $(\lambda e^{ \lambda/R})/(e^{ \lambda/R} - 1)$.}
    \label{fig:c_r_tradeoff}
\end{figure}

\section{Intermittent update opportunities}\label{section:non-saturated}
We extend our framework to scenarios where the device may be unable to update in certain slots (e.g., when no update is generated or when the device cannot transmit). 
Let $U(t)$ indicate whether the device is able to update in slot~$t$, where $U(t) = 1$ if it can  and $U(t) = 0$ otherwise. 
Let $\mathbf{U} = (U(1), \cdots, U(T))$ denote the update opportunity sequence. 
We then redefine the uncertainty instance as $\mathcal{I} = \{T, \boldsymbol{\Delta A}, \mathbf{C}, \mathbf{U}\}$ to incorporate this additional source of uncertainty.

To model this, we augment the virtual queueing system described in Section~\ref{subsection:virtual-queue} with a virtual ON/OFF channel. Specifically, if $U(t) = 1$, the virtual channel is ON and the virtual server is allowed to clear virtual packets; if $U(t) = 0$, the virtual channel is OFF and the virtual server must idle.
This leads to the following revised LP:
\begin{subequations}	\label{lp-non}
	\begin{align}
	\min_{x(t), z_i(t)} & \hspace{.2cm}\sum_{t=1}^T \left(C(t)  x(t) + \sum_{i:T_i \leq t} z_{i}(t)\right) \label{lp-non:objective}\\
	\text{s.t.} & \hspace{.2cm} z_i(t) + \sum_{\tau = T_i}^{t} U(\tau) x(\tau) \geq 1,\nonumber\\
	& 	 \text{\hspace{1cm}\text{for all $i$ such that $T_i \leq t$ and for all $t$};} 	 \label{lp-non:const1}\\
	&\hspace{.2cm} x(t), z_i(t) \geq 0 \quad \text{for all $i$ and $t$}. \label{lp-non:const2}
	\end{align}
\end{subequations}
Here, Eq.~(\ref{lp-non:const1}) differs from Eq.~(\ref{lp:const1}) because a virtual packet is cleared only when the virtual channel is ON.

Next, we generalize Algs.~\ref{lp-alg} (without ML) and~\ref{lp-alg-ml} (with ML) in Sections~\ref{subsection:non-saturated} and~\ref{subsection:non-saturated-ml}, respectively, to handle scenarios with intermittent update opportunities.

\subsection{Without ML advice} \label{subsection:non-saturated}

This section extends Alg.~\ref{lp-alg}, as described in Alg.~\ref{lp-alg-non}. 
The key design change is that $x(t)$ is adjusted only when the virtual channel is ON, i.e., when $U(t) = 1$ (Line~\ref{lp-alg-non:has-update}). Furthermore, unlike Alg.~\ref{lp-alg}, which  adjusts $x(t)$ only for the current slot~$t$,  Alg.~\ref{lp-alg-non} also considers all prior virtual OFF slots that occurred since the previous virtual ON slot. Concretely, for each such prior virtual OFF slot (Line~\ref{lp-alg-non:all-update}), if the constraint in Eq.~(\ref{lp-non:const1}) still holds (Line~\ref{lp-alg-non:for2}), 
the algorithm keeps increasing $x(t)$ (Line~\ref{lp-alg-non:x}).  This reflects the intuition that virtual packets held longer in the queue (due to consecutive virtual OFF periods)  should have higher clearing probabilities once the virtual channel becomes ON. To implement this logic, the algorithm maintains a pointer $\hat{t}$ (Line~\ref{lp-alg-non:all-update}) 
to denote the starting slot of this multiple increment procedure. 
This pointer is adjusted in Line~\ref{lp-alg-non:ts} when the condition in Line~\ref{lp-alg-non:ts-condition1} holds. The pointer identifies the slot immediately following the previous virtual ON slot, which is either the current slot (if the current virtual channel is ON) or the start of the current virtual OFF period (otherwise).

Note that if a virtual packet arrives in a virtual OFF slot, it must remain in the virtual queue until the next virtual ON slot. 
This limitation applies to all scheduling algorithms (including an offline optimal algorithm). 
Thus, the multiple increment mechanism applies only to those virtual packets that arrived before the previous virtual ON slot (Line~\ref{lp-alg-non:ts-condition}). 
For virtual packets that arrive after the previous virtual ON slot, Alg.~\ref{lp-alg-non} adjusts $x(t)$ only once (Line~\ref{lp-alg-non:else:x}).

\begin{algorithm}[t]
	\SetAlgoLined 
	\SetKwFunction{Union}{Union}\SetKwFunction{FindCompress}{FindCompress} \SetKwInOut{Input}{input}\SetKwInOut{Output}{output}
%

	\tcc{Initialize all variables  as follows:}
	$x(t)$, $z_i(t)$  $\leftarrow 0$ for all  $i$ and  $t$\;  \label{lp-alg-non:initial}
 
 						$\theta \leftarrow (1+\frac{1}{C_M})^{C_m}-1$\; 	\label{lp-alg-non:constant}		
 $\hat{t} \leftarrow 1$\; 
		
	\tcc{At the beginning of slot~$t=1,  \cdots, T$, adjust all variables as follows:}
	
		\If{$t>1$ and $U(t-1) = 1$ \label{lp-alg-non:ts-condition1}}{
    $\hat{t} \leftarrow t$\;  \label{lp-alg-non:ts}
}

		\ForEach{virtual packet $i$ such that $T_i \leq t$}{		
			\If{$\sum_{\tau=T_i}^{t}x(\tau)<1$\label{lp-alg-non:for}}{
				$z_{i}(t) \leftarrow 1- \sum_{\tau=T_i}^{t}x(\tau)$\;  \label{lp-alg-non:z}
				
				\If{$U(t)=1$\label{lp-alg-non:has-update}}{	
				\uIf{$T_i < \hat{t}$ \label{lp-alg-non:ts-condition}}{
					\For{$t'=t$ \KwDownTo $\hat{t}$ \label{lp-alg-non:all-update}}{
						\If{$\sum_{\tau=T_i}^{t}x(\tau)<1$ \label{lp-alg-non:for2}}{
							$x(t)\leftarrow x(t)+ \frac{1}{C_M}\sum_{\tau=T_i}^{t} x(\tau)	
							  +\frac{1}{\theta C_M}$\;  	 \label{lp-alg-non:x}
						}
						}
					}
					\Else{
					\If{$\sum_{\tau=T_i}^{t}x(\tau)<1$ \label{lp-alg-non:else}}{
							$x(t)\leftarrow x(t)+ \frac{1}{C_M}\sum_{\tau=T_i}^{t} x(\tau)	+\frac{1}{\theta C_M}$\;  	 \label{lp-alg-non:else:x}
						}
						}
				}
				}
				
			}\label{lp-alg-non:forend}

	\caption{Online LP algorithm without ML for intermittent update opportunities.}
	\label{lp-alg-non}
\end{algorithm}

We next analyze the performance of Alg.~\ref{lp-alg-non} and show that it can  achieve the same asymptotic competitive ratio as stated in Theorem~\ref{theorem:competitive-ratio}. To that end, we  present a lemma that bounds the increment of $\sum_{\tau = T_i}^{\infty} x(\tau)$ under the multiple increment mechanism in Alg.~\ref{lp-alg-non}.  Here, when a virtual packet satisfies the condition in Line~\ref{lp-alg-non:for2} or~\ref{lp-alg-non:else} and thus triggers the operation in Line~\ref{lp-alg-non:x} or~\ref{lp-alg-non:else:x}, we say that it activates.

\begin{lemma}\label{lemma:x-upper-bound}
For a fixed slot~$t$, after the virtual packets in $P(t)$ have activated $n$ times since slot~$t$, 
the value of $\sum_{\tau=T_i}^{\infty} x(\tau)$ computed by Alg.~\ref{lp-alg-non} 
increases (relative to the beginning of slot~$t$) by at most
\begin{align*}
\left(1+\frac{1}{\theta}\right)\left[\left(1+\frac{1}{C_M}\right)^n-1\right],	
\end{align*}
for all $i \in P(t)$.
\end{lemma}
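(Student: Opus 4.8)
The plan is to prove the bound by induction on the number of activations~$n$, closely mirroring the argument for Lemma~\ref{lemma:sum-x-bound} but now producing an \emph{upper} bound on the \emph{increment} rather than a lower bound on the value. The starting observation is that, for every $i \in P(t)$, the quantity $\sum_{\tau=T_i}^{\infty} x(\tau)$ increases, relative to the beginning of slot~$t$, by exactly the total mass $\sum_{\tau=t}^{\infty} x(\tau)$ placed on slots $\tau \ge t$: at the beginning of slot~$t$ we have $x(\tau)=0$ for all $\tau \ge t$, and Alg.~\ref{lp-alg-non} never modifies a past variable $x(\tau)$ with $\tau < t$. Since the prefix $\sum_{\tau=T_i}^{t-1} x(\tau)$ is frozen, this increment is \emph{identical} for all $i \in P(t)$ regardless of the lower index $T_i \le t$. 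Hence it suffices to bound a single common quantity.

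First I would isolate the effect of one activation. Whether it is produced by the single-increment branch (Line~\ref{lp-alg-non:else:x}) or by an iteration of the multiple-increment loop (Line~\ref{lp-alg-non:x}), an activation of a packet~$j$ adds exactly $\frac{1}{C_M}\sum_{\tau=T_j}^{t} x(\tau) + \frac{1}{\theta C_M}$ to $x(t)$; notably the loop counter $t'$ never enters this expression, so the per-activation update has the same algebraic form as in Alg.~\ref{lp-alg}. Writing $S$ for the running cumulative sum of the packet that governs the bound, each activation transforms $S \mapsto aS+b$ with $a = 1+\tfrac{1}{C_M}$ and $b = \tfrac{1}{\theta C_M}$. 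Solving this linear recurrence (or inducting on $n$) gives, after $n$ activations from an initial value $S^{(0)}$, the closed form $S^{(n)} \le a^n S^{(0)} + \tfrac{1}{\theta}(a^n-1)$, using $b/(a-1)=1/\theta$; subtracting $S^{(0)}$ yields an increment of at most $(a^n-1)\bigl(S^{(0)}+\tfrac{1}{\theta}\bigr)$. Since any activating packet satisfies $\sum_{\tau=T_j}^{t} x(\tau) < 1$ at the moment it activates (the guards in Lines~\ref{lp-alg-non:for2} and~\ref{lp-alg-non:else}), we have $S^{(0)} < 1$, which collapses the bound to $(1+\tfrac{1}{\theta})(a^n-1)$, exactly the claimed expression.

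The main obstacle is the \emph{cross-packet coupling}: the $n$ activations need not all belong to one packet, and a step triggered by an early-arriving packet~$j$ raises the common increment by $\frac{1}{C_M}\sum_{\tau=T_j}^{t}x(\tau)+\frac{1}{\theta C_M}$, which could exceed the recurrence step associated with a later packet~$i$. I would resolve this by running the recurrence on the \emph{earliest still-uncleared} packet $i^\star \in P(t)$ at the beginning of slot~$t$: because $T_{i^\star}$ is minimal among packets that can activate (cumulative sums only grow, so a packet cleared at the start stays cleared), its cumulative sum dominates that of every triggering packet, so each activation raises the common increment by at most $\frac{1}{C_M}S_{i^\star}+\frac{1}{\theta C_M}$, keeping the $aS+b$ recurrence valid throughout — even after $i^\star$ itself is cleared, since its sum then remains the largest. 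The guard guarantees $S_{i^\star}^{(0)}<1$, so the induction closes. A convenient sanity check that also suffices on its own is the cruder per-step bound $\frac{1}{C_M}S_j+\frac{1}{\theta C_M} < \frac{1}{C_M}(1+\tfrac{1}{\theta})$ from $S_j<1$: summing over $n$ activations gives an increment below $\frac{n}{C_M}(1+\tfrac{1}{\theta})$, and Bernoulli's inequality $\frac{n}{C_M}\le (1+\tfrac{1}{C_M})^n-1$ recovers the stated (indeed weaker) exponential bound without tracking any dominant packet.
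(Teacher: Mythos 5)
Your proof is correct, and although its engine---the affine recurrence $S \mapsto (1+\tfrac{1}{C_M})S + \tfrac{1}{\theta C_M}$ combined with the guard $S<1$ at activation time---is the same one driving the paper's induction, you treat two points more carefully and also supply a genuinely more elementary alternative. First, the paper's inductive step writes the per-activation increment of $\sum_{\tau=T_i}^{\infty}x(\tau)$ as an \emph{equality} via Eq.~\eqref{eq:cum-x-update}, i.e., as $\tfrac{1}{C_M}\sum_{\tau=T_i}^{\infty}x(\tau)+\tfrac{1}{\theta C_M}$ for the very packet $i$ being bounded; that equality is literally valid only when packet~$i$ itself triggers the activation, and the paper never addresses activations triggered by earlier-arriving packets, whose cumulative sums (hence whose contributions to $x(t)$ in Line~\ref{lp-alg-non:x}) can be strictly larger. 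Your two observations close exactly this gap: the increment is common to all $i\in P(t)$ because only the current slot's $x$-variables ever change, and running the recurrence on the earliest still-uncleared packet $i^\star$ works because every triggering packet $j$ satisfies $T_j\ge T_{i^\star}$, so its sum is dominated by $S_{i^\star}$, while the guard gives $S_{i^\star}^{(0)}<1$. Second, your fallback argument is a shorter, self-contained proof that the paper does not contain: since the triggering packet's sum is $<1$ at the moment of activation, each activation adds less than $\tfrac{1}{C_M}(1+\tfrac{1}{\theta})$ to the common increment, so $n$ activations add less than $\tfrac{n}{C_M}(1+\tfrac{1}{\theta})$, and Bernoulli's inequality $\tfrac{n}{C_M}\le(1+\tfrac{1}{C_M})^n-1$ yields the stated bound; this avoids induction and the dominant-packet bookkeeping entirely, and in fact establishes the stronger linear bound, showing the exponential form in the lemma is not tight. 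What the inductive form (the paper's and your primary route) buys is structural parallelism with Lemma~\ref{lemma:sum-x-bound} and its ML variant; what your Bernoulli route buys is brevity and a tighter estimate.
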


\begin{proof}
(Sketch) We prove by induction on $n$. See Appendix~\ref{appendix:lemma:x-upper-bound} for details. 
\end{proof}

Using Lemma~\ref{lemma:x-upper-bound}, we are ready to analyze Alg.~\ref{lp-alg-non} in the next result.  Here, let $T_{\text{OFF}}$ denote the maximum number of consecutive virtual  OFF slots up to and including the next virtual ON slot.

\begin{theorem} \label{theorem:competitive-ratio-non}
The  objective value in Eq.~\eqref{lp-non:objective} computed by Alg.~\ref{lp-alg-non} at the end of slot~$T$ is bounded above by
\begin{align*}
&\left[ \left(1 + \frac{1}{C_m} \right)^{1+2 \lceil \sqrt{\Delta A_M  C_m} \rceil T_{\text{OFF}}} \left(1 + \frac{1}{(1 + \frac{1}{C_M})^{C_m} - 1}\right)\right.\\
&\quad \left.+ \frac{2 \lceil \sqrt{\Delta A_M  C_m} \rceil T_{\text{OFF}}}{C_m} \right] \mathrm{OPT}(\mathcal{I}),	
\end{align*}
for all possible  instances~$\mathcal{I}$. Moreover, if $\Delta A_M$ and $T_{\text{OFF}}$ is finite, then as the unit  cost $C_u$ scales to infinity, the ratio  with respect to the optimum approaches $\frac{e^{1/R}}{e^{1/R} - 1}$.
\end{theorem}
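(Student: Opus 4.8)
The plan is to mirror the period-decomposition argument from the proof of Theorem~\ref{theorem:competitive-ratio}, while carefully accounting for the extra activations introduced by the multiple-increment mechanism of Alg.~\ref{lp-alg-non}. First I would fix an instance $\mathcal{I}$ and let an optimal offline algorithm clear the virtual queue in slots $t_1 < \cdots < t_n$, each of which must now be a virtual ON slot since clearing is permitted only when $U(t)=1$. Setting $t_0 = 0$ and $t_{n+1} = T$, I partition the timeline into the same $n+1$ periods and write $J^*(k) = H^*(k) + C(t_k)$ for the offline cost in period~$k$ (holding cost plus the clearing cost at $t_k$), exactly as before. The per-activation cost increment is still controlled by the quantity in Eq.~\eqref{proof:eq:increment}, which is at most $1 + 1/\theta$, so the whole difficulty is shifted onto counting the activations.

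Next I would count the activations contributed by the virtual packets arriving in period~$k$, splitting them into three groups: (i) the \emph{ordinary} activations on the non-final slots of the period, whose number is again exactly $H^*(k)$ as in Theorem~\ref{theorem:competitive-ratio}; (ii) the activations generated by the inner \textbf{For} loop of Line~\ref{lp-alg-non:all-update}, which retroactively processes the preceding OFF slots — since each virtual ON slot triggers at most $T_{\text{OFF}}$ inner iterations per active packet, and Lemma~\ref{lemma:max-num-packet} caps the number of simultaneously active packets at $2\lceil\sqrt{\Delta A_M C_m}\rceil$, this group contributes at most $2\lceil\sqrt{\Delta A_M C_m}\rceil T_{\text{OFF}}$ extra activations; and (iii) the activations occurring from slot~$t_k$ onward, bounded via the analog of Lemma~\ref{lemma:max-iteration}. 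The crucial tool here is Lemma~\ref{lemma:x-upper-bound}, which upper-bounds the multiplicative growth of $\sum_{\tau=T_i}^{\infty} x(\tau)$ per activation; this is precisely what lets me convert the compounded retroactive increments into an inflated exponent rather than an uncontrolled blow-up.

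Combining the three groups, the per-period analysis then produces the inflated multiplicative factor $(1 + 1/C_m)^{1 + 2\lceil\sqrt{\Delta A_M C_m}\rceil T_{\text{OFF}}}(1 + 1/\theta)$ — the exponent growing to absorb the compounded inner-loop increments — together with an additive overhead $\frac{2\lceil\sqrt{\Delta A_M C_m}\rceil T_{\text{OFF}}}{C_m}$ stemming from the extra activations of group~(ii). Summing the resulting per-period bounds over $k$ (the inequality also holding for $k = n+1$) and substituting $\mathrm{OPT}(\mathcal{I}) = \sum_{k=1}^{n+1} J^*(k)$ and $\theta = (1 + 1/C_M)^{C_m} - 1$ yields the stated finite-$C_m$ bound. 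For the asymptotic claim, I would send $C_u \to \infty$ with $R = \mathcal{E}_M/\mathcal{E}_m$, $\Delta A_M$, and $T_{\text{OFF}}$ all fixed: then $1 + 1/\theta \to e^{1/R}/(e^{1/R}-1)$ exactly as in Theorem~\ref{theorem:competitive-ratio}, while both $(1 + 1/C_m)^{1 + 2\lceil\sqrt{\Delta A_M C_m}\rceil T_{\text{OFF}}} \to 1$ and $\frac{2\lceil\sqrt{\Delta A_M C_m}\rceil T_{\text{OFF}}}{C_m} \to 0$, since $\sqrt{\Delta A_M C_m}/C_m = \sqrt{\Delta A_M/C_m} \to 0$. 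The limit therefore collapses to $e^{1/R}/(e^{1/R}-1)$.

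The main obstacle, I expect, lies in groups~(ii) and~(iii). Because the multiple-increment mechanism makes $\sum_\tau x(\tau)$ grow multiplicatively faster than in the saturated case, I must simultaneously charge the cost of the retroactive increments without double-counting them against $H^*(k)$, and verify that each packet is still cleared after a controlled number of activations despite the accelerated compounding. Lemma~\ref{lemma:x-upper-bound} is tailored for exactly this purpose, but using it to cleanly separate the contribution into an inflated multiplicative factor versus an additive term — while correctly tying the extra-activation count to both $T_{\text{OFF}}$ and the active-packet bound of Lemma~\ref{lemma:max-num-packet} — is the delicate bookkeeping that drives the entire argument.
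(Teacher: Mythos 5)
Your skeleton (period decomposition, Lemma~\ref{lemma:x-upper-bound} for the compounding, Lemma~\ref{lemma:max-num-packet} combined with $T_{\text{OFF}}$, and the limiting argument) has the right ingredients, but the bookkeeping---which you yourself flag as the crux---is assembled incorrectly, and the errors are structural rather than cosmetic. Your opening claim that the per-activation increment ``is still controlled by Eq.~\eqref{proof:eq:increment}, which is at most $1+1/\theta$'' is false, and its failure is precisely the new difficulty of the intermittent case. In Alg.~\ref{lp-alg-non}, a retroactive activation at an ON slot~$t$ (inner iteration $t'$) is paired with a variable $z_i(t')$ that was fixed during the OFF period, i.e., at the value $1-\sum_{\tau=T_i}^{t}x(\tau)$ frozen at the start of slot~$t$, whereas the matching $x(t)$-increment is computed from the \emph{current} $\sum_{\tau=T_i}^{t}x(\tau)$, already inflated by earlier activations inside the same ON slot. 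The cancellation that yielded $1+1/\theta$ in Theorem~\ref{theorem:competitive-ratio} therefore leaves a residual equal to the within-slot growth of $\sum_{\tau}x(\tau)$; the paper bounds this residual by Lemma~\ref{lemma:x-upper-bound} with at most $2\lceil\sqrt{\Delta A_M C_m}\rceil T_{\text{OFF}}$ activations per ON slot, so the inflation enters as a \emph{per-activation} factor $(1+\frac{1}{\theta})(1+\frac{1}{C_M})^{2\lceil\sqrt{\Delta A_M C_m}\rceil T_{\text{OFF}}}$, not as a correction to the activation count. (This per-slot count also requires first arguing that one may assume no virtual packet arrives during OFF slots---the holding cost over an OFF gap is common to every algorithm, including the offline optimum, and can be removed---a reduction your proposal omits and without which Lemma~\ref{lemma:max-num-packet} does not cap the number of active packets at the ON slot.)

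Your activation counts are also wrong. Retroactive (inner-loop) activations over an entire period are \emph{not} bounded by $2\lceil\sqrt{\Delta A_M C_m}\rceil T_{\text{OFF}}$: every ON slot of the period can trigger up to that many, and a period may contain many ON slots. In the paper's accounting these activations are never counted separately; each one consumes a distinct pair $(i,t')$ during which the offline algorithm also holds packet~$i$, so before slot~$t_k$ they are absorbed into $H^*(k)$, and from slot~$t_k$ onward into the $\lceil C_m\rceil$ bound of Lemma~\ref{lemma:max-iteration}. Consequently, the additive term $2\lceil\sqrt{\Delta A_M C_m}\rceil T_{\text{OFF}}/C_m$ does not stem from ``extra activations'' at all: it is the contribution of those $z_i(t')$ whose paired inner iteration never activates (because the packet's cumulative clearing probability reached~$1$ earlier within that ON slot), counted from slot~$t_k$ onward---at most $2\lceil\sqrt{\Delta A_M C_m}\rceil$ surviving packets, each with at most $T_{\text{OFF}}$ pending iterations, each contributing at most~$1$. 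With your grouping, the cost before $t_k$ is priced at only $1+1/\theta$ per activation (too small) while the retroactive activations are undercounted, so the stated finite-$C_m$ bound cannot be reassembled from your three groups; your asymptotic paragraph is fine, but only once that inequality is actually established.
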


\begin{proof}
(Sketch)  We follow the proof of Theorem~\ref{theorem:competitive-ratio}.  Fix a period $k\in\{1,\dots,n\}$ and a virtual ON slot $t$ in that period. We bound the increment of the objective value in Eq.~\eqref{lp-non:objective} in slot~$t$  by considering how  Alg.~\ref{lp-alg-non} adjusts $x(t)$ and the matching $z$-variables, where we use the notation $\sum_{\tau=0}^{t}x(\tau) \big|_{\text{condition}}$ to represent the value of $\sum_{\tau=T_i}^{t}$ under the specific condition:
\begin{enumerate}
	\item If a virtual packet $i$  arrives before $\hat{t}$ and activates in iteration $t'$ of Line~\ref{lp-alg-non:all-update} in slot~$t$, then Line~\ref{lp-alg-non:x} increases $x(t)$ by
\begin{align*}
\frac{1}{C_M}\left(\sum_{\tau=T_i}^t x(\tau)\bigg|_{\text{before the activation}}\right) + \frac{1}{\theta C_M}.	
\end{align*}
However, the paired $z_i(t')$ was already set to be \mbox{$1-\sum_{\tau=T_i}^{t'}x(\tau)$}  in a slot \mbox{$t' \leq t$}. Because $x(\tau)$ does not change (for all possible $\tau$) over the virtual OFF period until slot~$t$, we have 
\begin{align*}
z_i(t')=1-\left(\sum_{\tau=T_i}^t x(\tau)\bigg|_{\text{start of slot~$t$}}\right).	
\end{align*}

Hence, the increment of the objective value  due to the adjustment of $x(t)$ from virtual packet~$i$ in iteration~$t'$ and the paired $z_i(t')$ is
\begin{align*}
&C(t)\left[\frac{1}{C_M}\left(\sum_{\tau=T_i}^t x(\tau)\Big|_{\text{before the activation}}\right) + \frac{1}{\theta C_M}\right]\nonumber\\
&\quad+ 1-\left(\sum_{\tau=T_i}^t x(\tau)\bigg|_{\text{start of slot~$t$}}\right)\nonumber\\
&\le 1+\frac{1}{\theta} \nonumber\\
&+ \left(\sum_{\tau=T_i}^t x(\tau)\bigg|_{\text{before the activation}}\right) - \left(\sum_{\tau=T_i}^t x(\tau)\bigg|_{\text{start of slot~$t$}}\right).
\end{align*}

By Lemma~\ref{lemma:max-num-packet}, 
we can show that there are at most $2 \left\lceil \sqrt{\Delta A_M C_m} \right\rceil T_{\text{OFF}}$ activations from the start of slot~$t$ until the considered activation. By Lemma~\ref{lemma:x-upper-bound}, we have
\begin{align*}
&\left(\sum_{\tau=T_i}^t x(\tau)\big|_{\text{before the activation}}\right) - \left(\sum_{\tau=T_i}^t x(\tau)\bigg|_{\text{start of slot~$t$}}\right)\\
& \le\left(1+\frac{1}{\theta}\right)\left[\left(1+\frac{1}{C_M}\right)^{2 \left\lceil \sqrt{\Delta A_M C_m} \right\rceil T_{\text{OFF}}}-1\right],	
\end{align*}
which gives the  bound on the increment of the objective value:
\begin{align}
\left(1+\frac{1}{\theta}\right)\left(1+\frac{1}{C_M}\right)^{2\lceil \sqrt{\Delta A_M C_m}\rceil T_{\text{OFF}}}. \label{eq:case1-final}
\end{align}

\item  If a virtual packet $i$  arrives before $\hat{t}$, meets the condition in Line~\ref{lp-alg-non:for} in slot~$t' \leq t$, but does not activate in iteration~$t'$ of Line~\ref{lp-alg-non:all-update} in slot~$t$, then because the paired $z_i(t')$ was still set to $1 - \sum_{\tau=T_i}^{t'} x(\tau)$ in slot $t'$, it also contributes at most $1$ to the objective value in that iteration.

\item If a virtual packet $i$  arrives in slot~$t$ and activates in slot $t$, then by the proof of Theorem~\ref{theorem:competitive-ratio} the increment from $x(t)$ in Line~\ref{lp-alg-non:else:x} and its paired $z_i(t)$ in Line~\ref{lp-alg-non:z} is $1+(1/\theta)$,  which is also bounded above by Eq.~\eqref{eq:case1-final}.
\end{enumerate}

Next, we show that, for the virtual packets arriving in the period, there are at most $\lceil C_m\rceil$ Case~1 and Case~3 increments since slot~$t_k$, and at most $2\lceil \sqrt{\Delta A_M C_m}\rceil T_{\text{OFF}}$ Case~2 increments since slot~$t_k$. Then, we can  rewrite Eq.~(\ref{eq:theorem:competitive-ratio-1}) as
\begin{align*}
J(k)
&\le \left(1+\frac{1}{\theta}\right)
\left(1+\frac{1}{C_m}\right)^{2 \lceil \sqrt{\Delta A_M  C_m} \rceil T_{\text{OFF}}}
\left(H^*(k)+ \lceil C_m \rceil\right)\\
&\quad + 1 \cdot \left(2 \lceil \sqrt{\Delta A_M  C_m} \rceil T_{\text{OFF}}\right).
\end{align*}
Finally, following the proof of Theorem~\ref{theorem:competitive-ratio} yields the desired result. See Appendix~\ref{appendix:theorem:competitive-ratio-non} for details. 
\end{proof}

Compared with Theorem~\ref{theorem:competitive-ratio}, 
the competitive ratio in Theorem~\ref{theorem:competitive-ratio-non} 
scales that in Theorem~\ref{theorem:competitive-ratio} 
by a factor of $(1+(1/C_m))^{2 \lceil \sqrt{\Delta A_M  C_m} \rceil T_{\text{OFF}}}$ 
(which approaches~1 as $C_u \to \infty$), 
and includes an additional term $(2 \lceil \sqrt{\Delta A_M  C_m} \rceil T_{\text{OFF}})/C_m$ 
(which also approaches~0 as $C_u \to \infty$). 
Thus, even under intermittent update opportunities, 
Alg.~\ref{lp-alg-non} asymptotically achieves the lower bound in Lemma~\ref{lemma:converse}.

Next, we show that if the adversary is so powerful that it can also set  $\Delta A_M$ or $T_{\text{OFF}}$ arbitrarily large, then no competitive ratio can be guaranteed.

\begin{lemma} \label{lemma:inf-t-c}
If either $\Delta A_M$ or $T_{\text{OFF}}$ is unbounded, then no online algorithm can achieve a finite competitive ratio as $C_m \to \infty$.
\end{lemma}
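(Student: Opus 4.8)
The plan is to prove both cases by reduction to an indistinguishable two–scenario (``twin'') argument in the spirit of Lemma~\ref{lemma:converse}, but with the penalty amplified so that the forced loss grows with $C_m$ instead of saturating at a constant. In each construction I fix the initial age $A_0=1$ and let an online algorithm commit a single clearing decision in an ON slot (slot~$1$); I exploit the cost-agnostic property so the algorithm cannot tell an expensive instance from a cheap one. I build two twins that agree in everything observable up to the slot-$1$ decision, yet in which the optimum behaves oppositely: in twin~A clearing is wasteful (the optimum never clears), while in twin~B not clearing is catastrophic (the optimum clears once, cheaply). Writing $p$ for the algorithm's probability of clearing in slot~$1$, the competitive ratio is at least $\min_p\max\{r_A(p),r_B(p)\}$, and the whole point is to choose the parameters so this quantity diverges as $C_m\to\infty$.

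For unbounded $T_{\text{OFF}}$ I reuse the instance of Lemma~\ref{lemma:converse}, namely $A_0=1$ and $\Delta A\equiv 0$, so the two twins carry no arrivals and are indistinguishable even if the current increment is observable. Twin~A sets $C(1)=C_M$ and ends at $T=1$: the optimum holds the single unit of age, $\mathrm{OPT}=1$, whereas clearing costs $C_M$. Twin~B sets $C(1)=C_m$ and then imposes $L$ consecutive OFF slots before termination: the optimum clears once in slot~$1$ at cost $C_m$, while an algorithm that does not clear is stranded and carries the unit of age through all $L+1$ slots. Hence $r_A(p)\ge pC_M=pRC_m$ and $r_B(p)\ge(1-p)(L+1)/C_m$; taking $L+1=RC_m^{2}$ balances the two at $p=1/2$ and yields a ratio at least $\tfrac12 RC_m\to\infty$. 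Here $\Delta A_M=0$ stays bounded while $T_{\text{OFF}}\to\infty$, isolating the OFF horizon.

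For unbounded $\Delta A_M$ I keep the channel ON throughout (so $T_{\text{OFF}}$ is bounded) and move the amplification from the OFF length to the arrival size. Twin~A uses $\Delta A(1)=0$, $C(1)=C_M$, $T=1$, so the optimum holds age $A_0=1$ and clearing again wastes $C_M$. Twin~B uses a single huge arrival $\Delta A(1)=M$ with $C(1)=C_m$ and $T=1$; since $M$ dwarfs any clearing cost, the optimum clears in slot~$1$ with $\mathrm{OPT}=C_m$, whereas not clearing pays age $A_0+M\approx M$. Thus $r_A(p)\ge pC_M=pRC_m$ and $r_B(p)\ge(1-p)M/C_m$, and setting $M=RC_m^{2}$ (so $\Delta A_M=M\to\infty$) balances the two at $p=1/2$, again giving a ratio at least $\tfrac12 RC_m\to\infty$.

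The step I expect to be the main obstacle is the indistinguishability in the $\Delta A_M$ construction, where the twins necessarily differ in the slot-$1$ increment ($0$ versus $M$): a single increment cannot at once be small enough to make holding beat paying $C_M$ in twin~A and large enough to make non-clearing catastrophic in twin~B, so the blow-up genuinely needs distinct increments. The argument therefore relies on the algorithm committing $d(1)$ before the slot-$1$ increment is charged (equivalently, the adversary may fix $\Delta A(1)$ after observing $d(1)$); I would state this timing explicitly and argue it is the physically meaningful regime, since per-slot staleness is only realized as the slot elapses. The $T_{\text{OFF}}$ construction needs no such assumption, which is why I present it first. Finally, to cover randomized algorithms I would invoke Yao's principle: placing any nondegenerate prior over the two twins collapses the decision to a single clear/no-clear choice whose optimal randomization is exactly the $p$ above, so the $\Theta(RC_m)$ bound on the expected competitive ratio survives and diverges as $C_m\to\infty$, establishing that no finite competitive ratio is attainable.
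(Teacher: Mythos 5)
Your first construction (unbounded $T_{\text{OFF}}$) is correct and is essentially the paper's own proof of that half: the paper likewise plays an immediate-termination instance ($T=1$) against a long-OFF-tail instance ($\mathbf{U}=(1,0,\cdots,0)$, $T=C_m^2+1$) with $\boldsymbol{\Delta A}\equiv 0$ and a single clearing decision in slot~1. Your version is in fact slightly tighter: by taking the maximum of the two ratios over the common clearing probability $p$, you get $\max\{pRC_m,(1-p)RC_m\}\ge RC_m/2$ for \emph{every} $p$, whereas the paper's dichotomy ``$p(1)<1$ versus $p(1)=1$'' is loose when $p(1)$ approaches $1$ with $C_m$ (its case-1 ratio $p(1)+(1-p(1))C_m$ need not diverge if $p(1)=1-O(1/C_m)$). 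Two minor remarks: the paper keeps all costs equal to $C_m$, so it does not need the cost-agnostic property that your $C_M$-versus-$C_m$ twins exploit (your use of it is nevertheless legitimate, since the paper's online algorithms are cost-agnostic by definition); and the appeal to Yao's principle is unnecessary---your direct argument already covers randomized algorithms, because indistinguishability at slot~1 forces the same $p$ on both twins.

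Your second construction (unbounded $\Delta A_M$) contains the gap you yourself flagged, and under the paper's model it is fatal rather than cosmetic. The paper's online algorithms explicitly observe the current age increment $\Delta A(t)$ and $U(t)$ at the beginning of slot~$t$ before choosing $d(t)$; with that information model your twins differ in an observable quantity ($\Delta A(1)=0$ versus $M$), so the algorithm ``clear iff the observed increment is huge'' achieves ratio $\approx 1$ on both twins and no lower bound follows. Re-timing the decision so that $d(1)$ is committed before $\Delta A(1)$ is revealed (equivalently, letting the adversary fix $\Delta A(1)$ after seeing $d(1)$) proves a statement about a strictly weaker information model, not Lemma~\ref{lemma:inf-t-c}. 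The paper avoids exactly this trap by deferring the punishment one slot: it uses $T=2$, $\boldsymbol{\Delta A}=(0,\,C_m^2-2)$, $\mathbf{U}=(1,0)$, so the decision-slot observables are identical under both of the adversary's options, and the huge increment lands in a slot where it cannot be cleared. Be aware, however, that even the paper's accounting charges the slot-2 increment only to strategies holding \emph{stale} information (it credits the slot-1 updater and the offline optimum with total cost $C_m$); under the literal additive dynamics of Eq.~\eqref{eq:age-dynamic}, that increment would hit every strategy---including the offline optimum---and the ratio would wash out to $1$. So repairing your second part requires more than moving the increment one slot later: the huge staleness must simultaneously be invisible at the decision slot, uncleanable by the online algorithm, and avoidable by the offline optimum, which is what the paper's implicitly action-dependent (aging-function) reading of $\boldsymbol{\Delta A}$ provides.
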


\begin{proof}
See Appendix~\ref{appendix:lemma:inf-t-c} for details.  
\end{proof}

\subsection{With ML advice}  \label{subsection:non-saturated-ml}
This section further incorporates ML advice. 
To this end, we modify Alg.~\ref{lp-alg-ml} by applying the multiple increment mechanism 
for  previous virtual OFF slots, as in Alg.~\ref{lp-alg-non}. 
By extending the proof of Theorem~\ref{theorem:competitive-ratio-non} 
to revise those of Theorems~\ref{theorem:robustness} and~\ref{theorem:consistency}, 
we can obtain the same scaling and additional terms (as in Theorem~\ref{theorem:competitive-ratio-non}), 
yielding the same asymptotic results as in Theorems~\ref{theorem:robustness} and~\ref{theorem:consistency}.


\section{Numerical studies}\label{section:sim}

The previous sections established that our proposed algorithms achieve the best
possible competitiveness and the optimal consistency–robustness trade-off in adversarial environments. In this section, we complement the theoretical
results by evaluating the algorithms in stationary stochastic environments
through numerical experiments.

We adopt a setting similar to \cite{hsu2019scheduling}, which derived optimal offline scheduling policies under stationary assumptions. We simulate a horizon of $T=10000$ slots. Update opportunities follow a Bernoulli process with rate~$0.7$. To model update costs with memory, we use a two-state Markov chain (as in the Gilbert–Elliott model) with states $\mathsf{L}$ (low cost) and $\mathsf{H}$ (high cost). Let $S(t) \in \{\mathsf{L},
\mathsf{H}\}$ denote the state in slot~$t$, and let $\mathrm{tr}_p$ and
$\mathrm{tr}_q$ be the transition probabilities from $\mathsf{L}$ to
$\mathsf{H}$ and from $\mathsf{H}$ to $\mathsf{L}$, respectively. 

Following \cite{hsu2019scheduling}, an optimal offline scheduling policy in the stationary setting can be characterized by two age thresholds: if $S(t)=\mathsf{H}$, the device transmits when the age reaches a threshold $T_{\mathsf{H}}$; if $S(t)=\mathsf{L}$, it  does when the age reaches a threshold $T_{\mathsf{L}}$. We compute the optimal pair of $T_{\mathsf{H}}$ and  $T_{\mathsf{L}}$ via exhaustive search to minimize the total cost in Eq.~\eqref{eq:cost}.

Next, we consider both a linear aging function in Section~\ref{subsection:sim-linear} and a nonlinear one in Section~\ref{subsection:sim-nonlinear}.

\subsection{Linear aging function} \label{subsection:sim-linear}
In this section, we consider a constant age increment process with $\Delta A(t)=1$ for all $t$.  We  evaluate the proposed online algorithm without ML in Section~\ref{subsection:sim1} and the ML-augmented version in Section~\ref{subsection:sim2}.

\subsubsection{Online scheduling algorithm without ML}\label{subsection:sim1}
In this section, we validate  Alg.~\ref{online-alg} (modified according to Alg.~\ref{lp-alg-non} to handle intermittent update opportunities). Figs.~\ref{fig:linear-online-0.2} and~\ref{fig:linear-online-0.8}  show the time-average cost (y-axis) for various values of $C_m$ (x-axis). In Fig.~\ref{fig:linear-online-0.2}. we set   $\mathrm{tr}_p=0.2$, $\mathrm{tr}_q=0.8$, resulting in longer stays in state~$\mathsf{L}$; in Fig.~\ref{fig:linear-online-0.8}, we set  $\mathrm{tr}_p=0.8$, $\mathrm{tr}_q=0.2$, resulting in longer stays in
state~$\mathsf{H}$.
Each figure has three
subfigures corresponding to $R=1$, $R=3$, and $R=5$, respectively.  Each subfigure plots five curves: ``Proposed'' (the proposed online algorithm without ML), ``Revised'' (a modified version described later), ``Greedy'' (a baseline policy described later), ``OPT'' (the offline optimum), and ``Theory'' (the upper bound from Theorem~\ref{theorem:online-alg} multiplied by OPT).  We observe that the proposed algorithm
performs significantly better in practice than the worst-case theoretical upper bound, especially as the update cost range increases.

\begin{figure}[!t]
    \centering
    \subfloat[$R=1$]{
        \includegraphics[width=.16\textwidth]{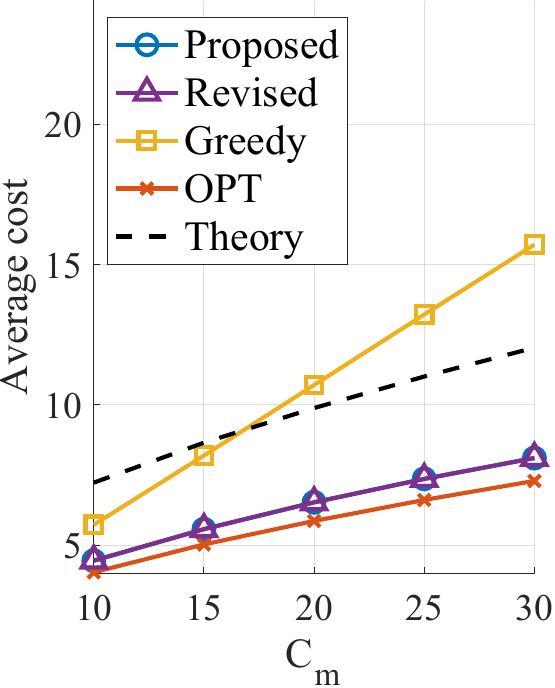}} 
            \subfloat[$R=3$]{
        \includegraphics[width=.16\textwidth]{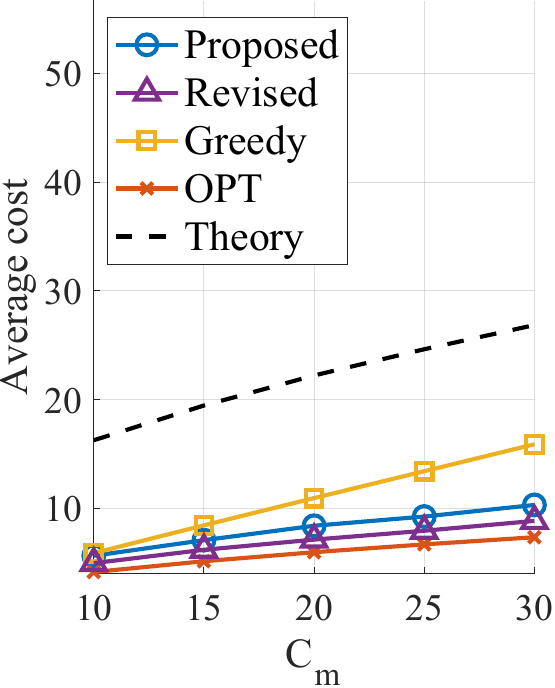}} 
            \subfloat[$R=5$]{
        \includegraphics[width=.16\textwidth]{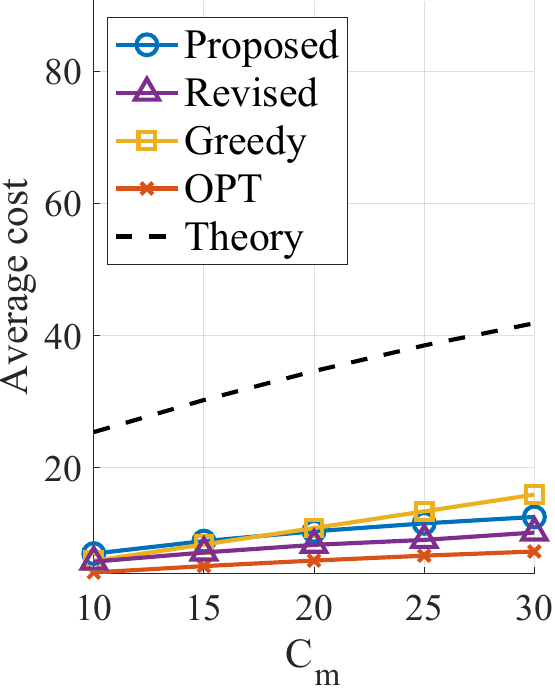}}
   \caption{Performance of online algorithms without ML under a linear aging
function, for transition probabilities $\mathrm{tr}_p = 0.2$ and
$\mathrm{tr}_q = 0.8$.}
    \label{fig:linear-online-0.2}
\end{figure}
\begin{figure}[!t]
    \centering
    \subfloat[$R=1$]{
        \includegraphics[width=.16\textwidth]{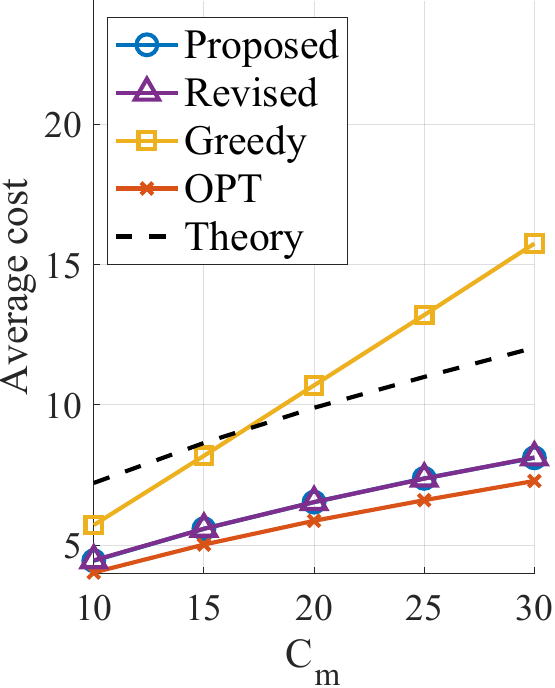}} 
            \subfloat[$R=3$]{
        \includegraphics[width=.16\textwidth]{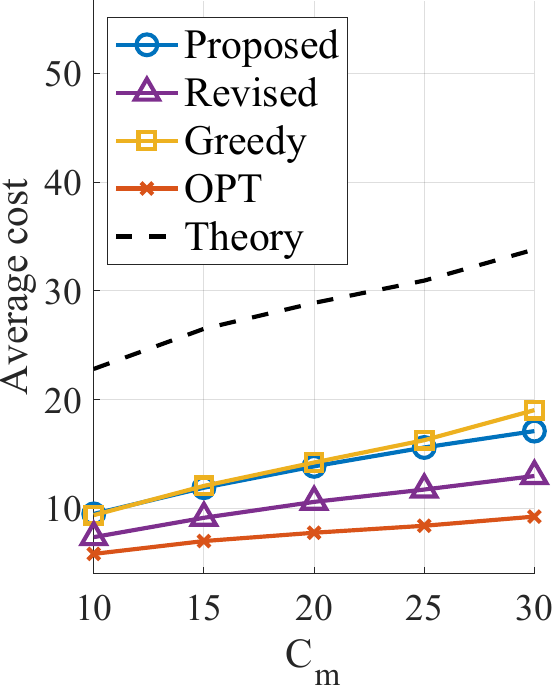}} 
            \subfloat[$R=5$]{
        \includegraphics[width=.16\textwidth]{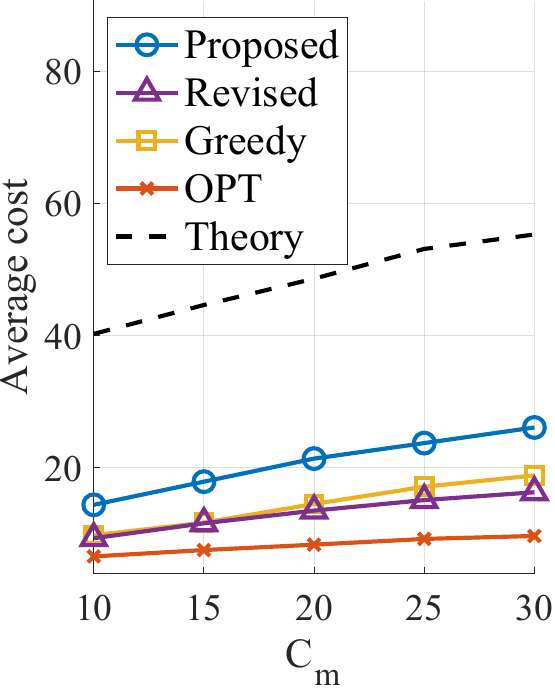}}
    \caption{Performance of online algorithms without ML under a linear aging
function, for transition
    probabilities tr$_p=0.8$ and tr$_q=0.2$.}
    \label{fig:linear-online-0.8}
\end{figure}

For comparison, we also simulate an online greedy algorithm (labeled ``Greedy'' in the figures) that myopically minimizes the current slot cost, i.e.,  it transmits in slot~$t$ if the update cost $C(t)$ is
less than the cost of waiting (i.e., $A(t)+1$). Note that while the greedy algorithm requires knowledge of the current update
cost in each slot, the proposed algorithm does not. From Figs.~\ref{fig:linear-online-0.2} and~\ref{fig:linear-online-0.8}, the proposed algorithm outperforms the greedy baseline except when the
system frequently enters the high cost state with large $C_M$
(i.e., in Fig.~\ref{fig:linear-online-0.8}(c)). This exception would be explained by Lemma~\ref{lemma:max-iteration}: the proposed algorithm must transmit before
activating $\lceil C_m\rceil$ times,  forcing overly frequent updates  when $C_M$ is large and occurs often, as in the environment of
Fig.~\ref{fig:linear-online-0.8}(c).

Although Alg.~\ref{online-alg} asymptotically achieves the optimal competitive ratio and thus serves as an achievability scheme for the lower bound, we observe that it may be too aggressive in such  stochastic environments. To remedy
this, we propose a revised version in which the constant $\theta$ in
Alg.~\ref{online-alg} (and Alg.~\ref{lp-alg}) is replaced by $(1 + (1/C_M))^{C_M} - 1$. By Lemma~\ref{lemma:max-iteration}, the revised algorithm  transmits before
activating $\lceil C_M\rceil$ times,  thereby reducing
the update frequency when $C_M$ is large.  From Figs.~\ref{fig:linear-online-0.2} and~\ref{fig:linear-online-0.8}
(labeled ``Revised'' in the figures), this revised algorithm consistently achieves the best empirical performance.

Given the revised algorithm's superior stochastic performance, we analyze its worst-case guarantees. By Lemma~\ref{lemma:max-iteration}, the virtual packets
arriving in period~$k$ can activate at most $\lceil C_M\rceil$ times from
slot~$t_k$ onward. Let \mbox{$\theta' = (1 + (1/C_M))^{C_M}-1$}. Hence, Eq.~\eqref{eq:theorem:competitive-ratio-1} becomes
\begin{align*}
J(k)
&\le \left(1 + \frac{1}{\theta'}\right)\left( H^*(k) + \lceil C_M\rceil \right) \\
&\le \frac{C_M + 1}{C_m} \left( 1 + \frac{1}{\theta'} \right) J^*(k).
\end{align*}
 As $C_u\to\infty$, we have
$(C_M+1)/C_m \to R$ and $\theta'\to e-1$, yielding an asymptotic competitive
ratio of $(e/(e-1)) R$. This remains order-optimal.

\subsubsection{Online scheduling algorithm with ML}\label{subsection:sim2}

\begin{figure}[!t]
    \centering
    \subfloat[$\epsilon=0$]{
        \includegraphics[width=.16\textwidth]{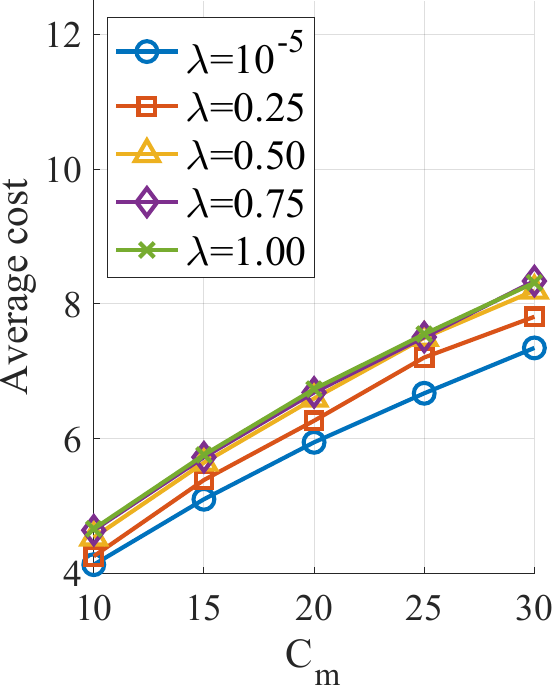}} 
            \subfloat[$\epsilon=0.5$]{
        \includegraphics[width=.16\textwidth]{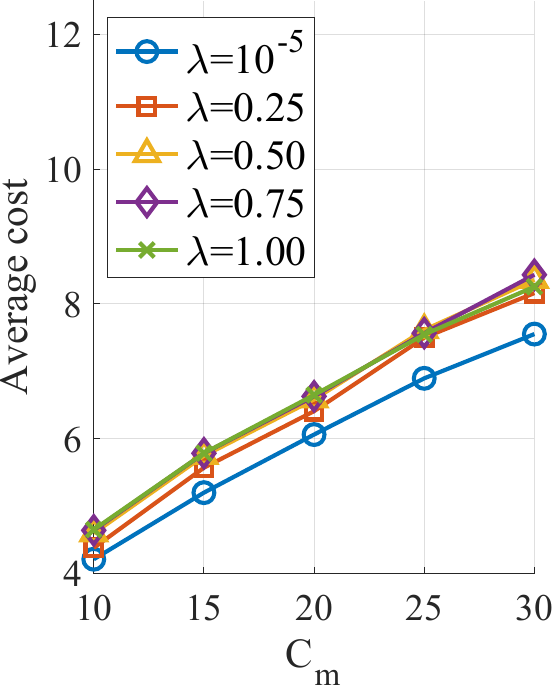}} 
            \subfloat[$\epsilon=1$]{
        \includegraphics[width=.16\textwidth]{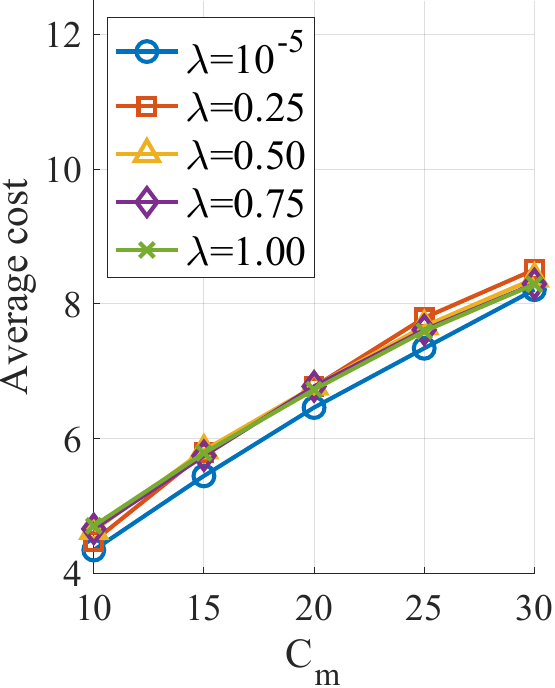}} 
        \hfill
            \subfloat[$\epsilon=1.5$]{
        \includegraphics[width=.16\textwidth]{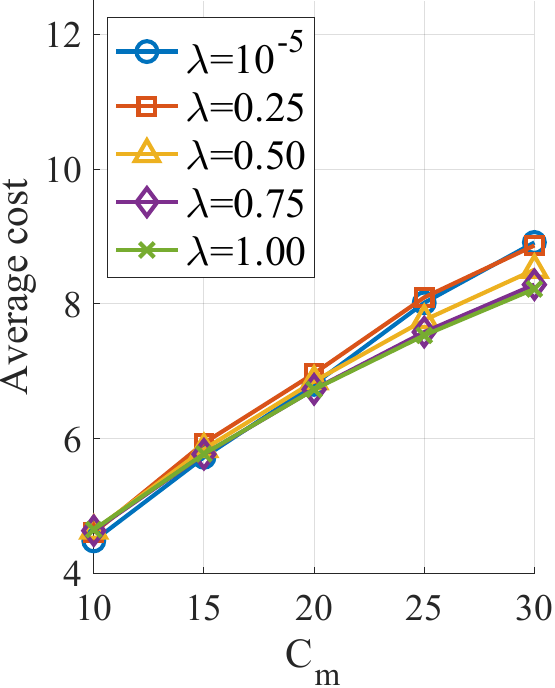}} 
            \subfloat[$\epsilon=2$]{
        \includegraphics[width=.16\textwidth]{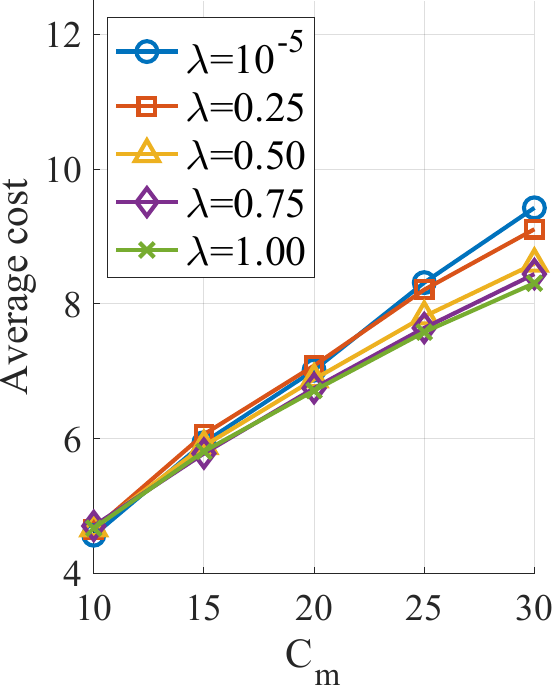}} 
            \subfloat[$\epsilon=2.5$]{
        \includegraphics[width=.16\textwidth]{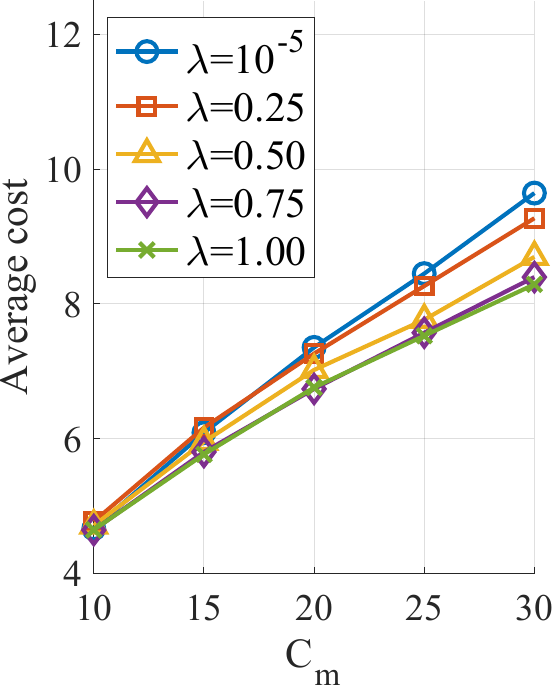}} 
    \caption{Performance of the proposed ML-augmented online algorithm under a linear aging
function, for transition probabilities tr$_p=0.2$ and tr$_q=0.8$.}
    \label{fig:linear-ml-0.2}
\end{figure}

\begin{figure}[!t]
    \centering
    \subfloat[$\epsilon=0$]{
        \includegraphics[width=.16\textwidth]{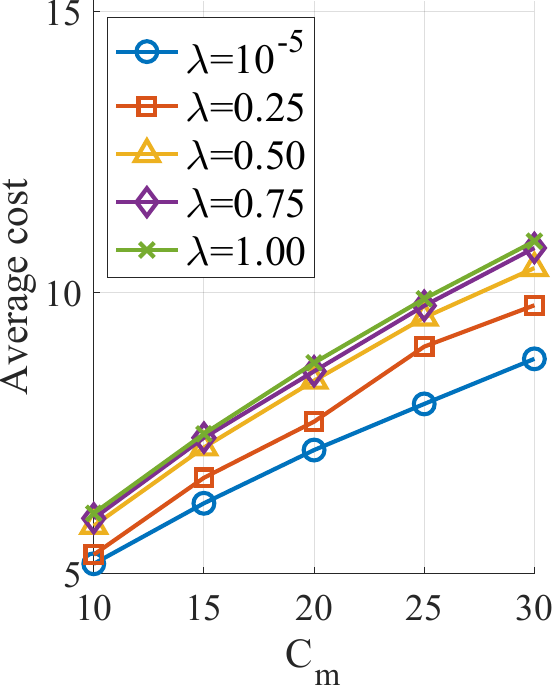}} 
            \subfloat[$\epsilon=0.5$]{
        \includegraphics[width=.16\textwidth]{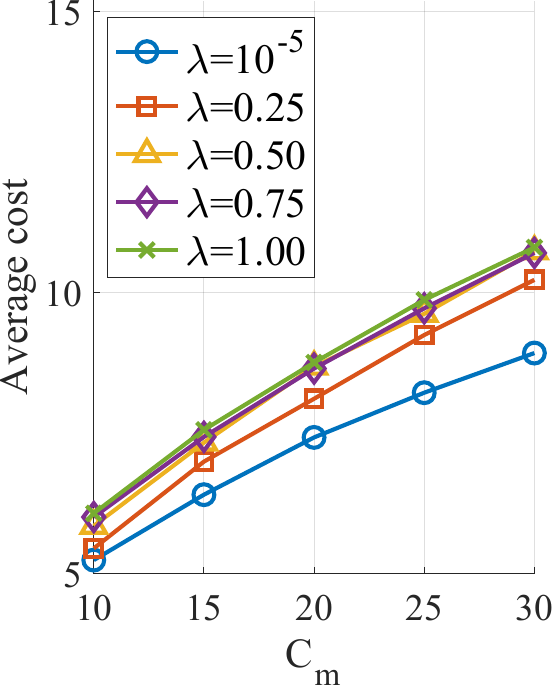}} 
            \subfloat[$\epsilon=1$]{
        \includegraphics[width=.16\textwidth]{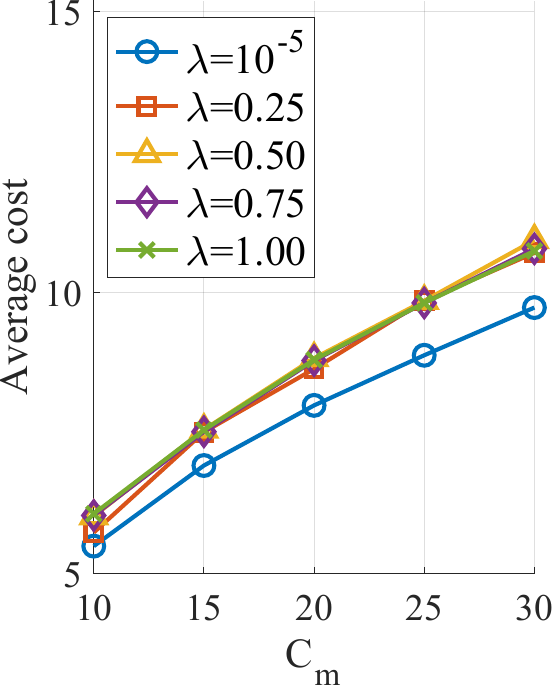}} 
        \hfill
            \subfloat[$\epsilon=1.5$]{
        \includegraphics[width=.16\textwidth]{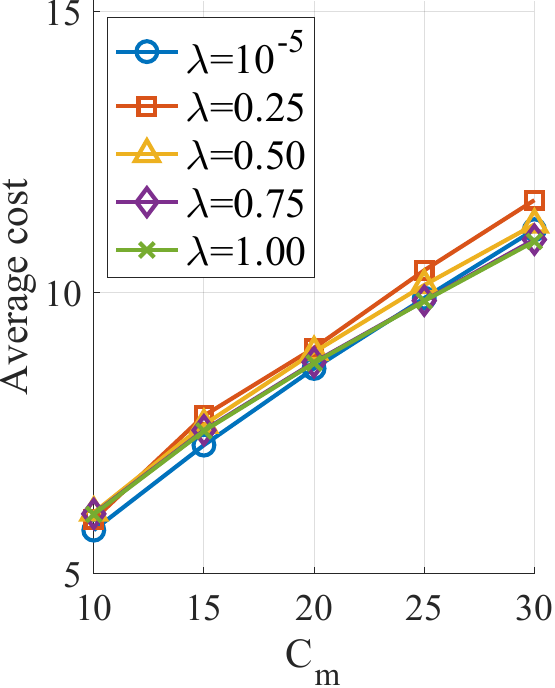}} 
            \subfloat[$\epsilon=2$]{
        \includegraphics[width=.16\textwidth]{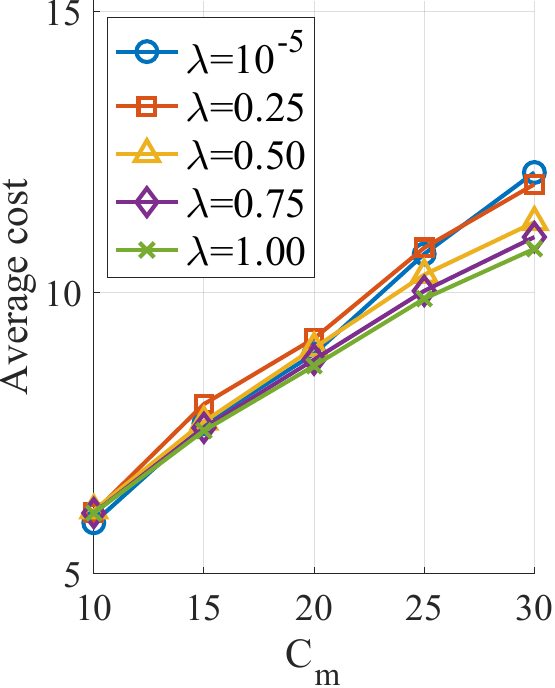}} 
            \subfloat[$\epsilon=2.5$]{
        \includegraphics[width=.16\textwidth]{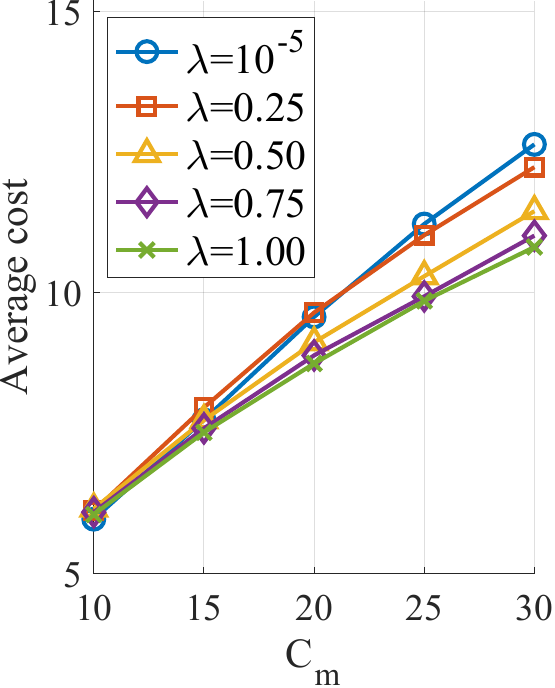}} 
    \caption{Performance of the proposed ML-augmented online algorithm under a linear aging
function, for transition probabilities tr$_p=0.8$ and tr$_q=0.2$.}
    \label{fig:linear-ml-0.8}
\end{figure}

In this subsection, we evaluate the benefit of incorporating ML advice. Using the same argument as in the previous section, we can show that replacing the constant $\theta$ in Alg.~\ref{lp-alg-ml} with the revised value
$(1 + (1/C_M))^{C_M} - 1$ achieves the robustness of $(e^{\lambda}/(e^{\lambda}-1))R$ and the consistency of $(e^{\lambda}/(e^{\lambda}-1))\lambda R$, as $C_u \to \infty$. The revised algorithm also preserves the order of the optimal consistency–robustness trade-off. Because the revised algorithm performs better in the stochastic environment as shown in the previous section, here we  evaluate its performance when augmented with ML advice.

Let $T^*_{\boldsymbol{\mathcal{M}}}(t)$ denote the offline optimal threshold in slot~$t$. To investigate imperfect ML advice with controllable errors, we model the ML-estimated threshold as $T_{\boldsymbol{\mathcal{M}}}(t) =
T^*_{\boldsymbol{\mathcal{M}}}(t) + \mathcal{N}$, where $\mathcal{N}$ is a
zero-mean Gaussian random variable with variance chosen such that
\[
\Pr\!\left[
\frac{\lvert T_{\boldsymbol{\mathcal{M}}}(t)
      - T^*_{\boldsymbol{\mathcal{M}}}(t)\rvert}
     {T^*_{\boldsymbol{\mathcal{M}}}(t)} \le \epsilon
\right]
= 0.95,
\]
i.e., with $95\%$ probability the relative error is within~$\epsilon$.
The ML advice is then $\mathcal{M}(t)=1$ if $A(t)\ge \max\{T_{\boldsymbol{\mathcal{M}}}(t), 1\}$
and $\mathcal{M}(t)=0$ otherwise.

Figs.~\ref{fig:linear-ml-0.2} (for tr$_p=0.2$ and tr$_q=0.8$) and
\ref{fig:linear-ml-0.8} (for tr$_p=0.8$ and tr$_q=0.2$) plot the time-average cost of the proposed online algorithm with ML for $\epsilon=0,0.5,1,1.5,2,2.5$. Each subfigure shows the results for $\lambda= 10^{-5}, 0.25, 0.5, 0.75, 1$. According to our simulations, the performance of completely following
the ML advice coincides with that of $\lambda = 10^{-5}$; hence, we do not plot it  for clarity. We observe that for small errors ($\epsilon=0,0.5,1$), completely following the ML advice yields the best performance; for large errors
($\epsilon=1.5,2,2.5$), completely ignoring the ML advice with $\lambda=1$
performs best. There also exists  a sharp transition between the two regimes in the stochastic setting, matching the threshold-type behavior by the asymptotic analysis  in the adversarial setting. 

\begin{figure}[!t]
    \centering
    \subfloat[$R=1$]{
        \includegraphics[width=.16\textwidth]{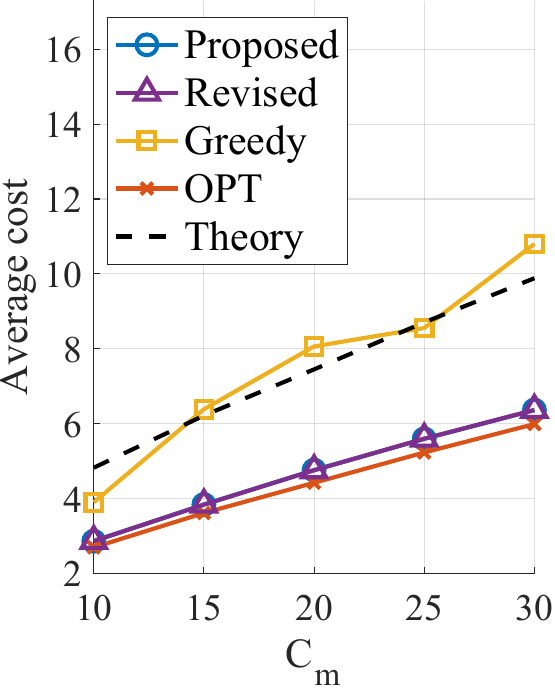}} 
            \subfloat[$R=3$]{
        \includegraphics[width=.16\textwidth]{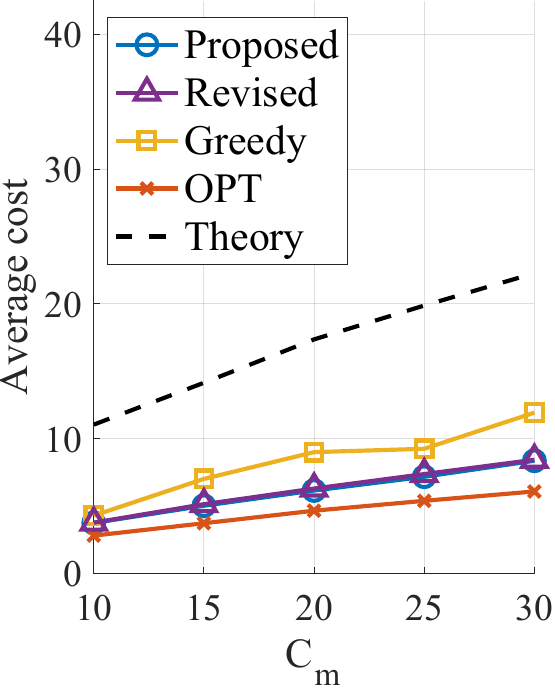}} 
            \subfloat[$R=5$]{
        \includegraphics[width=.16\textwidth]{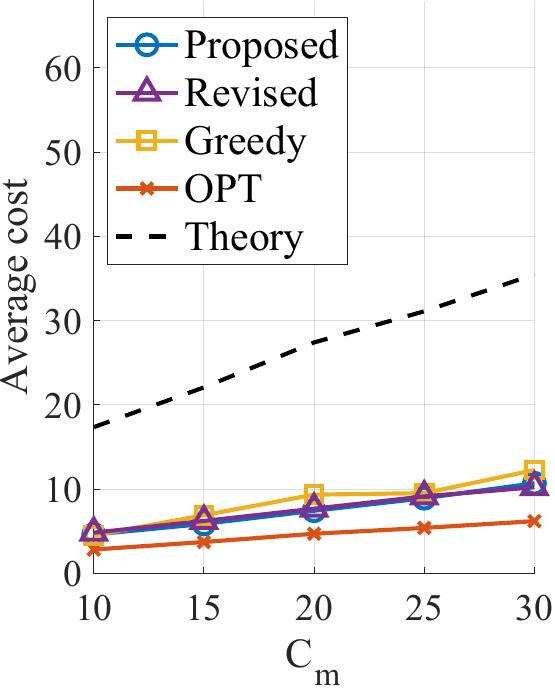}}
    \caption{Performance of online algorithms without ML under a nonlinear aging
function, for transition
    probabilities tr$_p=0.2$ and tr$_q=0.8$.}
    \label{fig:nonlinear-online-0.2}
\end{figure}
\begin{figure}[!t]
    \centering
    \subfloat[$R=1$]{
        \includegraphics[width=.16\textwidth]{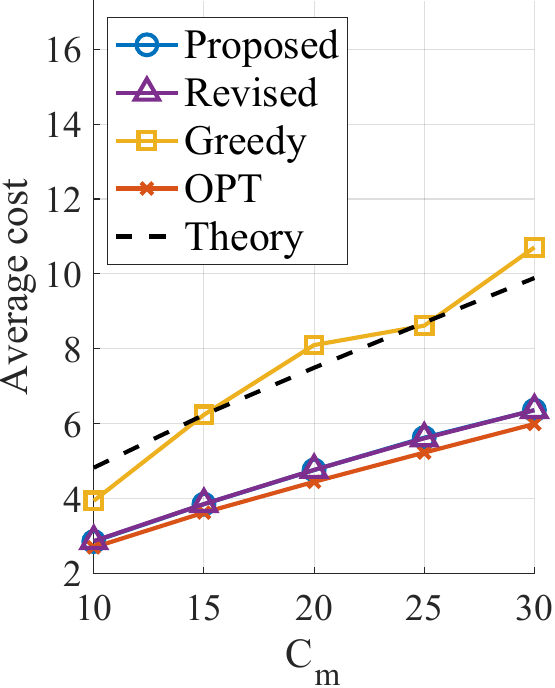}} 
            \subfloat[$R=3$]{
        \includegraphics[width=.16\textwidth]{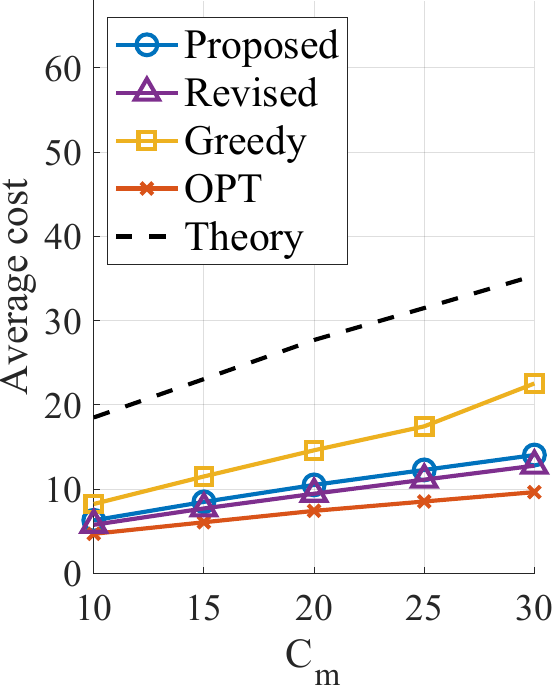}} 
            \subfloat[$R=5$]{
        \includegraphics[width=.16\textwidth]{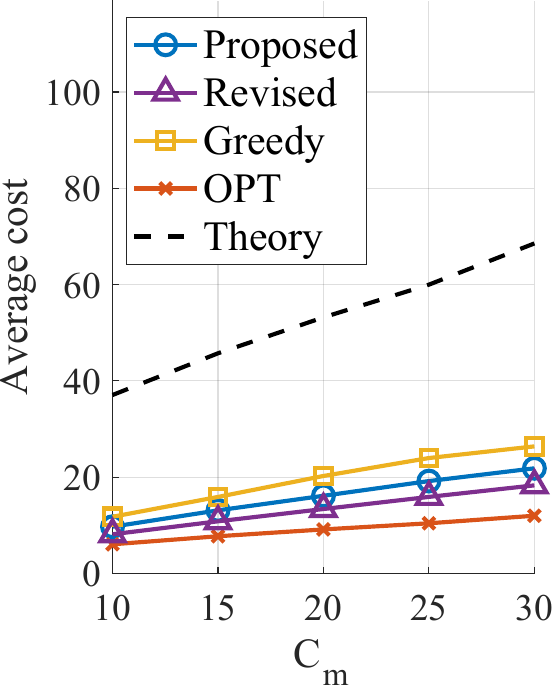}}
    \caption{Performance of online algorithms without ML under a nonlinear aging
function, for transition
    probabilities tr$_p=0.8$ and tr$_q=0.2$.}
    \label{fig:nonlinear-online-0.8}
\end{figure}

\subsection{Nonlinear aging function} \label{subsection:sim-nonlinear}

In this section, we consider a nonlinear aging function similar to that in
\cite{kosta2017age}. Specifically, if $x$ slots have elapsed since the most
recent update, then the age of information in the current slot is given by
$\lfloor e^{0.3x} \rfloor - 1$. Throughout this section, we fix $R = 2$.

Figs.~\ref{fig:nonlinear-online-0.2} (for $\mathrm{tr}_p=0.2$ and
$\mathrm{tr}_q=0.8$) and~\ref{fig:nonlinear-online-0.8} (for
$\mathrm{tr}_p=0.8$ and $\mathrm{tr}_q=0.2$) show the time-average cost of the online algorithms without ML. As in the linear aging case, both the proposed and the
revised algorithms outperform the baseline policy.

We also evaluate the revised algorithm with ML advice in
Figs.~\ref{fig:nonlinear-ml-0.2} (for $\mathrm{tr}_p=0.2$ and
$\mathrm{tr}_q=0.8$) and~\ref{fig:nonlinear-ml-0.8} (for
$\mathrm{tr}_p=0.8$ and $\mathrm{tr}_q=0.2$), for several values of the ML error
parameter~$\epsilon$. In Fig.~\ref{fig:nonlinear-ml-0.2}, when
$\epsilon = 1, 3$, completely following the ML advice yields the best
performance; when $\epsilon = 5, 7$, the performance is nearly identical for
all values of~$\lambda$;  when $\epsilon = 9, 11$, completely ignoring the
ML advice is optimal. Similarly, in Fig.~\ref{fig:nonlinear-ml-0.8}, when
$\epsilon = 1, 2$, completely following the ML advice is  optimal; when
$\epsilon = 3, 4$, the performance is nearly the same for all~$\lambda$; 
when $\epsilon = 5, 6$, completely ignoring the ML advice performs best.

These results exhibit the same qualitative behavior observed under the linear
aging case in the previous section: either blindly following the ML advice or completely ignoring it yields
near-optimal performance, while partially trusting the ML advice provides
little benefit.

\begin{figure}[!t]
    \centering
    \subfloat[$\epsilon=1$]{
        \includegraphics[width=.16\textwidth]{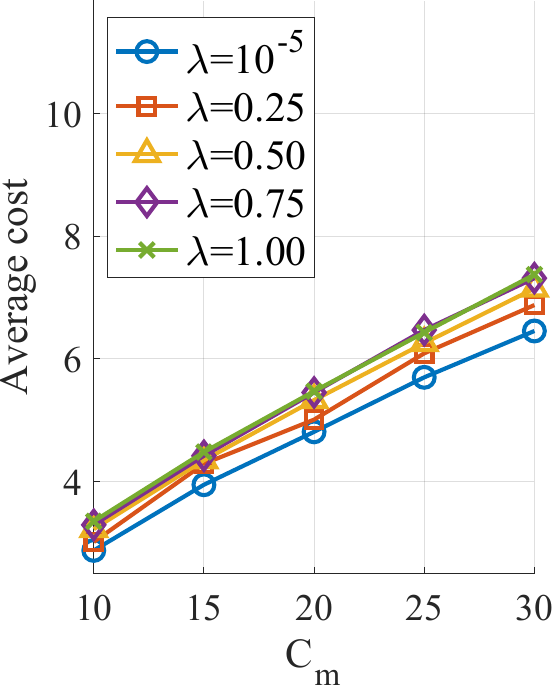}} 
            \subfloat[$\epsilon=3$]{
        \includegraphics[width=.16\textwidth]{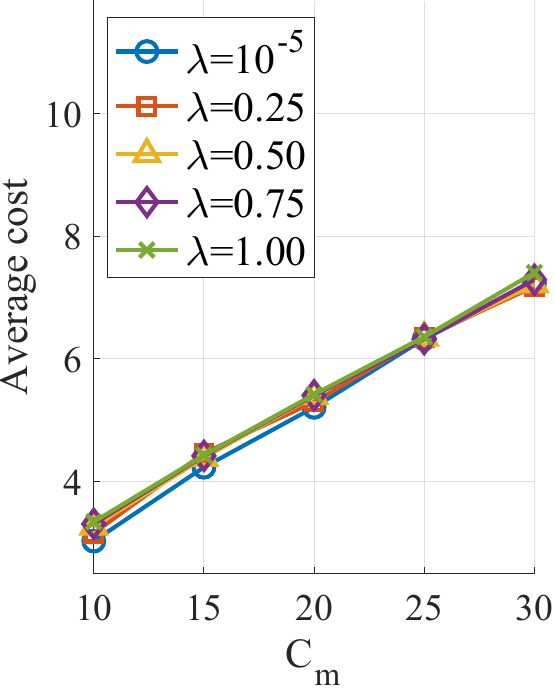}} 
            \subfloat[$\epsilon=5$]{
        \includegraphics[width=.16\textwidth]{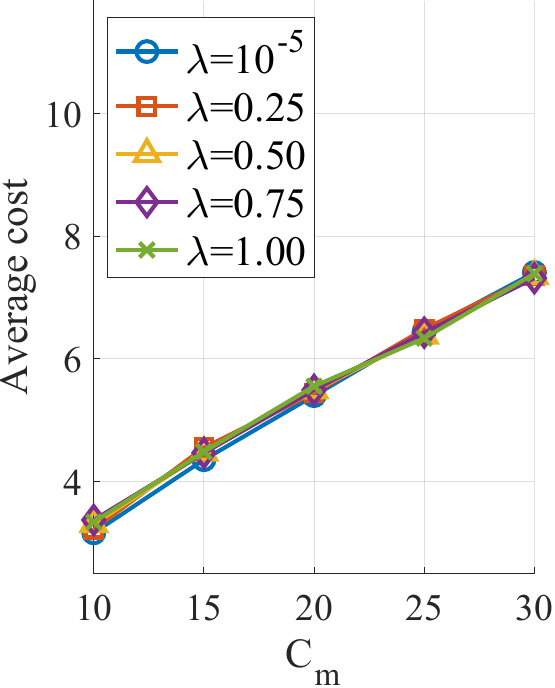}} 
        \hfill
            \subfloat[$\epsilon=7$]{
        \includegraphics[width=.16\textwidth]{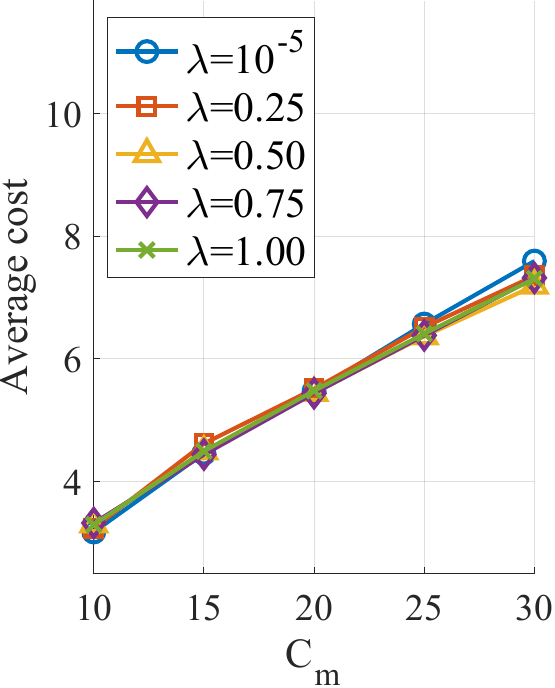}} 
            \subfloat[$\epsilon=9$]{
        \includegraphics[width=.16\textwidth]{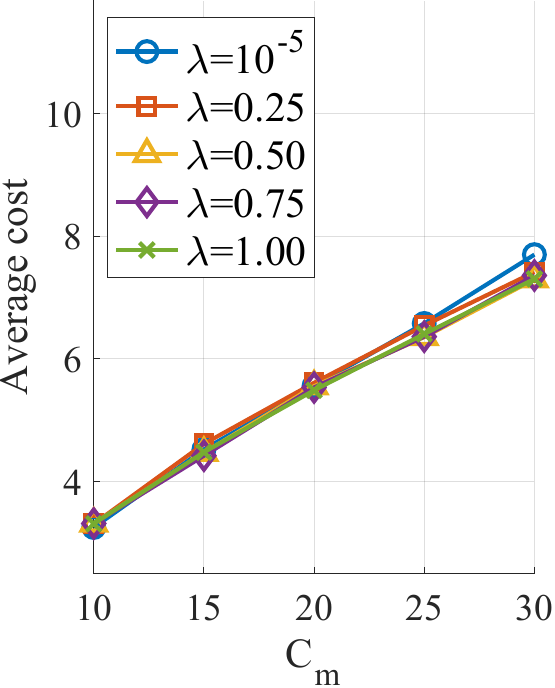}} 
            \subfloat[$\epsilon=11$]{
        \includegraphics[width=.16\textwidth]{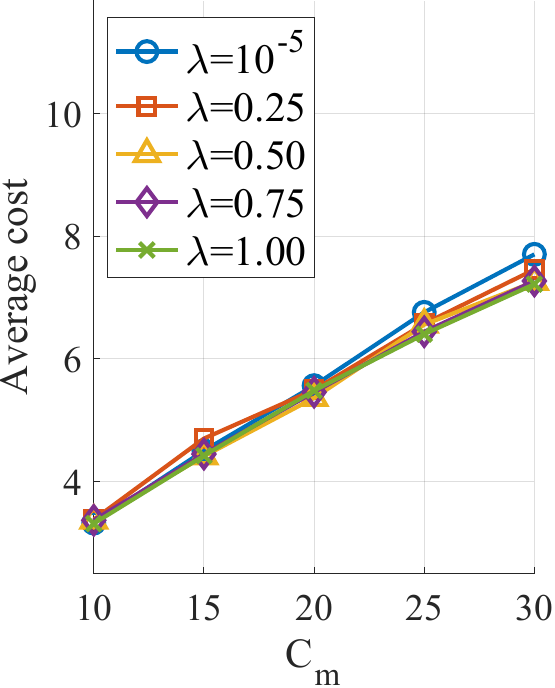}} 
    \caption{Performance of the proposed ML-augmented online algorithm under a nonlinear aging
function, for transition probabilities tr$_p=0.2$ and tr$_q=0.8$.}
    \label{fig:nonlinear-ml-0.2}
\end{figure}

\begin{figure}[!t]
    \centering
            \subfloat[$\epsilon=1$]{
        \includegraphics[width=.16\textwidth]{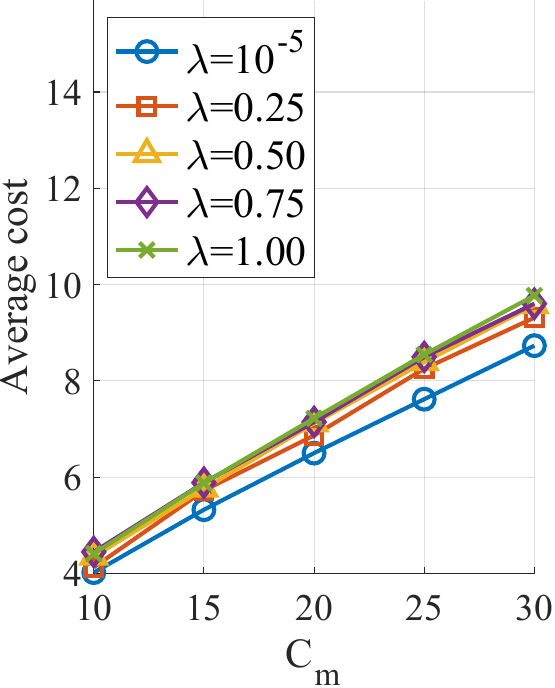}} 
            \subfloat[$\epsilon=2$]{
        \includegraphics[width=.16\textwidth]{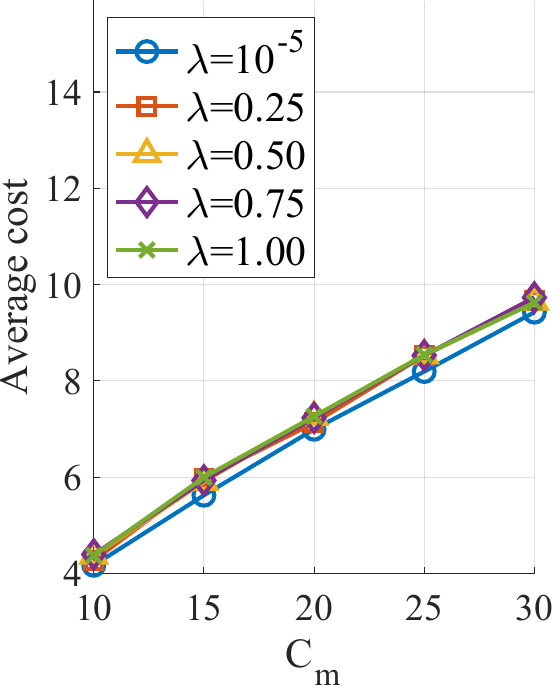}}
            \subfloat[$\epsilon=3$]{
        \includegraphics[width=.16\textwidth]{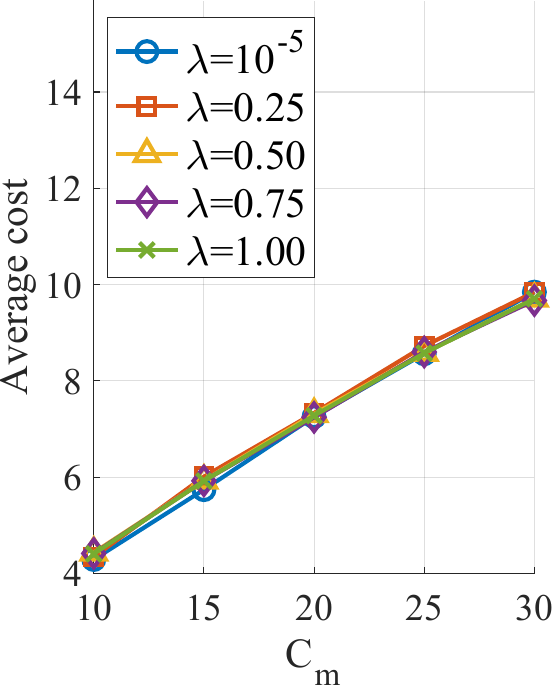}} 
        \hfill
            \subfloat[$\epsilon=4$]{
        \includegraphics[width=.16\textwidth]{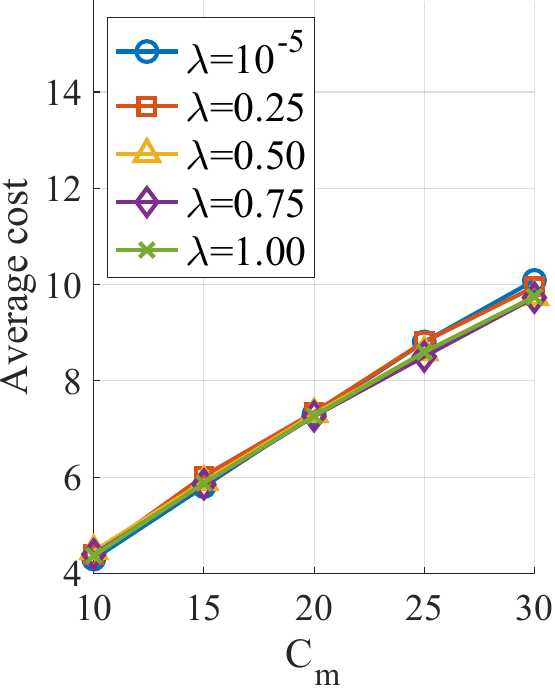}}
\subfloat[$\epsilon=5$]{
        \includegraphics[width=.16\textwidth]{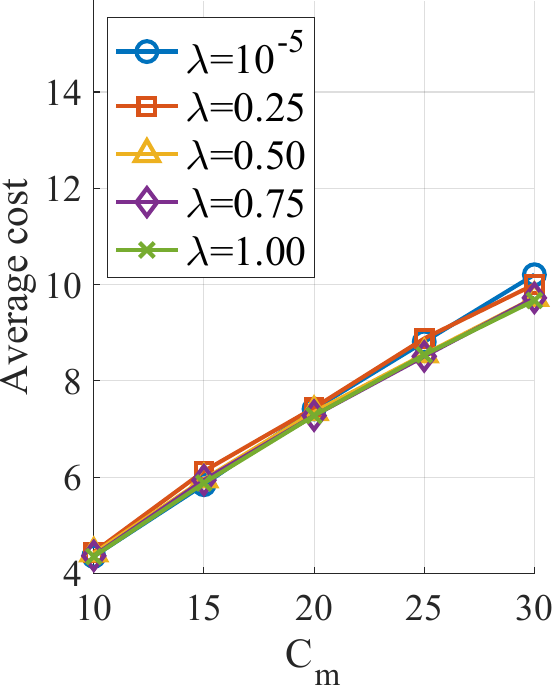}}  
                    \subfloat[$\epsilon=6$]{
        \includegraphics[width=.16\textwidth]{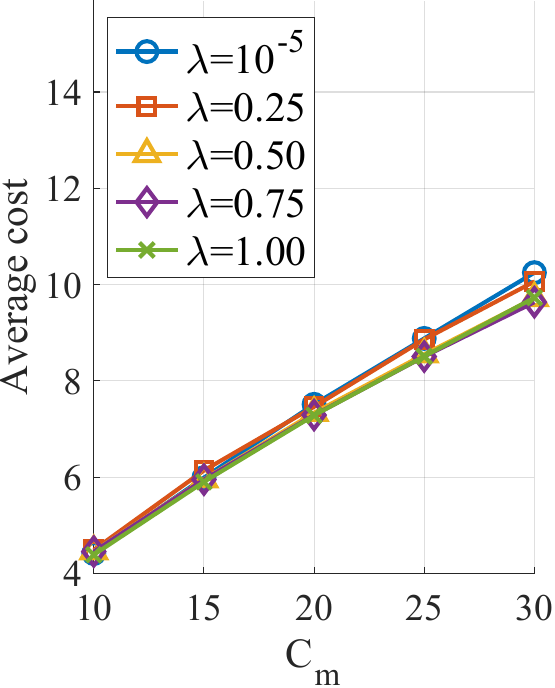}} 
  \caption{Performance of the proposed ML-augmented online algorithm under a nonlinear aging
function, for transition probabilities tr$_p=0.8$ and tr$_q=0.2$.}
    \label{fig:nonlinear-ml-0.8}
\end{figure}

\section{Conclusion}
This paper investigated a mobile information updating system subject to four
sources of uncertainty. We developed online scheduling algorithms that enable a
mobile device to cost-efficiently maintain fresh information at a central
entity. The proposed online algorithm without ML asymptotically achieves the optimal
competitive ratio, while the ML-augmented version also asymptotically attains the optimal
consistency–robustness trade-off. Moreover, when augmented with ML, we showed
that either blindly following or completely ignoring the ML advice minimizes
the competitive ratio. This work opens several promising research directions for network design under non-stationary uncertainty. Interesting extensions include dynamically adjusting the  threshold (for either blindly following ML or completely ignoring it) because the reliability of the ML advice is unknown in general, developing an optimal algorithm for both the adversarial and stochastic environments, exploring multi-device or networked update systems, and incorporating sampling decisions jointly with update scheduling.

\section{Acknowledgments}
We thank the authors of \cite{liu2025learning} for pointing out mistakes in our  earlier preliminary work \cite{tseng2019online}. 
This research was supported by the National Science and Technology Council, Taiwan, under Grant No.~110-2221-E-305-008-MY3 and  113-2628-E-305-001-MY3.

{
\small
	\bibliographystyle{IEEEtran}
	\bibliography{IEEEabrv,ref}
}

\clearpage

\setcounter{page}{1}
\twocolumn[
  \begin{center}
    {\Huge Supplementary Material
    }
  \end{center}
  \vspace{1cm}
]

\appendices

\section{Proof of Lemma~\ref{lemma:sum-x-bound}}\label{appendix:lemma:sum-x-bound}
We use the notation $\sum_{\tau=0}^{\infty}x(\tau) \big|_{\text{condition}}$ to represent the value of $\sum_{\tau=T_i}^{\infty}$ under the specific condition. Fix a slot~$t$.  We prove the claim by induction on $n$.  When $n=1$, by Eq.~(\ref{eq:cum-x-update}) we have 
\begin{align*}
	\left(\sum_{\tau=T_i}^{\infty}x(\tau) \bigg|_{n=1}\right) & = \left(1+\frac{1}{C_M}\right)\left(\sum_{\tau=T_i}^{\infty}x(\tau) \bigg|_{n=0}\right) + \frac{1}{\theta C_M} \\
	& \geq  \frac{1}{\theta C_M}= \frac{\left(1 + \frac{1}{C_M}\right)^1 - 1}{\theta}, 
\end{align*}
for all $i\in P(t)$.

Assume that the result holds for $n-1$, i.e., 
\begin{align*}
\left(\sum_{\tau = T_i}^{\infty} x(\tau)\bigg|_{n-1} \right)
\geq \frac{\left(1 + \frac{1}{C_M}\right)^{n-1} - 1}{\theta},	
\end{align*}
for all $i\in P(t)$. 

We show that the result holds for $n$: after the additional step, by Eq.~(\ref{eq:cum-x-update}) we have
\begin{align*}
\left(\sum_{\tau=T_i}^{\infty} x(\tau)\bigg|_{n}\right)& =\left(1+\frac{1}{C_M}\right)\left(\sum_{\tau=T_i}^{\infty} x(\tau)\bigg|_{n-1}\right)+\frac{1}{\theta C_M}	\\
& \geq \left(1+\frac{1}{C_M}\right)\left(\frac{\left(1 + \frac{1}{C_M}\right)^{n-1} - 1}{\theta}\right)\\
& \quad+\frac{1}{\theta C_M}	\\
&= \frac{\left(1 + \frac{1}{C_M}\right)^{n} - 1}{\theta},
\end{align*}
for all $i\in P(t)$. This completes the inductive step and proves the lemma.

%
%
%
%
%

\section{Proof of Theorem~\ref{theorem:competitive-ratio}} \label{appendix:theorem:competitive-ratio}
Fix an instance~$\mathcal{I}$. Suppose that an optimal offline scheduling algorithm 
(denoted by $\boldsymbol{\pi}^*$) clears the virtual queue in slots $t_1, \cdots, t_n$, for a total 
of $n$ clearing operations. Let $t_0=0$ and $t_{n+1}=T$. We divide the timeline into $n+1$ periods, 
where period~$k$ consists of slots $t_{k-1}+1$ through $t_k$.  

Let $J^*(k)$ denote the cost incurred by $\boldsymbol{\pi}^*$ in period~$k$. 
The total cost in Eq.~(\ref{eq:cost-2}) incurred by $\boldsymbol{\pi}^*$ is then $\sum_{k=1}^{n+1} J^*(k)$. 
We calculate $J^*(k)$ for two cases.
\begin{enumerate}
	\item For  $k \in \{1, \cdots, n\}$:  For each slot~$\tau$ in period~$k$, the number of virtual packets present in the virtual queue is $\sum_{i=1}^{\infty} \mathbf{1}_{\{t_{k-1}+1 \leq T_i \leq \tau\}}$,
which checks all virtual packets that arrived after the previous clearing in slot~$t_{k-1}$ until slot~$\tau$. Hence, the  holding cost of all the virtual packets that arrive in  period~$k$ is $\sum_{\tau = t_{k-1}+1}^{t_k-1} \sum_{i=1}^{\infty} \mathbf{1}_{\{t_{k-1}+1 \leq T_i \leq \tau\}}$.  We denote this quantity by $H^*(k)$. Adding the clearing cost $C(t_k)$ in slot~$t_k$, we have $J^*(k) = H^*(k) + C(t_k)$.  

\item For $k = n+1$:  
Here, the holding cost has the same form as above, but since no clearing occurs in this 
period, the cost is $J^*(n+1) = H^*(n+1)$.
\end{enumerate}

Next, let $J(k)$ denote the increment of the objective value in Eq.~(\ref{lp:objective}) by Alg.~\ref{lp-alg}, according to the activations of all virtual packets that arrive in period~$k$. This includes the increments of 
$z_i(t)$ in Line~\ref{lp-alg:z} and of $x(t)$ in Line~\ref{lp-alg:x},  for all virtual packets~$i$ with $t_{k-1}+1 \leq T_i \leq t_k$ and for all slots $t$. 
The  objective value computed by Alg.~\ref{lp-alg} is then 
$\sum_{k=1}^{n+1} J(k)$. 
We analyze $J(k)$ in two cases.

\begin{enumerate}
  \item For $k \in \{1,\cdots,n\}$:
 By the condition in Line~\ref{lp-alg:condition}, a virtual packet contributes to the 
objective value only when it activates. If a virtual packet $i$ arriving in period~$k$
activates in some slot $t$, then  
Line~\ref{lp-alg:z} increases  $z_i(t)$  by 
$1-\sum_{\tau=T_i}^t x(\tau)$, and  Line~\ref{lp-alg:x} increases  $x(t)$ by 
$(1/C_M)(\sum_{\tau=T_i}^t x(\tau)) + (1/(\theta C_M))$. 
Hence, one activation  increases the objective value by  
\begin{align}
   & C(t)\left( \frac{1}{C_M} \sum_{\tau=T_i}^t x(\tau) + \frac{1}{\theta C_M} \right) 
     + \left( 1 - \sum_{\tau=T_i}^t x(\tau) \right) \nonumber \\
   \leq \;& 1 + \frac{1}{\theta} \quad \text{(since $C(t) \leq C_M$).}    \label{eq:increment}
\end{align}

Note that $H^*(k)$ exactly counts the number of iterations of Line~\ref{lp-alg:for} 
from slot $t_{k-1}+1$ through slot $t_k-1$  for the virtual packets that arrive  in period~$k$. Thus, the virtual packets can activate at most $H^*(k)$ times before slot~$t_k$. 
Moreover, by Lemma~\ref{lemma:max-iteration}, the virtual packets arriving in period~$k$ 
can activate at most $\lceil C_m \rceil$ additional times from slot~$t_k$ onward. 
Hence, they can activate at most $H^*(k)+\lceil C_m \rceil$ times in total. 
Combining this with Eq.~\eqref{eq:increment}, we obtain
  \begin{align}
    J(k)
      &\le \left(1+\frac{1}{\theta}\right) \big(H^*(k)+\lceil C_m\rceil\big) \label{eq:appendix:theorem:competitive-ratio}\\
      &\le \left(1+\frac{1}{\theta}\right)\frac{\lceil C_m\rceil}{C_m}\,\big(H^*(k)+C(t_k)\big) \nonumber\\
      &\le \left(1+\frac{1}{C_m}\right) \left(1+\frac{1}{\theta}\right) J^*(k). \nonumber
  \end{align}

  \item For $k=n+1$:
  Since Alg.~\ref{lp-alg} terminates in slot $t_{n+1}$ with no  clearing, the virtual packets arriving in this terminal period
  can activate at most $H^*(n+1)$ times. Thus, we have 
  \begin{align*}
    J(n+1)
      &\le \left(1+\frac{1}{\theta}\right) H^*(n+1) \\
      &\le \left(1+\frac{1}{C_m}\right) \left(1+\frac{1}{\theta}\right) J^*(n+1).
  \end{align*}
\end{enumerate}

Combining both cases yields
\[
  \sum_{k=1}^{n+1} J(k)
  \;\le\; \left(1+\frac{1}{C_m}\right) \left(1+\frac{1}{\theta}\right)
  \sum_{k=1}^{n+1} J^*(k).
\]
Substituting $\mathrm{OPT}(\mathcal{I})=\sum_{k=1}^{n+1} J^*(k)$ and $\theta = (1+(1/C_M))^{C_m} - 1$ proves the first part of the theorem.

For the asymptotic result, as $C_u\to\infty$ we have $(1+1/C_m)\to 1$ and $(1+(1/C_M))^{C_m} = (1+(1/C_M))^{C_M/R}\to e^{1/R}$,
so the ratio approaches $e^{1/R}/(e^{1/R}-1)$.

\section{Proof of Lemma~\ref{lemma:max-num-packet}}\label{appendix:lemma:max-num-packet}
Fix a slot~$t$. First, if the total number of virtual packets that have arrived by slot~$t$ is less than $2\left\lceil \sqrt{\Delta A_M  C_m} \right\rceil$, then the result is immediate. Otherwise, suppose that the number of virtual packet arrivals by slot~$t$ is at least $2\left\lceil \sqrt{\Delta A_M  C_m} \right\rceil$. 

Let $i' = \arg\max_i \{ T_i \leq t \}$ denote the most recently arrived virtual packet  by slot~$t$. Since at most $\Delta A_M$ virtual packets can arrive in a single slot, the arrival of $\left\lceil \sqrt{\Delta A_M C_m} \right\rceil$ virtual packets requires at least $\lceil \sqrt{C_m/\Delta A_M} \rceil$ slots. Therefore, virtual packet  $i' - \left\lceil \sqrt{\Delta A_M C_m} \right\rceil + 1$ must have arrived by slot $t - \lceil \sqrt{C_m/\Delta A_M} \rceil + 1$. This implies that the set 
$P(t - \lceil \sqrt{C_m/\Delta A_M} \rceil + 1)$ contains the subset $\{ i \in \mathbb{N} : 
   i' - 2\left\lceil \sqrt{\Delta A_M C_m} \right\rceil + 1 
   \leq i \leq 
   i' - \left\lceil \sqrt{\Delta A_M C_m} \right\rceil
\}$,
which consists of $\left\lceil \sqrt{\Delta A_M C_m} \right\rceil$ virtual packets.  

From slot $t - \lceil \sqrt{C_m/\Delta A_M} \rceil + 1$ through slot~$t$ 
(a total of $\lceil \sqrt{C_m/\Delta A_M} \rceil$ slots), 
these $\left\lceil \sqrt{\Delta A_M C_m} \right\rceil$ virtual packets in 
$P(t - \lceil \sqrt{C_m/\Delta A_M} \rceil + 1)$  have at least $\lceil C_m \rceil$ opportunities 
to activate. By Lemma~\ref{lemma:max-iteration}, the virtual packet 
\mbox{$i' - 2\left\lceil \sqrt{\Delta A_M C_m} \right\rceil + 1$} in this set will no longer satisfy the 
condition in Line~\ref{lp-alg:condition} at the end of slot~$t$. 
Therefore, at most $2\left\lceil \sqrt{\Delta A_M C_m} \right\rceil$ virtual packets can satisfy the 
condition at the end of slot~$t$.

\section{Proof of Lemma~\ref{lemma:converse}} \label{appendix:lemma:converse}
Consider the initial age $A_0=1$, a fixed age increment sequence $\boldsymbol{\Delta A} = (0,  \cdots, 0)$, and a fixed update (or clearing) cost sequence $\mathbf{C} = (C_m, C_m R, C_m R, \cdots, C_m R)$. Only the operation duration $T$ is unknown to the device.    Since the age no longer increases after the first update, the device will not transmit again beyond the first update. Thus, the scheduling problem reduces to deciding when to send a single update (i.e., deciding when to clear the virtual queue once) under uncertainty in $T$.

To simplify the analysis, we rescale the objective function in Eq.~\eqref{eq:cost-2} by dividing it by $C_m$, resulting in 
\begin{align*}
\sum_{t=1}^T \left(\frac{C(t)}{C_m}  d(t) + \sum_{i: T_i \leq t} \frac{1}{C_m} z_i(t)	\right).
\end{align*}
This transformation does not alter the optimal solution. We redefine $C(t)/C_m$ as the new (normalized) clearing cost. Under the given instance, the transformation yields a clearing cost of $1$ in slot~$1$, and a clearing cost of $R$ in all subsequent slots. Moreover, the term $1/C_m$ can be interpreted as the cost of holding a virtual packet for a slot of duration $1/C_m$, under the convention that holding a virtual packet for one unit  time incurs a unit cost. As $C_m \to \infty$, the slot length approaches zero, and the problem transitions to a continuous-time scheduling model, similar to prior studies \cite{bamas2020primal}. In this continuous-time setting, we assume that time starts at $0$. The clearing cost becomes $C(0) = 1$ and $C(t) = R$ for all $t > 0$.  Moreover, we consider a time horizon $T \in (0, 1]$, which is unknown to  the device .

We now establish a lower bound on the competitive ratio of any randomized online scheduling algorithm. Let $p(t)$ denote the probability density function (PDF) describing the randomized clearing time. Since $C(0) = 1$ and $C(t)=R \geq  1$ for all $t\in (0, 1]$, the virtual server optimally clears the virtual queue before time~1. Thus, the PDF $p(t)$ of the randomized clearing time must satisfy the  condition $\int_0^1 p(t)\,dt = 1$.

For a given realization of $T \in (0,1]$, the expected cost incurred by the randomized algorithm is $\int_0^T (R + t) p(t) \,dt + \int_T^1 T  p(t) \,dt$,
where the first term accounts for the cost incurred when the virtual server decides to clear by time~$T$, and the second term accounts for the cost incurred when the virtual server decides to  clear after time~$T$. Moreover, for this instance, since $T \leq 1$, an  optimal offline strategy is to idle for all time, incurring the minimum total cost of $T$. Let $c$ be the competitive ratio of the randomized algorithm. Then, we  have $\int_0^T (R + t) p(t) \,dt + \int_T^1 T  p(t) \,dt \leq c T$ for all  $T \in  (0, 1]$. Thus, we derive the following optimization problem to find the smallest achievable competitive ratio $c$:
\begin{subequations} \label{lower-bound}
\begin{align}
\min_{c, \,p(\cdot) \geq 0} \, & c \label{lower-bound:objective} \\
\text{s.t.} \, &
\int_0^T (R + t) p(t) \,dt + \int_T^1 T  p(t) \,dt \leq c T,\nonumber \\
& \hspace{3.5cm} \text{for all $T \in (0,1]$}; \label{lower-bound:const1} \\
& \int_0^1 p(t) \,dt = 1. \label{lower-bound:const2}
\end{align}
\end{subequations}

We propose a candidate solution of the form $p(t) = K e^{t/R}$ for some constant $K$. Substituting this into the  constraint in Eq.~\eqref{lower-bound:const2}, we can obtain $K = 1/(R(e^{1/R} - 1))$, which yields $p(t) = 1/(R(e^{1/R} - 1)) \cdot e^{t/R}$.
Substituting this density into the left-hand side of constraint~\eqref{lower-bound:const1}, we obtain 
\begin{align*}
\int_0^T (R + t) p(t) \,dt + \int_T^1 T p(t) \,dt = \frac{e^{1/R}}{e^{1/R} - 1}  T.	
\end{align*}
Comparing with the right-hand side $cT$, we conclude that 
\begin{align*}
c \geq \frac{e^{1/R}}{e^{1/R} - 1},	
\end{align*}
which establishes the desired lower bound on the competitive ratio.

\section{Proof of Lemma~\ref{lemma:sum-x-bound-ml}} \label{appendix:lemma:sum-x-bound-ml}
%

Let $S$ denote the value of $\sum_{\tau = T_i}^{\infty} x(\tau)$ at the beginning of some iteration of Line~\ref{lp-alg-ml:for1} in a  slot. Suppose that a slow step is followed by a fast step, and let $S_{s \rightarrow f}$ represent the resulting value of $\sum_{\tau = T_i}^{\infty} x(\tau)$ after these two steps. By Eq.~(\ref{eq:cum-x-update}), we have
\begin{align*}
S_{s \rightarrow f}
&= \underbrace{\left(1 + \frac{1}{C_M}\right)\underbrace{\left(\left(1 + \frac{1}{C_M}\right)  S + \frac{1}{\theta_s C_M} \right)}_{\sum_{\tau = T_i}^{\infty} x(\tau)\big|_{\text{after the slow step}}} + \frac{1}{\theta_f C_M}}_{\sum_{\tau = T_i}^{\infty} x(\tau)\big|_{\text{after the fast step}}} \\
&= S\left(1 + \frac{1}{C_M}\right)^2 + \frac{1}{\theta_s C_M} + \frac{1}{\theta_s C_M^2} + \frac{1}{\theta_f C_M}.
\end{align*}
Similarly, let $S_{f \rightarrow s}$ denote the resulting value of $\sum_{\tau = T_i}^{\infty} x(\tau)$ after applying a fast step followed by a slow step.  We have
\begin{align*}
S_{f \rightarrow s}
= S\left(1 + \frac{1}{C_M}\right)^2 + \frac{1}{\theta_f C_M} + \frac{1}{\theta_f C_M^2} + \frac{1}{\theta_s C_M}.
\end{align*}
Since $\theta_s \geq \theta_f$, we have
 $S_{s \rightarrow f} \leq S_{f \rightarrow s}$. Therefore, swapping a fast step and a subsequent slow step can reduce the value of $\sum_{\tau = T_i}^{\infty} x(\tau)$. Hence, to prove the desired lower bound, we can assume that all $N_s$ slow steps occur first, followed by all $N_f$ fast steps.

Next, we  prove the bound by induction on $N_f$. When \mbox{$N_f = 0$}, the result follows from Lemma~\ref{lemma:sum-x-bound}, which gives 
\begin{align*}
\left(\sum_{\tau = T_i}^{\infty} x(\tau)\bigg|_{N_f=0}\right) \geq \frac{(1 + \frac{1}{C_M})^{N_s} - 1}{\theta_s},	
\end{align*}
for all $i\in P(t)$. Assume that the result holds for  $N_f-1$, i.e.,
\begin{align*}
\left(\sum_{\tau = T_i}^{\infty} x(\tau)\bigg|_{N_f-1} \right) &\geq  \frac{(1 + \frac{1}{C_M})^{N_s} - 1}{\theta_s}  \left(1 + \frac{1}{C_M} \right)^{N_f-1} \\
& \quad + \frac{(1 + \frac{1}{C_M})^{N_f-1} - 1}{\theta_f},
\end{align*}
for all $i\in P(t)$.	 We show the result also holds for $N_f$: after an additional fast step, by Eq.~(\ref{eq:cum-x-update}) we have
\begin{align*}
&\quad\left(\sum_{\tau = T_i}^{\infty} x(\tau)\bigg|_{N_f}\right)\\
 & = \left(1+\frac{1}{C_M}\right) \left(\sum_{\tau = T_i}^{\infty} x(\tau)\bigg|_{N_f-1}\right)+ \frac{1}{\theta_f C_M}\\
 & \geq   \left(1 + \frac{1}{C_M} \right) \left[ \frac{(1 + \frac{1}{C_M})^{N_s} - 1}{\theta_s}  \left(1 + \frac{1}{C_M} \right)^{N_f-1}\right. \\
& \quad \left.+ \frac{(1 + \frac{1}{C_M})^{N_f-1} - 1}{\theta_f} \right] + \frac{1}{\theta_f C_M} \\
 &= \frac{(1 + \frac{1}{C_M})^{N_s} - 1}{\theta_s} \left(1 + \frac{1}{C_M} \right)^{N_f} + \frac{(1 + \frac{1}{C_M})^{N_f} - 1}{\theta_f},
\end{align*}
for all $i\in P(t)$. This completes the inductive step and proves the lemma.

\section{Proof of Lemma~\ref{lemma:max-iteration-ml}} \label{appendix:lemma:max-iteration-ml}
Fix a slot~$t$. We claim that if $N_s \lambda + N_f \ge C_m$, then 
$\sum_{\tau = T_i}^{\infty} x(\tau) \ge 1$ for all $i\in P(t)$, implying that the condition in 
Line~\ref{lp-alg-ml:condition} no longer holds.

To prove the claim, it suffices to consider the case where the $N_s$ slow steps are followed by the 
$N_f$ fast steps (as in Appendix~\ref{appendix:lemma:sum-x-bound-ml}). Applying 
Lemma~\ref{lemma:sum-x-bound-ml} under $N_s \lambda + N_f \ge C_m$ yields
\begin{align}
\sum_{\tau = T_i}^{\infty} x(\tau) 
& \ge 
\frac{\left(1 + \frac{1}{C_M}\right)^{(C_m - N_f)/\lambda} - 1}{\theta_s} 
\left(1 + \frac{1}{C_M} \right)^{N_f} \nonumber\\
& \quad + \frac{\left(1 + \frac{1}{C_M} \right)^{N_f} - 1}{\theta_f},
\label{eq:iterative-proof}
\end{align}
for all $i\in P(t)$. If $N_f = 0$, then Eq.~(\ref{eq:iterative-proof}) (after the inequality) equals $1$. Next, we show that Eq.~(\ref{eq:iterative-proof}) is nondecreasing in $N_f$. Differentiating it with respect to $N_f$ gives
\begin{align*}
&\ln\left(1 + \frac{1}{C_M}\right) \left(1 + \frac{1}{C_M}\right)^{N_f} \\
& \quad \times \left[
-\frac{\frac{1-\lambda}{\lambda} 
\left(1+\frac{1}{C_{M}}\right)^{(C_m - N_f)/\lambda} + 1}{\theta_{s}}
+\frac{1}{\theta_{f}}
\right].	
\end{align*}
To show the derivative is nonnegative, we examine the bracketed term:
\begin{align}
& -\frac{\frac{1 - \lambda}{\lambda} \left(1 + \frac{1}{C_M} \right)^{(C_m - N_f)/\lambda} + 1}{\theta_s} 
+ \frac{1}{\theta_f} \nonumber \\
\mathop{\ge}^{(a)}\ 
& -\frac{\frac{1 - \lambda}{\lambda} \left(1 + \frac{1}{C_M} \right)^{C_m/\lambda} + 1}{\theta_s} 
+ \frac{1}{\theta_f} \nonumber \\
=\ 
& -\frac{\frac{1 - \lambda}{\lambda} \left(1 + \frac{1}{C_M} \right)^{C_M/(R \lambda)} + 1}{
\left(1 + \frac{1}{C_M} \right)^{C_M/(R \lambda)} - 1}
+ \frac{1}{\left(1 + \frac{1}{C_M} \right)^{(C_M \lambda)/R} - 1},
\label{eq:check-sign}
\end{align}
where $(a)$ sets $N_f=0$ (since $(1 + (1/C_M))^{(C_m - N_f)/\lambda}$ decreases in $N_f$).
From \cite[Page~27]{bamas2020primal}, we can write $(1 + (1/C_M))^{C_M} = e^{x}$ for some $x \in (0,1)$.
Let $x' = x/R \in (0,1)$. Then, Eq.~\eqref{eq:check-sign} becomes
\[
-\frac{\frac{1 - \lambda}{\lambda} e^{x'/\lambda} + 1}{e^{x'/\lambda} - 1}
+ \frac{1}{e^{x' \lambda} - 1},
\]
which is known to be nonnegative \cite[Page~27]{bamas2020primal}. Hence,   Eq.~\eqref{eq:iterative-proof} is nondecreasing as its  derivative  is nonnegative, proving the claim.

\section{Proof of Theorem~\ref{theorem:robustness}}\label{appendix:theorem:robustness}
We follow Appendix~\ref{appendix:theorem:competitive-ratio}. 
Consider a period $k \in \{1, \cdots, n\}$. 
Here, a single slow or fast step activation can increase the objective value in Eq.~(\ref{lp:objective}) by at most $1 + (1/\theta_s)$ or $1 + (1/\theta_f)$, respectively.  
Since $1 + (1/\theta_s) \leq 1 + (1/\theta_f)$, the increment of the objective value 
 due to the activations of all virtual packets arriving in period~$k$, 
from slot $t_{k-1}+1$ up to slot $t_k-1$, is bounded above by
\begin{align}
   H^*(k)\left(1 + \frac{1}{\theta_f}\right). 
   \label{eq:theorem:robustness:bound1}
\end{align}

Moreover, let $N_s(k)$ and $N_f(k)$ denote the numbers of slow and fast steps, respectively, 
performed by the virtual packets arriving in period~$k$, from slot $t_k$ onward. 
Then, the increment of the objective value  due to these $N_s(k)+N_f(k)$ activations is
\begin{align}
   &\quad N_s(k) \left(1 + \frac{1}{\theta_s}\right) + N_f(k) \left(1 + \frac{1}{\theta_f}\right) \nonumber\\
   &\mathop{\leq}^{(a)} \frac{C_m + 1 - N_f(k)}{\lambda} \left(1 + \frac{1}{\theta_s}\right) 
   + N_f(k) \left(1 + \frac{1}{\theta_f}\right),
   \label{eq:theorem:robustness:bound2}
\end{align}
where $(a)$ is because $N_s(k) \lambda + N_f(k) \leq C_m + 1$ from Lemma~\ref{lemma:max-iteration-ml}. 
Differentiating Eq.~(\ref{eq:theorem:robustness:bound2}) (after the inequality) with respect to $N_f(k)$ gives
\begin{align*} &\quad -\frac{1}{\lambda} \left(1+\frac{1}{\theta_s}\right)+ \left(1+\frac{1}{\theta_f}\right)\\ &=-\frac{1}{\lambda} \left(1+\frac{1}{\left(1+\frac{1}{C_M}\right)^{C_m/\lambda}-1}\right) \\
& \quad +\left(1+\frac{1}{\left(1+\frac{1}{C_M}\right)^{C_m \lambda}-1}\right). 
\end{align*}
Following Appendix~\ref{appendix:lemma:max-iteration-ml}, this derivative can be rewritten as
\begin{align*}
   &\quad -\frac{1}{\lambda}\left(1 + \frac{1}{e^{x'/\lambda}-1}\right)  
   + \left(1 + \frac{1}{e^{x'\lambda}-1}\right) \\
   &= -\frac{1}{\lambda} \left(\frac{1}{1 - e^{-x'/\lambda}}\right) 
   + \frac{1}{1 - e^{-x'\lambda}},
\end{align*}
for some $x' \in (0,1)$. This expression is known to be nonnegative 
\cite[Page~27]{bamas2020primal}. Hence, Eq.~(\ref{eq:theorem:robustness:bound2}) is nondecreasing 
in $N_f(k)$, and its maximum is following  attained at $N_f(k)=C_m+1$,
\begin{align}
   \big(C_m+1\big) \left(1 + \frac{1}{\theta_f}\right).
   \label{eq:theorem:robustness:bound3}
\end{align}

Combining Eqs.~\eqref{eq:theorem:robustness:bound1} and \eqref{eq:theorem:robustness:bound3}, 
we obtain
\begin{align*}
   J(k) &\leq \left(1 + \frac{1}{\theta_f}\right) \left(H^*(k) + C_m + 1\right) \\
        &\leq \left(1 + \frac{1}{\theta_f}\right)\frac{C_m+1}{C_m}\left(H^*(k) + C(t_k)\right) \\
        &\leq \left(1 + \frac{1}{C_m}\right) \left(1 + \frac{1}{\theta_f}\right) J^*(k).
\end{align*}
Finally, following the line of Appendix~\ref{appendix:theorem:competitive-ratio} 
and substituting the definition of $\theta_f$ completes the proof.

\section{Proof of Theorem~\ref{theorem:consistency}}\label{appendix:theorem:consistency}
Fix an instance $\mathcal{I}$ and  ML advice~$\boldsymbol{\mathcal{M}}$. 
We follow Appendix~\ref{appendix:theorem:competitive-ratio}, with minor modifications. 
Redefine $t_k$ for \mbox{$k \in \{1, \cdots, n\}$} as the slot when $\boldsymbol{\mathcal{M}}$ clears the virtual queue 
for the $k$-th time. These redefined time points determine a new set of periods, replacing the periods defined in Appendix~\ref{appendix:theorem:competitive-ratio}.

Let $J_{\boldsymbol{\mathcal{M}}}(k)$ denote the cost incurred by $\boldsymbol{\mathcal{M}}$ in period~$k$. 
Then, the total cost in Eq.~(\ref{eq:cost-2}) incurred by $\boldsymbol{\mathcal{M}}$ is 
$\sum_{k=1}^{n+1} J_{\boldsymbol{\mathcal{M}}}(k)$. Let $H_{\boldsymbol{\mathcal{M}}}(k)$ be the holding cost incurred by $\boldsymbol{\mathcal{M}}$ for all virtual packets arriving in period~$k$.
Following Appendix~\ref{appendix:theorem:competitive-ratio}, we have 
$J_{\boldsymbol{\mathcal{M}}}(k)=H_{\boldsymbol{\mathcal{M}}}(k)+C(t_k)$ for all $k \in \{1, \cdots, n\}$, 
and $J_{\boldsymbol{\mathcal{M}}}(n+1)=H_{\boldsymbol{\mathcal{M}}}(n+1)$.

Let $J(k)$ be the increment of the objective value in Eq.~(\ref{lp:objective}) by Alg.~\ref{lp-alg-ml}, according to the slow and fast step 
activations of all virtual packets arriving in period~$k$. Consider a fixed $k\in \{1, \cdots, n\}$. 
The virtual packets arriving in period~$k$ activate only slow steps from slot~$t_{k-1}+1$ until slot~$t_k-1$ (before advising clearing). 
Each slow step activation increases the objective value by at most $1+(1/\theta_s)$, 
so the total increment of the objective value in this interval is at most $(1+(1/\theta_s))H_{\boldsymbol{\mathcal{M}}}(k)$. Moreover, after  advising clearing in slot~$t_k$, the same virtual packets activate only fast steps. 
Following the proof of Lemma~\ref{lemma:max-iteration}, there are at most $\lceil C_m \lambda \rceil$ such activations, 
each increasing the objective value by at most $1+(1/\theta_f)$. 
Thus, the total increment of the objective value after slot~$t_k$ is at most $(1+(1/\theta_f))\lceil C_m \lambda \rceil$. 

Hence, we have
\begin{align*}
J(k)  
&\leq \left(1+\frac{1}{\theta_s}\right)H_{\boldsymbol{\mathcal{M}}}(k) + \left(1+\frac{1}{\theta_f}\right)\lceil C_m \lambda \rceil\\
&= \left(1+\frac{1}{\theta_s}\right)H_{\boldsymbol{\mathcal{M}}}(k) + \left(1+\frac{1}{\theta_f}\right)\frac{\lceil C_m \lambda \rceil}{C_m} \,C_m\\
&\leq \max \left\{1+\frac{1}{\theta_s}, \frac{\lceil C_m \lambda \rceil}{C_m}\left(1+\frac{1}{\theta_f}\right)\right\}\big(H_{\boldsymbol{\mathcal{M}}}(k)+ C_m \big)\\
&\leq \max \left\{1+\frac{1}{\theta_s}, \frac{\lceil C_m \lambda \rceil}{C_m}\left(1+\frac{1}{\theta_f}\right)\right\}\big(H_{\boldsymbol{\mathcal{M}}}(k)+ C(t_k)\big)\\
&= \max \left\{1+\frac{1}{\theta_s}, \frac{\lceil C_m \lambda \rceil}{C_m}\left(1+\frac{1}{\theta_f}\right)\right\} J_{\boldsymbol{\mathcal{M}}}(k).
\end{align*}
Similarly, we also have
\[
J(n+1) \leq\max \left\{1+\frac{1}{\theta_s}, \frac{\lceil C_m \lambda \rceil}{C_m}\left(1+\frac{1}{\theta_f}\right)\right\} 
J_{\boldsymbol{\mathcal{M}}}(n+1).
\]

Combining the two cases yields
\[
\sum_{k=1}^{n+1} J(k)\leq\max \left\{1+\frac{1}{\theta_s}, \frac{\lceil C_m \lambda \rceil}{C_m}\left(1+\frac{1}{\theta_f}\right)\right\} 
\sum_{k=1}^{n+1} J_{\boldsymbol{\mathcal{M}}}(k).
\]
Substituting $J(\mathcal{I}, \boldsymbol{\mathcal{M}})=\sum_{k=1}^{n+1} J_{\boldsymbol{\mathcal{M}}}(k)$ 
and the definitions of $\theta_s$ and $\theta_f$ proves the first part of the theorem.

Finally,  following the derivation in Appendix~\ref{appendix:theorem:robustness}, 
we obtain the asymptotic behavior of the bound as $C_u \to \infty$.

\section{Proof of Lemma~\ref{lemma:converse-ml}} \label{appendix:lemma:converse-ml}
We use the same continuous-time instance as in Appendix~\ref{appendix:lemma:converse}. 
Consider a $\big(\lambda e^{ \lambda/R}\big)/(e^{ \lambda/R}-1)$-consistent scheduling algorithm that 
chooses a random update (or equivalently clearing) time $t\in[0,1]$ with PDF $p(t)$ (so $\int_0^1 p(t)\,dt=1$).  Let $c$ denote the robustness factor.
Following Appendix~\ref{appendix:lemma:converse}, we have 
\begin{align}
\int_0^T (R+t)p(t)\,dt + \int_T^1 T p(t)\,dt \le cT,
\label{eq:robust-constraint}	
\end{align}
 for all $T \in (0,1]$.

Setting $T=1$ and assuming perfect ML advice (updating at time $0$), the ML advice incurs  a cost of $1$, 
while the online algorithm incurs a cost of $\int_0^1 (R+t)\,p(t)\,dt$. 
By $\big(\lambda e^{ \lambda/R}\big)/(e^{ \lambda/R}-1)$-consistency, we have 
\begin{equation}
\int_0^1 (R+t)\,p(t)\,dt \;\le\; \frac{\lambda e^{\lambda/R}}{e^{\lambda/R}-1}\cdot 1. 
\label{eq:consistency-constraint}
\end{equation}

We now lower bound the optimal robustness $c$ subject to 
Eqs.~\eqref{eq:robust-constraint} and~\eqref{eq:consistency-constraint}, and $\int_0^1 p(t)\,dt=1$:
\begin{subequations}\label{lower-bound2}
\begin{align}
\min_{c,\, p(\cdot)} \, & c \\
\text{s.t. }\,
& \int_0^T (R+t)\,p(t)\,dt + \int_T^1 Tp(t)\,dt \le cT, \nonumber\\
& \hspace{3.5cm}\text{for all $T \in (0,1]$}; \label{lower-bound2:const1}\\
& \int_0^1 (R+t)\,p(t)\,dt \le \frac{\lambda e^{\lambda/R}}{e^{\lambda/R}-1}; 
\label{lower-bound2:const2}\\
& \int_0^1 p(t)\,dt = 1. \label{lower-bound2:const3}
\end{align}
\end{subequations}

By weak duality, any feasible solution to the dual of Eq.~\eqref{lower-bound2} yields a lower bound on $c$. 
Let $\eta(T)$, $\mu$, and $\nu$ be the dual variables for 
Eqs.~\eqref{lower-bound2:const1}, \eqref{lower-bound2:const2}, and \eqref{lower-bound2:const3}, respectively. Then, the dual program can be written as follows:
\begin{subequations}\label{lower-bound2-dual}
\begin{align}
\max_{\eta(\cdot),\mu,\nu}\quad 
& \nu - \mu \,\frac{\lambda e^{\lambda/R}}{e^{\lambda/R}-1} 
\label{lower-bound2-dual:objective}\\
\text{s.t.}\quad 
& \int_0^1 T\eta(T)\,dT = 1; \label{lower-bound2-dual:const1}\\
& \nu - (R+t)\mu  \le \int_0^t T\eta(T)\,dT \nonumber\\
&\hspace{2.8cm}+ (R+t)\int_t^1 \eta(T)\,dT,\nonumber\\
&\hspace{3cm} \text{for all $t\in[0,1]$}.
\label{lower-bound2-dual:const2}
\end{align}
\end{subequations}

Next, we propose a feasible solution to the optimization problem in Eq.~(\ref{lower-bound2-dual}). We propose $\eta(T) = K e^{-T/R} \mathbf{1}_{\{T \leq \lambda\}}$ for some constant $K$ to be determined, where $\mathbf{1}_{\{\cdot\}}$ is the indicator function. Substituting this form into Eq.~(\ref{lower-bound2-dual:const1})  yields $K = 1/(R^{2} - R^{2} e^{-\lambda/R} - R \lambda e^{-\lambda/R})$. 

We further propose $\nu = aK$ and $\mu = bK e^{-\lambda/R}$ for some constants $a$ and $b$ to be determined. Substituting these into the objective in Eq.~\eqref{lower-bound2-dual:objective} gives
\begin{align} 
&\nu - \mu \frac{\lambda e^{\lambda/R}}{e^{\lambda/R} - 1} \nonumber\\ =\,& K\left(a - b e^{-\lambda/R}  \frac{\lambda e^{\lambda/R}}{e^{\lambda/R} - 1} \right) \nonumber\\ =\,& \frac{1}{R^{2} - R^{2} e^{-\lambda/R} - R \lambda e^{-\lambda/R}} \left(a - a e^{-\lambda/R} - b \lambda e^{-\lambda/R}\right)\nonumber\\
&  \times \left(\frac{e^{\lambda/R}}{e^{\lambda/R} - 1}\right). 
\label{eq:lower-bound-cr} 
\end{align}
To make the objective equal to $(e^{\lambda/R})/(e^{\lambda/R} - 1)$ (the claimed robustness bound), we choose $a = R^{2}$ and $b = R$.

Next, we verify that the chosen values satisfy Eq.~\eqref{lower-bound2-dual:const2}.  
We consider two cases:
\begin{enumerate}
	\item For $t \leq \lambda$: The left-hand side of Eq.~\eqref{lower-bound2-dual:const2} (before the inequality) is
	\[
	\nu - (R+t)\mu 
	= K\big(R^2 - R^2 e^{-\lambda/R} - tR e^{-\lambda/R}\big).
	\]
	The right-hand side is
	\begin{align*}
	& \quad \int_{0}^{t} T \eta(T) \,dT + (R+t) \int_{t}^{1} \eta(T)\,dT \\
	&= K\left(\int_{0}^{t} T e^{-T/R} \, dT + (R+t) \int_{t}^{\lambda} e^{-T/R}  \,dT\right) \\
	&= K\big(R^{2} - R^{2} e^{-\lambda/R} - t R e^{-\lambda/R}\big),		
	\end{align*}
	which matches the left-hand side.
	\item For  $t > \lambda$: The left-hand side is  
	\begin{align*}
	\nu - (R+t)\mu 
	&\leq \nu - (R+\lambda)\mu \\
	&= K\big(R^{2} - R^{2} e^{-\lambda/R} - \lambda R e^{-\lambda/R}\big) \\
	&= 1,
	\end{align*}
	The right-hand side is
	\begin{align*}
	&\quad	\int_{0}^{t} T \eta(T)\,dT + (R+t) \int_{t}^{1} \eta(T)\,dT\\
	&= \int_{0}^{\lambda} T \eta(T)\,dT = 1,	
	\end{align*}
	so the inequality holds.
\end{enumerate}
Therefore, Eq.~\eqref{lower-bound2-dual:const2} is satisfied in both cases.  
By the weak duality theorem, the minimum possible robustness $c$ is at least 
\[
\frac{e^{\lambda/R}}{e^{\lambda/R}-1},
\]
as stated in Eq.~\eqref{eq:lower-bound-cr}.

\section{Proof of Lemma~\ref{lemma:x-upper-bound}} \label{appendix:lemma:x-upper-bound}

Fix a slot~$t$. We prove the result by induction on $n$. 
When $n=1$, by Eq.~(\ref{eq:cum-x-update}) we have
\begin{align*}
&\quad \left(\sum_{\tau=T_i}^{\infty} x(\tau) \Big|_{n=1}\right) 
- \left(\sum_{\tau=T_i}^{\infty} x(\tau) \Big|_{n=0}\right)\\
&= \frac{1}{C_M}\left(\sum_{\tau=T_i}^{\infty} x(\tau) \Big|_{n=0}\right)
+ \frac{1}{\theta C_M} \\
&\le \frac{1}{C_M}\cdot 1 + \frac{1}{\theta C_M} \\
&=\left(1+\frac{1}{\theta}\right)\!\left[\left(1+\frac{1}{C_M}\right)^1-1\right],
\end{align*}
for all $i \in P(t)$.

Assume the result holds for $n-1$, i.e.,
\begin{align*}
& \quad \left(\sum_{\tau=T_i}^{\infty} x(\tau) \Big|_{n-1}\right) 
- \left(\sum_{\tau=T_i}^{\infty} x(\tau) \Big|_{0}\right)\\
&\le \left(1+\frac{1}{\theta}\right)\!\left[\left(1+\frac{1}{C_M}\right)^{n-1}-1\right],
\end{align*}
for all $i \in P(t)$.

We show that the result holds for $n$:
\begin{align}
&\quad \left(\sum_{\tau=T_i}^{\infty} x(\tau) \Big|_{n}\right) 
- \left(\sum_{\tau=T_i}^{\infty} x(\tau) \Big|_{0}\right) \nonumber\\
&= \left(\sum_{\tau=T_i}^{\infty} x(\tau) \Big|_{n}\right) 
- \left(\sum_{\tau=T_i}^{\infty} x(\tau) \Big|_{n-1}\right)\nonumber\\
&\quad+ \left(\sum_{\tau=T_i}^{\infty} x(\tau) \Big|_{n-1}\right) 
- \left(\sum_{\tau=T_i}^{\infty} x(\tau) \Big|_{0}\right) \nonumber\\
&\le \frac{1}{C_M}\left(\sum_{\tau=T_i}^{\infty} x(\tau) \Big|_{n-1}\right)
+ \frac{1}{\theta C_M}\nonumber\\
&\quad + \left(1+\frac{1}{\theta}\right)\!\left[\left(1+\frac{1}{C_M}\right)^{n-1}-1\right],
\label{eq:lemma:x-upper-bound-1}
\end{align}
where the inequality uses Eq.~(\ref{eq:cum-x-update}) and the inductive hypothesis; moreover, the term $\sum_{\tau=T_i}^{\infty} x(\tau) \big|_{n-1}$ can be further calculated as follows: 
\begin{align*}
&\quad \left(\sum_{\tau=T_i}^{\infty} x(\tau) \Big|_{n-1}\right)\\
&= \left(\sum_{\tau=T_i}^{\infty} x(\tau) \Big|_{0}\right)
+ \Bigg[\left(\sum_{\tau=T_i}^{\infty} x(\tau) \Big|_{n-1}\right)
- \left(\sum_{\tau=T_i}^{\infty} x(\tau) \Big|_{0}\right)\Bigg] \\
&\le 1 + \left(1+\frac{1}{\theta}\right)\!\left[\left(1+\frac{1}{C_M}\right)^{n-1}-1\right].
\end{align*}
Plugging this into Eq.~\eqref{eq:lemma:x-upper-bound-1} yields
\begin{align*}
&\quad \left(\sum_{\tau=T_i}^{\infty} x(\tau) \Big|_{n}\right) 
- \left(\sum_{\tau=T_i}^{\infty} x(\tau) \Big|_{0}\right)\\
&\le \left(1+\frac{1}{\theta}\right)\!\left[\left(1+\frac{1}{C_M}\right)^n-1\right],
\end{align*}
for all $i \in P(t)$. This completes the inductive step and proves the lemma.

\section{Proof of Theorem~\ref{theorem:competitive-ratio-non}}\label{appendix:theorem:competitive-ratio-non}
For any scheduling algorithm, a virtual packet that arrives during a virtual OFF slot must remain in the virtual queue until the next virtual ON slot. 
Hence, the holding cost accrued by the virtual packets that arrive in virtual OFF slots until the slot immediately before the next virtual ON slot is identical across all algorithms (including the offline optimum). We therefore remove this constant from the objective, which is equivalent to deferring any virtual packet arrival in a virtual OFF slot to the next virtual ON slot. Then, it  suffices to consider a fixed instance~$\mathcal{I}$ in which no virtual packet arrives in a virtual OFF slot.

We follow Appendix~\ref{appendix:theorem:competitive-ratio}. 
Fix a period $k\in\{1,\dots,n\}$ and a virtual ON slot $t$ in that period. We bound the increment of the objective value in Eq.~\eqref{lp-non:objective} in slot~$t$  by considering how $x(t)$ and the matching $z$-variables are adjusted by Alg.~\ref{lp-alg-non}. There are three mutually exclusive cases:
\begin{enumerate}
	\item If a virtual packet $i$  arrives before $\hat{t}$ and activates in iteration $t'$ of Line~\ref{lp-alg-non:all-update} in slot~$t$, then Line~\ref{lp-alg-non:x} increases $x(t)$ by
\begin{align*}
\frac{1}{C_M}\left(\sum_{\tau=T_i}^t x(\tau)\bigg|_{\text{before the activation}}\right) + \frac{1}{\theta C_M}.	
\end{align*}
However, the paired $z_i(t')$ was already set to be \mbox{$1-\sum_{\tau=T_i}^{t'}x(\tau)$} in a slot \mbox{$t' \leq t$}. Because $x(\tau)$ does not change (for all possible $\tau$) over the virtual OFF period until slot~$t$,  we have 
\begin{align*}
z_i(t')=1-\left(\sum_{\tau=T_i}^t x(\tau)\bigg|_{\text{start of slot~$t$}}\right).	
\end{align*}

Hence, the increment of the objective value  due to the adjustment of $x(t)$ from virtual packet~$i$ in iteration~$t'$ and the paired $z_i(t')$ is
\begin{align}
&C(t)\left[\frac{1}{C_M}\left(\sum_{\tau=T_i}^t x(\tau)\Big|_{\text{before the activation}}\right) + \frac{1}{\theta C_M}\right]\nonumber\\
&\quad+ 1-\left(\sum_{\tau=T_i}^t x(\tau)\bigg|_{\text{start of slot~$t$}}\right)\nonumber\\
&\le 1+\frac{1}{\theta} \nonumber\\
&+ \left(\sum_{\tau=T_i}^t x(\tau)\bigg|_{\text{before the activation}}\right) - \left(\sum_{\tau=T_i}^t x(\tau)\bigg|_{\text{start of slot~$t$}}\right).
\label{eq:non-inc-1}
\end{align}

Next, we analyze the total number of activations performed from the start of slot~$t$ until the considered activation. 
By Lemma~\ref{lemma:max-num-packet}, at most $2\lceil \sqrt{\Delta A_M C_m}\rceil$ virtual packets can satisfy the  condition in Line~\ref{lp-alg-non:for} at the end of the previous ON slot $\hat{t}-1$. Because  no virtual packets arrive during the virtual OFF period, also at most $2 \left\lceil \sqrt{\Delta A_M C_m} \right\rceil$ virtual packets  can satisfy the  condition at the beginning of slot~$t$. Moreover, since each such virtual packet can be iterated at most \(T_{\text{OFF}}\) times in Line~\ref{lp-alg-non:all-update}, the total number of activations performed from the start of slot~$t$ until the considered activation is bounded by $2 \left\lceil \sqrt{\Delta A_M C_m} \right\rceil T_{\text{OFF}}$.

Then, by Lemma~\ref{lemma:x-upper-bound}, we have
\begin{align*}
&\left(\sum_{\tau=T_i}^t x(\tau)\big|_{\text{before the activation}}\right) - \left(\sum_{\tau=T_i}^t x(\tau)\bigg|_{\text{start of slot~$t$}}\right)\\
& \le\left(1+\frac{1}{\theta}\right)\left[\left(1+\frac{1}{C_M}\right)^{2 \left\lceil \sqrt{\Delta A_M C_m} \right\rceil T_{\text{OFF}}}-1\right].	
\end{align*}

Substituting this in Eq.~\eqref{eq:non-inc-1} gives the  bound on the increment of the objective value:
\begin{align}
\left(1+\frac{1}{\theta}\right)\left(1+\frac{1}{C_M}\right)^{2\lceil \sqrt{\Delta A_M C_m}\rceil T_{\text{OFF}}}.
\label{eq:appendix:case1-final}
\end{align}

\item If a virtual packet $i$  arrives before $\hat{t}$, meets the condition in Line~\ref{lp-alg-non:for} in slot~$t' \leq t$, but does not activate in iteration~$t'$ of Line~\ref{lp-alg-non:all-update} in slot~$t$, then because the paired $z_i(t')$ was still set to $1 - \sum_{\tau=T_i}^{t'} x(\tau)$ in slot $t'$, it  contributes at most $1$ to the objective value in that iteration.

\item If a virtual packet $i$ arrives in slot~$t$ and activates in slot $t$, then by Appendix~\ref{appendix:theorem:competitive-ratio} the increment from $x(t)$ in Line~\ref{lp-alg-non:else:x} and its paired $z_i(t)$ in Line~\ref{lp-alg-non:z} is $1+(1/\theta)$,  which is also bounded above by Eq.~\eqref{eq:appendix:case1-final}.
\end{enumerate}

Next, we count the occurrences of the three cases since slot~$t_k$ for the virtual packets that arrive in the period. By Lemma~\ref{lemma:max-iteration}, the virtual packets that arrive in period $k$ can activate (in Cases~1 and~3 together) at most $\lceil C_m\rceil$ times since $t_k$. 
In addition, by Lemma~\ref{lemma:max-num-packet}, at most $2\lceil \sqrt{\Delta A_M C_m}\rceil$ virtual packets meet the condition in Line~\ref{lp-alg-non:for} at the end of slot $t_k$. Once such a virtual packet stops activating, because of at most $T_{\text{OFF}}$ iterations in Line~\ref{lp-alg-non:all-update}, the virtual packet can contribute at most $T_{\text{OFF}}$ Case~2 increments. Thus, there are at most \(2\lceil \sqrt{\Delta A_M C_m}\rceil T_{\text{OFF}}\) Case~2 increments since slot \(t_k\).

Then, we can  revise Eq.~(\ref{eq:appendix:theorem:competitive-ratio}) 
in Appendix~\ref{appendix:theorem:competitive-ratio} as
\begin{align*}
J(k) & \leq \underbrace{\left(1+\frac{1}{\theta}\right)
\left(1+\frac{1}{C_m}\right)^{2 \lceil \sqrt{\Delta A_M  C_m} \rceil T_{\text{OFF}}} H^*(k)}_{(a)}\\
&\quad + \underbrace{\left(1+\frac{1}{\theta}\right)
\left(1+\frac{1}{C_m}\right)^{2 \lceil \sqrt{\Delta A_M  C_m} \rceil T_{\text{OFF}}}\lceil C_m \rceil}_{(b)}\\
&\quad + \underbrace{1 \cdot \left(2 \lceil \sqrt{\Delta A_M  C_m} \rceil T_{\text{OFF}}\right)}_{(c)},
\end{align*}
where (a) corresponds to all cases before slot~$t_k$; (b)
corresponds to Cases~1 and~3 from slot~$t_k$ onward; (c) corresponds to Case~2 from slot~$t_k$ onward. These terms
 can be further simplified as
\begin{align*}
(a) + (b) &\le \left(1+\frac{1}{\theta}\right)
\left(1+\frac{1}{C_m}\right)^{1+2 \lceil \sqrt{\Delta A_M  C_m} \rceil T_{\text{OFF}}} J^*(k),
\end{align*}
and also
\begin{align*}
(c) &\le \frac{2 \lceil \sqrt{\Delta A_M  C_m} \rceil T_{\text{OFF}}}{C_m}\,(H^*(k)+C_m)\\
&\le \frac{2 \lceil \sqrt{\Delta A_M  C_m} \rceil T_{\text{OFF}}}{C_m} J^*(k).
\end{align*}

Finally, following Appendix~\ref{appendix:theorem:competitive-ratio} yields the desired result.

\section{Proof of Lemma~\ref{lemma:inf-t-c}}\label{appendix:lemma:inf-t-c}
First, we consider the initial age $A_0 = 1$ and a fixed update cost sequence
\mbox{$\mathbf{C} = (C_m, \cdots, C_m)$}. Consider an online algorithm that updates in slot~$t$ with probability $p(t)$. Depending on $p(1)$, the adversary constructs the operation duration~$T$, the age increment sequence $\boldsymbol{\Delta A}$, and the update opportunity sequence~$\mathbf{U}$ as follows:
\begin{enumerate}
    \item If $p(1) < 1$: Set $T = 2$, $\boldsymbol{\Delta A}=(0, C_m^2-2)$, and $\mathbf{U}=(1, 0)$. In this case, the online algorithm incurs expected total cost
    \begin{align*}
        &\quad p(1)\, C_m + (1 - p(1)) \bigl(A(1)+A(2)\bigr)\\
        &= p(1)\, C_m + (1 - p(1)) C_m^2,
    \end{align*}
    while the offline optimum updates in slot~1 and incurs total cost $C_m$.
    Hence, the competitive ratio is $p(1) + (1 - p(1)) C_m$, which diverges as $C_m \to \infty$.

    \item If $p(1) = 1$: Set $T = 1$, $\Delta A(1)=0$, and $U(1)=1$. In this case, the online algorithm incurs cost $C_m$, while the offline optimum chooses not to update and incurs total cost $1$. Hence, the competitive ratio is $C_m$, which again diverges as $C_m \to \infty$.
\end{enumerate}
In both cases, if $\Delta A_M$ can be arbitrarily large, the adversary can construct an instance that forces the competitive ratio of any online algorithm to diverge.

Second, we consider the initial age $A_0 = 1$, a fixed age increment sequence
$\boldsymbol{\Delta A} = (0, \cdots, 0)$, and a fixed update cost sequence
\mbox{$\mathbf{C} = (C_m, \cdots, C_m)$}. Depending on $p(1)$, the adversary constructs the operation duration~$T$ and the update opportunity sequence $\mathbf{U}$ as follows:
\begin{enumerate}
    \item If $p(1) < 1$: Set $T = C_m^2 + 1$ and $\mathbf{U}=(1, 0, 0, \cdots, 0)$.
    In this case, the online algorithm incurs expected total cost
    \begin{align*}
        &\quad p(1)\, C_m + (1 - p(1))(T - 1)\\
        &= p(1)\, C_m + (1 - p(1)) C_m^2,
    \end{align*}
    while the offline optimum updates in slot~1 and incurs total cost $C_m$.
    Hence, the competitive ratio is $p(1) + (1 - p(1)) C_m$, which diverges as $C_m \to \infty$.

    \item If $p(1) = 1$: Set $T = 1$ and $U(1)=1$. In this case, the online algorithm incurs cost $C_m$, while the offline optimum chooses not to update and incurs total cost $1$. Hence, the competitive ratio is $C_m$, which again diverges as $C_m \to \infty$.
\end{enumerate}
In both cases, if $T_{\text{OFF}}$ can be arbitrarily large, the adversary can also construct an instance that forces the competitive ratio of any online algorithm to diverge, completing the proof.

\end{document}